\documentclass[11pt]{article} 

\usepackage[top=1in,bottom=1in,left=1in,right=1in]{geometry} 
\RequirePackage{amsthm,amsmath,amsfonts,amssymb}
\RequirePackage[numbers]{natbib}
\RequirePackage[colorlinks,allcolors=blue,urlcolor=blue]{hyperref}
\RequirePackage{graphicx}

\usepackage{color, amsmath, amssymb, amsbsy, amsthm, graphicx, bbm, adjustbox, amsfonts, bm, dsfont, courier, subfig}
\usepackage[toc,page]{appendix}
\usepackage{booktabs}
\usepackage[flushleft]{threeparttable}
\usepackage{latexsym}
\usepackage[dvipsnames]{xcolor}
\usepackage{comment}
\usepackage{wrapfig}
\usepackage{graphicx}
\usepackage{graphics}
\usepackage{algorithm}
\usepackage{algorithmic}
\usepackage{setspace}
\usepackage{enumitem}
\usepackage{mathabx}


\usepackage{soul}
\usepackage{threeparttable,booktabs}
\usepackage{etoolbox}
\appto\TPTnoteSettings{\footnotesize}
\newtheoremstyle{theoremstyle}
{\topsep} 
{\topsep} 
{\itshape} 
{} 
{} 
{} 
{.5em} 
{\color{black}\ifthenelse{\equal{#3}{}}{{\bfseries #1 #2}}{{\bfseries #1 #2 (#3)}}}
\newtheoremstyle{theoremstylealt}
{\topsep} 
{\topsep} 
{\itshape} 
{} 
{} 
{} 
{.5em} 
{\color{black}\ifthenelse{\equal{#3}{}}{{\bfseries #1 #2$^\prime$}}{{\bfseries #1 #2$^\prime$ (#3)}}}
\newtheoremstyle{examplestyle}
{\topsep} 
{\topsep} 
{} 
{} 
{} 
{} 
{.5em} 
{\color{black}\ifthenelse{\equal{#3}{}}{{\bfseries #1 #2}}{{\bfseries #1 #2 (#3)}}}
\theoremstyle{theoremstyle}\newtheorem{thm}{Theorem}
\theoremstyle{theoremstylealt}
\theoremstyle{theoremstyle}     
\theoremstyle{theoremstyle}\newtheorem{lem}{Lemma}  
\theoremstyle{theoremstyle}\newtheorem{coro}{Corollary}        
\theoremstyle{theoremstyle}
\theoremstyle{theoremstyle}\newtheorem{assumption}{Assumption}
\theoremstyle{theoremstylealt}
\theoremstyle{theoremstyle}

\theoremstyle{theoremstyle}

\theoremstyle{theoremstyle}

\theoremstyle{examplestyle}\newtheorem{example}{Example}
\theoremstyle{examplestyle}\newtheorem{remark}{Remark}
\theoremstyle{examplestyle}

\usepackage{xr}


\renewcommand{\epsilon}{\varepsilon}


\newcommand{\br}{\mathbf{r}}

\def \hat{\widehat}

\newcommand{\bbeta}{\boldsymbol{\beta}}

\newcommand{\bgamma}{\boldsymbol{\gamma}}



\usepackage{titletoc}
\usepackage{enumitem}
\usepackage{subfig}

\usepackage{array}
\newcolumntype{H}{>{\setbox0=\hbox\bgroup}c<{\egroup}@{}}

\usepackage{minitoc}


\def \w{ \bm{w} }
\def \x{ \bm{x} }
\def \v{ \bm{v} }
\def \M{ \bm{M} }
\def \z{ \bm{z} }
\def \v{ \bm{v} }
\def \bl{ b_{\texttt{L}} }
\def \br{ b_{\texttt{R}} }
\def\check{\widecheck}

\def\bbeta{\bm{\beta}}
\def\bgamma{\bm{\gamma}}

\def\hat{\widehat}
\def\tilde{\widetilde}



\begin{document}

\doparttoc 
\faketableofcontents 

\title{Inference on the Best Policies with Many Covariates
}
\author{
Waverly Wei\thanks{Division of Biostatistics, University of California, Berkeley.} \thanks{Co-first authors in alphabetical order} \and
Yuqing Zhou\thanks{Department of Statistics, University of Michigan} \footnotemark[2] \and Zeyu Zheng \thanks{Department of Industrial Engineering and Operations Research, University of California, Berkeley}\and   Jingshen Wang \footnotemark[1]   \thanks{Correspondence: jingshenwang@berkeley.edu} 
}

\date{}

\maketitle

\begin{abstract}
	Understanding the impact of the most effective policies or treatments on a response variable of interest is desirable in many empirical works in economics, statistics and other disciplines. Due to the widespread winner's curse phenomenon, conventional statistical inference assuming that the top policies are chosen independent of the random sample may lead to overly optimistic evaluations of the best policies.  In recent years, given the increased availability of large datasets, such an issue can be further complicated when researchers include many covariates to estimate the policy or treatment effects in an attempt to control for potential confounders.  In this manuscript, to simultaneously address the above-mentioned issues, we propose a resampling-based procedure that not only lifts the winner's curse in evaluating the best policies observed in a random sample, but also is robust to the presence of many covariates. The proposed inference procedure yields accurate point estimates and valid frequentist confidence intervals that achieve the exact nominal level as the sample size goes to infinity for multiple best policy effect sizes. We illustrate the finite-sample performance of our approach through Monte Carlo experiments and two empirical studies, evaluating the most effective policies in charitable giving and the most beneficial group of workers in the National Supported Work program.
 	 \\ \bigskip 
 	 \\~\\
\noindent \textbf{JEL codes:} C12; C13\\
\noindent \textbf {Keywords:} {Winner's curse; High dimensional data; Linear regression; Order statistics.}
		\end{abstract}
		
		\vskip 2cm
		\begin{center}\bfseries
		\end{center}

\clearpage

\doublespacing

\section{Introduction}
 
\subsection{Motivation and our contribution} 

Many empirical work requires an understanding of the impact of the most effective policies or treatments on a relevant response variable of interest. For instance, in randomized (factorial) experiments with multiple treatments, researchers may be interested in the most effective policies (combinations). In online platforms, decision makers may be interested in the top five advertising strategies. In financial portfolio management, managers might want to learn about the best-performing strategies among many alternatives.  In practice, after different policy effect sizes are estimated from a random sample, researchers may naturally look into those policies with the largest effect sizes. Accurately measuring the performance of top policies allows policy makers to deliver better-informed decisions for forecasting the effects of future policy implementations.

Nevertheless, given the well-recognized ``winner's curse" phenomenon, there can be considerable uncertainties concerning if the top policies with large estimated effect sizes are indeed effective in the population (see Section \ref{Sec:intro-literature} for a literature review). 
In fact, due to the winner's curse phenomenon, literature documents that the estimated effect sizes of the best-performing policies without additional adjustments tend to be overly optimistic, rendering under-covered confidence intervals \citep*{lee2018winner, andrews2019inference}.  In this manuscript, we refer to the optimistic bias introduced by the winner's curse phenomenon as the winner's curse bias. To mitigate this bias issue, we focus on the problem of constructing accurate point estimates and valid confidence intervals for the true effect sizes of the (observed) best policies. By the best policies, we refer to a user-supplied number of policies that have the largest (estimated) effects among a set of candidate policies (see Section \ref{Sec:setup} for a 
concrete problem setup), as we would expect that in practice researchers might want to focus on a few top policies of interest.

Other than the winner's curse phenomenon discussed above, an additional consideration gains prominence in the evaluation of the most effective policies. Since policy (or intervention) variables are often not exogenous, researchers may adopt observational methods to estimate their effects. In recent years, given the increased availability of large datasets with rich covariate information, one commonly adopted approach in empirical works is to assume that the policy variables are exogenous after controlling for a sufficiently large set of factors or covariates. Such a consideration demandingly requires empirical researchers to estimate the policy effects in the presence of many covariates.

To simultaneously address the above-mentioned issues, in this article, we propose a procedure that not only is robust to the presence of many covariates, but also provides accurate point estimates and valid frequentist confidence intervals for multiple best policy effect sizes. By many covariates, we allow the number of covariates $q_n$ to diverge with the sample size $n$ as long as $\limsup_{n\rightarrow \infty} q_n/n <1$. Note that this does not rule out the cases where $q_n$ is fixed or $q_n = o(n)$. In other words, our inferential method remains valid when the dimension of the covariates $q_n$ is fixed or $q_n = o(n)$.
Our proposed confidence intervals are built upon resampling methods, and we demonstrate that they achieve exact nominal coverage as the sample size goes to infinity under fairly moderate assumptions. Our empirical evidence shows that conventional estimates ignoring the winner's curse issue are substantially upward biased, while our corrections reduce the winner's curse bias and increase coverage. As far as we know, valid statistical inferential tools on multiple best policies that lift the winner's curse while incorporating possibly many covariates have been lacking, and the contribution of our work is to bridge this gap and help policy makers deliver well-informed decisions in practice.

We illustrate our method with two empirical applications. In the first case study, we use the charitable giving data from \cite{karlan2007does} to evaluate the best pricing policies that motivate donors to give. Our results suggest that simple methods without adjusting for the winner's curse bias could be potentially overly optimistic in identifying the most effective polices. After accounting for the winner's curse bias, we do not find sufficient evidence to support that the second best pricing policy--asking the donor to give 25\% more than his/her highest historical donation--is effective, implying that asking for a more ``expensive" donation may not encourage donors to give. We nevertheless note that given our calibration only marginally reduces the effect size of the second best policy, the above conclusion might not warrant a different economic interpretation. 
In the second case study, we evaluate the effectiveness of the national supported work (NSW) program in different groups of workers. The NSW program is a job training program designed to prepare disadvantaged workers for employment, and it has been investigated in various studies \citep{dehejia2002propensity,lalonde1986evaluating}. We apply the proposed approach to evaluate the performance of the NSW program on the most-affected subgroups of workers observed in the dataset. Our study results potentially suggest that married black workers might benefit from the NSW program with an average increase of \$4,410 for their annual income.

\subsection{Connection to the existing literature}\label{Sec:intro-literature}

One fundamental trend that drives the motivation of the methodology developed in this manuscript is the increasing availability of massive datasets and the associated increasing dimensionality. Such a trend brings scientists opportunities to deliver better-informed policies but, at the same time, presents challenges in developing econometric and statistical tools; see \cite{fan2011sparse}, \cite{belloni2014inference}, \cite{fan2014challenges}, \cite{cattaneo2018inference} for example. A recent book \citep{fan2020statistical} provides a thorough discussion of analytical methods that aim to address such challenges. Specifically, the increasing data availability brings challenges and also opportunities to better understand various policies whose effects can be inferred from data. Along this line, our manuscript aims at providing understating for policies that are estimated and selected to be the most effective from a pool of policies.

The winner's curse phenomenon and its related issues have been widely recognized in economics, statistics, and data science at large. Seminal works by \cite{fan2016guarding, fan2018discoveries} point out that spurious discoveries can easily arise when target parameters are selected through data mining and statistical machine learning algorithms. Recent work by \cite{andrews2019inference} considers performing conditional and unconditional inference on observed best policy \textcolor{black}{and \cite{686054} extends the work to more general ranking problems, which is still different from our goal in conducting unconditional inference on multiple top policies. Moreover, while the conditional approaches in \cite{andrews2019inference} and \cite{686054} produce optimal confidence intervals for the observed policy effects, their point estimates and confidence intervals can be conservative when they are applied unconditionally.} 
\cite{efron2011tweedie} considers a method to handle the winner's curse bias with Tweedie's formula concerning the empirical Bayes theory. \cite{lee2018winner} consider a plug-in correction of the winner's curse bias and propose to construct confidence interval based on bootstrapping in the context of A/B testing, but the proposed method lacks theoretical justifications. 
In clinical trials for evaluating the largest observed treatment effect in multiple subpopulations, \cite{guo2020inference} propose a bootstrap-based confidence interval that achieves the exact nominal level as the sample size goes to infinity, though generalizing their method to make inference on several top policies might not be straightforward, especially in the presence of many covariates. 

Our manuscript builds upon the literature on linear regression models with many or high dimensional covariates; see \cite{huber1973robust}, \cite{mammen1993bootstrap}, \cite{long2000using}, \cite{anatolyev2012inference}, \cite{el2013robust}, \cite{cattaneo2018alternative}, \cite*{cattaneo2019two}, \cite{jochmans2020heteroscedasticity} and the reference therein. In particular, \cite{mammen1993bootstrap} has established the asymptotic normality results for any contrasts of the ordinary least squares (OLS) coefficient vector estimator, when the dimension of the covariates divided by the sample size vanishes asymptotically. More recently, \cite{cattaneo2018inference} have shown that a small subset of the OLS estimators for the regression coefficients are asymptotically normal without restricting the dimension of the covariates to be  a vanishing fraction of the sample size. Moreover, \cite{cattaneo2018inference} have proposed a robust covariance matrix estimator for the subset of the the OLS estimator under fairly general conditions. \cite{jochmans2020heteroscedasticity} has proposed an alternative covariance matrix estimator that can deal with designs with even large number of covariates under additional assumptions (Assumption \ref{assumption:variance_estimation} in the current manuscript). 

Making inference on the best-performing policies is related to the literature on constructing confidence intervals for extrema parameters with bootstrap; see \cite{andrews2000inconsistency}, \cite{fan2007many}, \cite{xie2009confidence},   \cite{chernozhukov2013gaussian},  \cite{claggett2014meta} and the reference therein. Given the asymptotic distributions of extrema parameter estimators are often not normal, bootstrap-based methods can face serious difficulties when used to replicate the distribution of extrema of parameter estimators \citep{mammen1993bootstrap, mammen2012does}. While subsampling could overcome this issue faced by the classical bootstrap, it can exhibit very poor finite-sample performance because of the noise introduced by the vanishing subsample size. Different from our goal in constructing confidence intervals that achieve the exact nominal level, \citet{hall2010bootstrap} and \cite{chernozhukov2013gaussian} propose to construct conservative bootstrap confidence intervals for extrema of parameters. In our current problem setup with many covariates, the problem becomes even more acute as \cite{el2018can} show through a mix of simulation and theoretical analyses that the bootstrap is fraught with problems in moderate high dimensions. In the context of meta-analyses, \cite{claggett2014meta} propose an approach to make inference on ordered fixed study-specific parameters when different parameters are estimated independently from multiple studies.

Our method also contributes to the rapidly growing literature on program evaluations; see \cite{farrell2015robust}, \cite{belloni2014inference}, \cite{kueck2017estimation}, \cite{athey2017state}, \cite{abadie2018econometric}, \cite{chernozhukov2018double},  \cite{fan2021optimal},  \cite{vazquez2021identification} among many others. Under our asymptotic regime where the number of covariates $q_n$ grows with the sample size $n$, the Neyman orthogonalization based approaches often need to work with models with sparse regression coefficients \citep{belloni2014inference}. Rather than imposing such a sparsity assumption, our approach estimates the policy effects with regression adjustments without requiring the regression coefficients to be sparse. 
Because our approach only requires a consistent covariance matrix estimation for different policy effect estimators, we expect that the proposed framework on evaluating the best policies can be generalized when different policy effects are estimated with other off-shelf methods and we relegate such extensions for future work.



\noindent \textit{Notation. } We work with triangular array data $\{ \omega_{i,n}: i=1, \ldots, n; n=1, 2,\ldots \}$ where for each $n$, $\{   \omega_{i,n}: i=1, \ldots, n \}$ is defined on the probability space $( \Omega, \mathcal{S}, P_n)$. All parameters that characterize the distribution of  $\{   \omega_{i,n}: i=1, \ldots, n \}$ are implicitly indexed by $P_n$ and thus by $n$. We write vectors and matrices in bold font, and use regular font for univariate variables and constants.

\section{Model setup and methodology}\label{Sec:method-overview}

\subsection{Problem setup and a revisit to the winner's curse phenomenon}\label{Sec:setup}
Suppose we have a random sample $\{ (y_{i,n}, \x_{i,n}', \w_{i,n}')' \}_{i=1}^n$, we pose the problem in the framework of a linear regression model under heteroscedasticity
\begin{align}\label{eq:linear-model}
y_{i,n} = \x_{i,n}'\bbeta +\w_{i,n}'\bgamma_n + u_{i,n}, \quad i=1, \ldots, n,
\end{align}
where $y_{i,n}$ is the outcome variable, $\x_{i,n}\in\mathbb{R}^{d}$ are the treatment or policy variables of interest, $\w_{i,n}\in \mathbb{R}^{q_n}$ contains the confounding factors, $ u_{i,n}$ is an unobserved error term, and the coefficient vector $\bbeta = (\beta_{1}, \ldots, \beta_{d})'$ contains the treatment effect of $\x_{i,n}$ on the outcome $y_{i,n}$. We allow the linear model \eqref{eq:linear-model} to hold approximately by allowing $\mathbb{E}[u_{i,n}| \{ \x_{i,n}\}_{i=1}^n, \{\w_{i,n}\}_{i=1}^n ]\neq 0$. We are also in a scenario where $\w_{i,n}$ is high-dimensional, in the sense that $q_n$ can be a vanishing fraction of the sample size $n$ as long as $\limsup_{n\rightarrow\infty} q_n/n <1$.
To simplify notations, we drop subscript $n$ in univariate random variables in the rest of the manuscript. That is, for example, we denote $\gamma_{j,n}$ as $\gamma_{j}$. 

We write the ordered values of $ \beta_{1}, \ldots, \beta_{d} $ as $ \beta_{(1)}\geq \ldots \geq \beta_{(d)}$. We adopt the ordinary least-squares (OLS) estimator $\hat{\bbeta}$ (see Remark \ref{remark:other-estimates} for other possible estimates) to estimate $\bbeta$ and write the order statistics of $\hat{\bbeta}$ as $ \hat{\beta}_{(1)}\geq \ldots \geq \hat{\beta}_{(d)}$. 
Because researchers in practice might hope to focus on a few top policies, given that $d_0$ is a user-supplied positive integer, our goal is to construct accurate point estimates and valid confidence intervals for two sets of quantities: 
\begin{enumerate}
    \item[(1)] the best policy effect sizes in the population: $\beta_{(1)}, \ldots,  \beta_{(d_0)}$, 
    \item[(2)] the observed best policy effect sizes: $\beta_{\hat{j}}$, where $  \hat{j} = \sum_{k=1}^{d} k\cdot \mathds{1}( \hat{\beta}_k = 	\hat{\beta}_{(j)} )$, for $j=1, \ldots, d_0$. 
\end{enumerate}
The first set of quantities characterizes the effects of the top $d_0$ policies in the population and are thus fixed parameters. The second set of quantities describes the true effect sizes of the best performing policies observed in the random sample, and these quantities are thus ``data-dependent parameters." Both sets of quantities can be of interest in different empirical applications \citep*{dawid1994selection, chernozhukov2013intersection, rai2018statistical}, and our proposed procedure can be used to deliver valid statistical inference on both quantities (Theorem \ref{Theorem:resampling-consistency} and Corollary \ref{corollary:observed-policy}). 

\begin{remark}[Ties in the estimated policy effects]
The second set of parameters is well defined if the observed policies do not have exact ties in the sense that $ \hat{\beta}_{(1)}> \ldots > \hat{\beta}_{(d)}$. When the policies effect estimators solve to the interior points of the feasible parameter space, it is likely that no exact ties appear in the random sample. On the other hand, there can exist scenarios where, for example, $d_0$ is set as 2 but there are multiple policy effect sizes that tie at rank 2. In this case, one may choose instead a data-dependent $\hat{d}_0 = \max \{k:\hat{\beta}_{(k)}> \hat{\beta}_{(2)} - C_1\cdot n^{-0.25}\}$. This new random $d_0$ will asymptotically be able to incorporate all the effect sizes that actually are equal to the true effect size associated with $\hat{\beta}_{(2)}$. In this way, the limiting value of $\hat{d}_0$ will not necessarily be 2, but can be a larger number than 2 to incorporate ``very close" effect sizes with the rank-2 effect size. We also provide some related discussions in Remark \ref{remark:data-dependent-choice-d}.
\end{remark}

\begin{remark}[Other possible estimators of $\bbeta$]\label{remark:other-estimates}
In the presence of many covariates when $q_n$ is potentially large ($\limsup_{n\rightarrow \infty}q_n/n\rightarrow 1$ in our asymptotic regime) without assuming the coefficient $\mathbf{\gamma}_n$ to be sparse, we adopt the OLS estimator to estimate $\bbeta$, because the OLS estimator has been thoroughly studied in the existing literature and enjoys favorable theoretical guarantees. 
In high dimensions when $q_n \gg n$, other estimators of $\bbeta$ that incorporate model selection procedures can be adopted as well. Our procedure can produce valid statistical inference as long as the covariance matrix of $\hat{\bbeta}$ can be consistently estimated. For example, under the sparsity assumption on $\bm{\gamma}_n$ documented in the literature \citep{fan2020statistical}, we may adopt the covariance matrix estimator from the de-sparsified Lasso procedure \citep{van2014asymptotically, zhang2014confidence}.  
\end{remark}

To fully realize the challenges on delivering valid statistical inference on these two sets of parameters in our current problem setup, we revisit the winner's curse phenomenon. When first discussed in common-value auctions, the winner's curse refers to the bidding behavior where bidders systematically overbid, resulting in an expected loss \citep{charness2009origin}. In our context of policy evaluations, the winner's curse refers to the issue that the observed best policies have the tendency to over-estimate the best policies in the population. We would thus often expect that neither $\mathbb{E}[ \hat{\beta}_{(j)} - \beta_{(j)} ] $ nor $\mathbb{E}[ \hat{\beta}_{(j)} - \beta_{\hat{j}} ]$ is close to zero, and the resulting confidence interval may fail to reach the nominal level. Such an issue becomes even more acute as we have many covariates $\w_{i,n}$ entering the inferential process. 

We next illustrate the winner's curse issue through Example \ref{Example:winner-curse} with a simple simulation study, where we observe substantial winner's curse bias and under-covered confidence intervals for the top polices. In particular, Figure \ref{Example:winner-curse}(b) demonstrates that coverage probabilities are worsened when a larger number of covariates are incorporated for estimating $\bbeta$. It is worth pointing out that when $d=3$, $\hat\beta_{(2)}$ is the median policy effect. Thus, the estimation bias is around 0, and the true standard deviation is much smaller than the estimated standard deviation, resulting in a confidence interval with close to 100\% coverage. When $d$ increases, the coverage probability gradually drops due to a larger estimation bias and inaccurately estimated standard deviation.

\begin{example}[A simulation study demonstrating the winner's curse phenomenon with many covariates]\label{Example:winner-curse}
 We generate 1000 Monte Carlo samples following the setup in Model \eqref{eq:linear-model}. We generate $\x_{i,n}\sim \mathcal{N}(0, \bm{\Sigma}) $ with $\Sigma_{jk} = 0.5^{|j-k|}$ for $j,k=1, \ldots, d$, $\w_{i,n} = \mathds{1}( \tilde{\w}_{i,n}\geq \Phi^{-1}(0.98) )$ with $\tilde{\w}_{i,n} \sim \mathcal{N}(0, \bm{I}_{q_n})$, where $\bm{I}_{q_n}$ is a $q_n$-dimensional identity matrix. We consider the case where no policy is effective (so that $\bbeta = 0$, $\beta_{\hat{1}} = \beta_{\hat{2}} = 0$) and $\gamma_{j} = 1/j$, for $j=1, \ldots, q_n$. We report the asymptotic bias of the conventional estimator (i.e., $\sqrt{n}\cdot \mathbb{E}[ \hat{\beta}_{(j)} - \beta_{(j)} ] $) as well as the coverage probability of confidence intervals constructed based on normal approximation with the Eicker-White \citep{eicker1963asymptotic, white1980heteroskedasticity} covariance matrix estimator defined in Eq \eqref{Eq:eicker-white}. 
\end{example}

\begin{figure}[!t]
	\centering  
	\subfloat[Winner's curse bias ($q_n=141$)]{
		\label{subfig:bias}
		\includegraphics[height=0.4\linewidth]{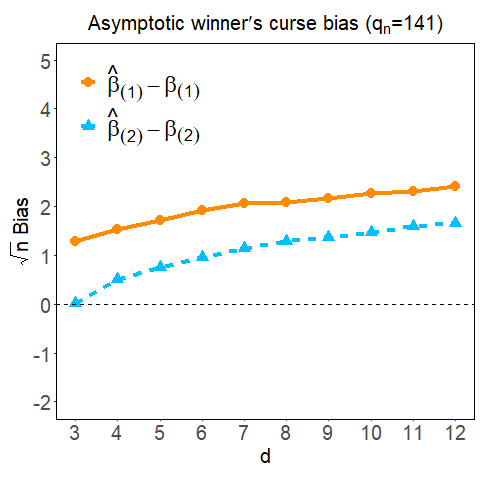}}
	\subfloat[Coverage probability]{
		\label{subfig:cov1}
		\includegraphics[height=0.4\linewidth]{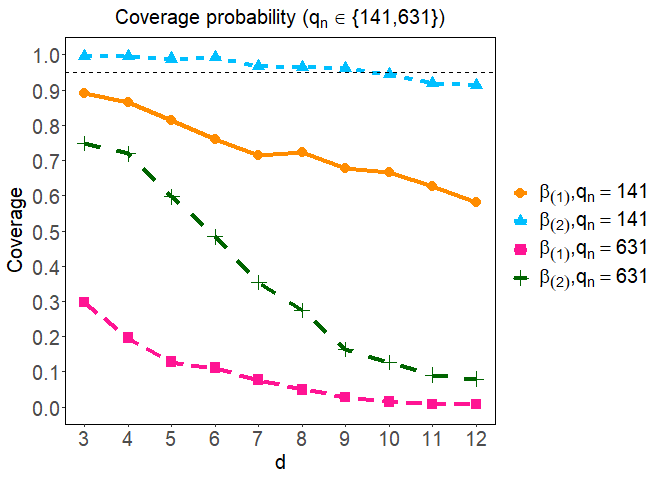}}
	\caption{Demonstration of the winner's curse phenomenon following the simulation setup in Example \ref{Example:winner-curse}. \textcolor{black}{The maximum Monte Carlo standard error for the asymptotic bias is 0.88.} Panel (a) captures the asymptotic winner's curse bias when $q_n = 141$; Panel (b) captures the coverage probability when $q_n \in \{ 141, 631 \}$ and the nominal level is 0.95.  }
	\label{fig:winner curse}
\end{figure}

\subsection{Methodology}\label{Sec:Method}

Our method starts with the ordinary least-squares (OLS) estimator of $\bbeta$, that is
\begin{align*}
\hat{\bbeta} = \Big( \sum_{i=1}^n \hat{\v}_{i,n}\hat{\v}_{i,n}' \Big)^{-1} \Big( \sum_{i=1}^n \hat{\v}_{i,n} y_{i,n} \Big), 
\end{align*}
where $\hat{\v}_{i,n} = \sum_{j=1}^n (\M_{n})_{i,j} \x_{j,n}$, and $ (\M_{n})_{i,j} \triangleq \mathds{1}( i=j ) -  \w'_{i,n}\Big( \sum_{k=1}^n \w_{k,n}\w_{k,n}' \Big)^{-1} \w_{j,n}$. 
As we focus on the case when $q_n$ can be a non-vanishing fraction of $n$, $n\rightarrow\infty$, we adopt the robust covariance matrix estimator proposed in \cite{jochmans2020heteroscedasticity}. We try to follow the author's notation as closely as possible: 
\begin{align*}
    \hat{\bm{\Omega}}^{\texttt{KJ}}_n \triangleq \hat{\bm{\Gamma}}_n^{-1}\hat{\Sigma}^{\texttt{KJ}}_n \hat{\bm{\Gamma}}_n^{-1},
\end{align*}
where 
\begin{align*}
\hat{\bm{\Gamma}}_n=\frac{1}{n}\sum_{i=1}^n\hat{\v}_{i,n}\hat{\v}_{i,n}',\quad \hat{\Sigma}^{\texttt{KJ}}_n \triangleq \frac{1}{n}\sum_{i=1}^n \hat{\v}_{i,n}\hat{\v}_{i,n}'  y_{i,n} \acute{u}_{i,n}, 
\end{align*}
where $\acute{u}_{i,n} = \frac{ \hat{u}_{i,n}  }{(\M_n)_{i,i} }$, $\hat{u}_{i,n} = \sum_{j=1}^n (\M_n)_{i,j}(y_{j,n} -\x_{j,n}'\hat{\bbeta} )$, for $i=1, \ldots, n$. Such an estimator is well-defined as long as $\min_{i}(\M_n)_{i,i}  >0 $. If $\min_{i}(\M_n)_{i,i}=0$, it means that the auxiliary regression produces a perfect prediction. So the observation does not carry information on $\bbeta$ and can be ignored.

As shown in Example \ref{Example:winner-curse}, the estimated top policy effect sizes with $\hat{\beta}_{(1)},\ldots, \hat{\beta}_{(d_0)}$ are often biased upward for our target parameters due to the winner's curse phenomenon. Inspired by the procedure proposed by \cite{claggett2014meta}\footnote{Note that there is a typo in \cite{claggett2014meta} for the definition of the near tie set. Although the near tie $\mathcal{H}_{(j)}$ in their manuscript was originally defined as $\mathcal{H}_{(j)} = \big\{ k: | \beta_k  - \beta_{(j)} | = O(n^{-\frac{1}{2}}),\ k=1, \ldots, d \big\}$, their proof goes through when the near tie set is defined with $  \mathcal{H}_{(j)} = \big\{ k: | \beta_k  - \beta_{(j)} | = o(n^{-\frac{1}{2}}),\ k=1, \ldots, d \big\}$. } for meta-analyses,  
we generate replicates of $\hat{\bbeta}$ from a multivariate normal distribution
\begin{align}\label{eq:resample-step}
\hat{\bm{\bbeta}}^*   \big| \{ (y_{i,n}, \x_{i,n}', \w_{i,n}')'\}_{i=1}^n \sim \mathcal{N}( \hat{ \bbeta}, \hat{\bm{\Omega}}^{\texttt{KJ}}_n /n ), \quad\text{ where } \hat{\bm{\bbeta}}^*  = ( \hat{\beta}^*_1, \ldots, \hat{\beta}^*_d )',
\end{align}
and we denote the ordered values of the vector $\hat{\bm{\bbeta}}^* $ as $  \hat{\beta}^*_{(1)}\geq  \ldots\geq \hat{\beta}^*_{(d)} $. Note that the above description of $\hat{\bm{\bbeta}}^* $ differs from some previous work on bootstrapping insofar we have suppressed the role of ``multiplier variables," and we have defined $\hat{\bm{\bbeta}}^*$ as a sample from $\mathcal{N}( \hat{ \bbeta}, \hat{\bm{\Omega}}^{\texttt{KJ}}_n /n )$. Different from \cite{claggett2014meta} that requires different estimators to be estimated from independent studies with non-overlapping random samples, our approach relaxes such an requirement and allows $\hat{\beta}_1, \ldots, \hat{\beta}_d$ to be correlated. 

Next, given properly chosen $\bl$ and $\br$ so that $\bl -\br = O(n^{-\delta})$ with $\delta\in (0,\frac{1}{2})$ (see Supplementary Materials Section C.1 for their data-adaptive choices, and robustness to different choices of tuning parameters in Supplementary Materials, Section C.2), we estimate a ``near tie" set that captures policies that have similar effect sizes to the $j$-th largest policy:
\begin{align*}
 \widehat{\mathcal{H}}_{(j)} = \big\{k: \hat{\beta}^*_{ {{(j)}}} - \bl \leq \hat{\beta}^*_{k} \leq \hat{\beta}^*_{ {{(j)}}}+\br  ,\ k=1, \ldots, d \big\}. 
\end{align*}
We then record the averages of $\hat{\beta}^*_1, \ldots, \hat{\beta}^*_{d} $ and of $\hat{\beta}_1, \ldots, \hat{\beta}_{d}$ in the estimated tie set $ \widehat{\mathcal{H}}_{(j)}$ as 
\begin{align}\label{eq:weighted-average}
\textcolor{black}{\tilde{\beta}_{(j)}^* = \frac{ \sum_{k\in  \widehat{\mathcal{H}}_{(j)}}\hat{\beta}^*_k }{ | \widehat{\mathcal{H}}_{(j)} |}}, \text{ and } \tilde{\beta}_{(j)} = \frac{ \sum_{k\in  \widehat{\mathcal{H}}_{(j)}} \hat{\beta}_k }{ | \widehat{\mathcal{H}}_{(j)} |},
\end{align}
where $| \widehat{\mathcal{H}}_{(j)} |$ denotes the cardinality of the set $\widehat{\mathcal{H}}_{(j)}$.

Finally, we apply the above resampling procedure to construct point estimates and confidence intervals for $\beta_{(j)}$ as well as $\beta_{\hat{j}}$ (as defined in Section \ref{Sec:setup} and in Eq \eqref{eq:observed-best-policy}), $j=1, \ldots, d_0$. Specifically, for confidence interval construction, we generate $B$ independent samples of $\tilde{\beta}^*_{(j)}$ as in Eq \eqref{eq:weighted-average}, and then define $\hat{q}_{(j)}(\alpha/2)$ to be the empirical $\alpha/2$-quantile of the $B\geq 1$ samples (and similarly for $\hat{q}_{(j)}(1-\alpha/2)$), leading to a level-$\alpha$ confidence interval for $\beta_{(j)}$ with 
\begin{align*}
\big[ \hat{q}_{(j)}(\alpha/2),\ \hat{q}_{(j)}(1-\alpha/2) \big], \quad j=1, \ldots, d_0.
\end{align*}
Corollary \ref{corollary:observed-policy} demonstrates that the above confidence interval also serves as an asymptotically exact level-$\alpha$ prediction interval for $\beta_{\hat{j}}$. For point estimates, we may either use $\tilde{\beta}_{(j)}$ or the averaged resampled statistics $\tilde{\beta}_{(j)}^*$ to estimate $\beta_{(j)}$ and $\beta_{\hat{j}}$.

\section{Theoretical investigation}\label{Sec:theoretical-investigation}

\subsection{Notations and assumptions}

Before discussing the theoretical results in detail, we revisit and introduce some notations and assumptions adopted in the manuscript.
We denote the sample $\{ (y_{i,n}, \x_{i,n}', \w_{i,n}')'\}_{i=1}^n$ as $\{ \z_{i,n}\}_{i=1}^n$. Recall $u_{i,n}$ is the random error in the considered linear model \eqref{eq:linear-model}, we define 
\begin{align}\label{eq:varepsion-v-definition}
\varepsilon_{i,n} = u_{i,n} - \mathbb{E}[ u_{i,n} |  \{ \w_{i,n} \}_{i=1}^n, \{ \x_{i,n} \}_{i=1}^n],\quad \v_{i,n} = \x_{i,n} - \mathbb{E}[\x_{i,n} |  \{ \w_{i,n} \}_{i=1}^n], \quad i=1, \ldots, n.  
\end{align}
Let $e_{i,n} = \mathbb{E}[ u_{i,n} |  \{ \w_{i,n} \}_{i=1}^n, \{ \x_{i,n} \}_{i=1}^n]$, we further denote 
\begin{align*}
&  \sigma_{i,n}^2 = \mathbb{E}[\varepsilon_{i,n}^2 |  \{ \w_{i,n} \}_{i=1}^n, \{ \x_{i,n} \}_{i=1}^n ], \quad  \tilde{\v}_{i,n} = \sum_{j=1}^n (\bm{M}_n)_{i,j}\v_{j,n}, \\ 
&  \rho_{n}^{1} = \frac{ \sum_{i=1}^n\mathbb{E}\big[ e_{i,n}^2  \big]  }{n},\quad  \rho_{n}^{2} = \frac{ \sum_{i=1}^n\mathbb{E}\Big[ \mathbb{E}\big(e_{i,n} |  \{ \w_{i,n} \}_{i=1}^n \big)^2  \Big]  }{n}, \\
&\bm{Q}_{i,n} = \mathbb{E}\Big[ \x_{i,n} - \big(\sum_{j=1}^n \mathbb{E}[ \w_{j,n}\w_{j,n}' ]  \big)^{-1} \sum_{j=1}^n \mathbb{E}[ \w_{j,n}\x'_{j,n} ]   \Big|  \{ \w_{i,n} \}_{i=1}^n  \Big], \\
& \tilde{\bm{Q}}_{i,n} = \sum_{j=1}^n ( \bm{M}_n)_{i,j}\bm{Q}_{i,n} .
\end{align*}
For a policy $j$, we define the near tie set in the population as: 
\begin{align*}
\mathcal{H}_{(j)} = \big\{ k: | \beta_k  - \beta_{(j)} | = o(n^{-\frac{1}{2}}),\ k=1, \ldots, d \big\}.
\end{align*}
Next, 
let $\hat{\bm{e}}_j $ denote a $d$-dimensional (sparse) vector with
\begin{align*}
\hat{\bm{e}}_j = (\hat{e}_{j,1}, \ldots, \hat{e}_{j, d}),\quad  \hat{e}_{jk} = \frac{\mathds{1}( k \in  \widehat{\mathcal{H}}_{(j)} )}{ |\hat{\mathcal{H}}_{(j)} | }, \quad k=1, \ldots, d.
\end{align*}
We will use the notation $\mathbb{P}(\cdot | \{ \z_{i,n}\}_{i=1}^n)$ to refer to the probability that is conditional on the random variables $\{ \z_{i,n}\}_{i=1}^n$.

We make following assumptions throughout this section. Note that Assumptions \ref{Assumption:Sampling}-\ref{assumption:variance_estimation} listed below largely follow the assumptions in \cite{cattaneo2018inference} and \cite{jochmans2020heteroscedasticity}, we list these assumptions along with their interpretations to present a full picture for our readers. 

\begin{assumption}[Sampling] \label{Assumption:Sampling}
The errors $\varepsilon_{i,n}$ are uncorrelated across $i$ conditional on $\{ \x_{i,n} \}_{i=1}^n$ and $\{ \w_{i,n} \}_{i=1}^n$. Let $\{ N_1, \ldots, N_{G_n} \}$ represents a partition of $\{1,\ldots, n\}$ with $\underset{ 1\leq g\leq G_n}{\max}{ |N_g| } = O(1)$ such that $ \{ (\varepsilon_{i,n}, \v_{i,n} ), i\in N_g \} $ (defined in \eqref{eq:varepsion-v-definition}) are independent across $g$ conditional on $\{ \w_{i,n} \}_{i=1}^n$. 
\end{assumption}

Assumption \ref{Assumption:Sampling}  generalizes the classical independent and identically distributed (i.i.d.) setting to allow for repeated measurements or group structures in the observed data. For example, Assumption \ref{Assumption:Sampling} allows the observed data to form clusters of finite sample sizes, and within-cluster dependency is allowed as long as the observations between clusters are independent. 

\begin{assumption}[Design] \label{Assumption:Design}
The dimension of the covariates $\w_{i,n}$ satisfies that $\limsup_{n\rightarrow \infty} q_n/n < 1$. The minimum eigenvalue of the matrix $\sum_{i=1}^n \w_{i,n}\w_{i,n}'$ is bounded away from 0 with probability approaching one, that is 
\begin{align*}
 \lim_{n\rightarrow \infty}\mathbb{P}\Big( \lambda_{\min} \big( \sum_{i=1}^n \w_{i,n}\w_{i,n}'\big) >0 \Big) = 1. 
\end{align*}
Lastly, 
\begin{align*}
& \underset{1\leq i\leq n}{\max}\Big\{  \mathbb{E}[\varepsilon_{i,n}^4 |  \{ \w_{i,n} \}_{i=1}^n, \{ \x_{i,n} \}_{i=1}^n ],\ \frac{1}{\sigma_{i,n}^2},\\
& \quad \quad\quad \quad  \mathbb{E}[\v_{i,n}^4 | \{ \w_{i,n} \}_{i=1}^n ], \ 1/\lambda_{\min} \big( \frac{\sum_{i=1}^n \mathbb{E}\big[  \tilde{\v}_{i,n}\tilde{\v}_{i,n}'|  \{ \w_{i,n} \}_{i=1}^n \big] }{n} \big) \Big\} = O_p(1).
\end{align*}
\end{assumption}

Assumption \ref{Assumption:Design} contains three conditions. The first condition allows the dimension of the covariates $\w_{i,n}$ to grow at the sample rate as the sample size $n$. 
The second condition requires the matrix $\sum_{i=1}^n \w_{i,n}\w_{i,n}'$ to be full rank, which is necessary otherwise the OLS estimator would not be able to calculate the matrix $\bm{M}_n$. Furthermore, as noted in \cite{cattaneo2018inference}, such an assumption can be imposed by dropping any covariates in $\w_{i,n}$ that are collinear. 
The third condition contains conventional moment conditions for the covariates and heteroscedasticity. 

\begin{assumption}[Linear model approximation] \label{Assumption:Approximation}
 $ \sum_{i=1}^n \mathbb{E}[|| \bm{Q}_{i,n}||^2 ] /n = O(1)$, $\rho_{n}^{1} + n( \rho_{n}^{1} - \rho_{n}^{2} ) +  \rho_{n}^{1} \cdot \sum_{i=1}^n \mathbb{E}[|| \bm{Q}_{i,n}||^2 ]  = o(1)$, and $\underset{1\leq i\leq n}{\max} || \hat{\v}_{i,n}||/\sqrt{n} = o_p(1)$, $n\rho_n^{1} = O(1)$. 
\end{assumption}

Assumption \ref{Assumption:Approximation} mainly characterizes the difference between the mean squares of the conditional errors $\rho_n^1$ and the projection  $\rho_n^2$ into the covariate space $\{\w_{i,n}\}$'s. The characterization of this difference involves  $\sum_{i=1}^n \mathbb{E}[|| \bm{Q}_{i,n}||^2 ]$ where $\bm{Q}_{i,n}$ describes the deviation of $\x_{i,n}$ from its population linear projection. Residuals of this linear projection, represented by $\hat{\v}_{i,n}$'s, are assumed to satisfy a negligibility condition after a maximization over all $i$'s. This negligibility condition regularizes the distributional connection between $\x_{i,n}$'s  and $\w_{i,n}$'s. We note that if the mean squares of $\x_{i,n}$'s are bounded and that an exogeneity condition $e_{i,n}=0$ holds for all $i$ and $n$, then the linear model approximation assumption naturally holds. Otherwise, if the exogeneity condition does not hold, Assumption \ref{Assumption:Approximation} requires a small-bias condition $n\rho_n^{1} = O(1)$. 

\begin{assumption}[Variance estimation] $\lim_{n\rightarrow \infty}\mathbb{P}( \min_i  \big( \bm{M}_n \big)_{i,i} >0  ) = 1$, 
\begin{align*}
\mathbb{P}\Big( \min_i \big( \bm{M}_n \big)_{i,i}>0 \Big) = O_p(1), \quad \frac{ \sum_{i=1}^n || \tilde{ \bm{Q} }_{i,n}||^4 }{n} = O_p(1),
\end{align*}
and $\max_i ||\mu_{i,n}||/\sqrt{n} = o_p(1)$ with $\mu_{i,n} = \mathbb{E}\big[ y_{i,n} | \{\x_{i,n} \}_{i=1}^n, \{\w_{i,n}\}_{i=1}^n \big]$. 
\label{assumption:variance_estimation}
\end{assumption}
Assumption \ref{assumption:variance_estimation} has two major parts. The first part regularizes the diagonal elements $ (\bm{M}_n \big)_{i,i}$'s, essentially requiring the smallest diagonal element to be consistently bounded away from zero when $n$ tends to infinity. Even though it is difficult to provide broadly general primitives to validate this assumption, Assumption 2 of \cite{cattaneo2018inference}, Assumption 4 of \cite{jochmans2020heteroscedasticity}, and the discussions therein provide sufficient conditions for this assumption to hold. The second part regularizes $\mu_{i,n}$'s and $\tilde{\mathbf{Q}}_{i,n}$'s in order to control the variance of $y_{i,n}$'s and the variance of $ \mathbb{E} (\v_{i,n}|\{\w_{i,n}\}_{i=1}^n)$'s.

\begin{assumption}[Policy effect sizes]\label{Assumption:seperation}
For $\delta \in (0,\frac{1}{2})$, the asymptotic distance between the effects of policy $k\not\in \mathcal{H}_{(j)} $ and $j\in  \mathcal{H}_{(j)}$ diverges as $n\rightarrow \infty$: 
\begin{align*}
n^{\delta} \cdot \min_{k\not\in \mathcal{H}_{(j)}}\big| \beta_{(j)} - \beta_k \big| \rightarrow \infty, \text{ as }n\rightarrow \infty, \quad j=1, \ldots, d. 
\end{align*}
\end{assumption}

Assumption \ref{Assumption:seperation} requires that any policies outside the near tie set $\mathcal{H}_{(j)}$ have effect sizes sufficiently different from the ones in  $\mathcal{H}_{(j)}$. In fixed dimensions when $q_n$ does not grow with $n$, the underlying policy effect sizes $\beta_1, \ldots, \beta_d$ are constant with respect to the sample size $n$. The near tie set reduces to a ``precise" tie set $\mathcal{H}_{(j)} = \big\{ k:  \beta_k  = \beta_{(j)} ,\ k=1, \ldots, d \big\}$, suggesting that $\min_{k\not\in \mathcal{H}_{(j)}}\big| \beta_{(j)} - \beta_k \big|$ is a positive constant bounded away from zero. In such a case, Assumption \ref{Assumption:seperation} is automatically satisfied.

\subsection{Properties of the proposed estimator}\label{Sec:theoretical-results}

For the proposed estimator, we show that the following theorem holds: 

\begin{thm}\label{Theorem:resampling-consistency}
Under Assumptions \ref{Assumption:Sampling}-\ref{Assumption:seperation}, for any $t\in \mathbb{R}$, for the resampled statistics, the following holds 
\begin{align*}
\lim_{n\rightarrow \infty}\mathbb{P}\left(  \frac{ \sqrt{n}\big(\tilde{\beta}_{(j)}^* - \tilde{\beta}_{(j)}\big) }{ (\hat{\bm{e}}_j'\hat{\bm{\Omega}}^{\texttt{KJ}}_n \hat{\bm{e}}_j  )^{\frac{1}{2}} } \leq t\Big | \{ (y_{i,n}, \x_{i,n}', \w_{i,n}')' \}_{i=1}^n \right) = \Phi(t).
\end{align*}
For the original statistics, it holds that 
\begin{align*}
 \lim_{n\rightarrow \infty} \mathbb{P}\left(  \frac{ \sqrt{n}\big(\tilde{\beta}_{(j)} - {\beta}_{(j)}\big) }{ (\hat{\bm{e}}_j'\hat{\bm{\Omega}}^{\texttt{KJ}}_n \hat{\bm{e}}_j  )^{\frac{1}{2}} } \leq t\right) = \Phi(t). 
\end{align*}
Furthermore, we have that $\lim_{n\rightarrow \infty }\mathbb{P}\Big( 	\mathbb{P}\big( 	\tilde{\beta}_{(j)}^* \leq  \beta_{(j)}   |\{ \z_{i,n}\}_{i=1}^n   \big) \leq s \Big)  = s$. 
\end{thm}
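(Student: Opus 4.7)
The plan is to reduce both statistics to linear functionals: $\tilde\beta_{(j)}^* = \hat{\bm{e}}_j'\hat\bbeta^*$ and $\tilde\beta_{(j)} = \hat{\bm{e}}_j'\hat\bbeta$. The only real obstruction is that $\hat{\bm{e}}_j$ is a random function of the Gaussian resample through $\widehat{\mathcal{H}}_{(j)}$. I would first prove a set-consistency lemma: there exists an event $A_n$ with $\mathbb{P}(A_n\,|\,\{\z_{i,n}\}_{i=1}^n)\toProb 1$ on which $\widehat{\mathcal{H}}_{(j)} = \mathcal{H}_{(j)}$ and hence $\hat{\bm{e}}_j = \bm{e}_j$, the deterministic vector with entries $\mathds{1}(k\in\mathcal{H}_{(j)})/|\mathcal{H}_{(j)}|$. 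This follows by comparing $|\hat\beta_k^* - \hat\beta_{(j)}^*|$ to the window $[-\bl,\br]$: under Assumptions~\ref{Assumption:Sampling}--\ref{assumption:variance_estimation}, $\hat\bbeta - \bbeta = \Op(n^{-1/2})$ (\cite{cattaneo2018inference,jochmans2020heteroscedasticity}) and conditionally on the data $\hat\bbeta^* - \hat\bbeta \sim \mathcal{N}(\bm{0}, \hat{\bm{\Omega}}^{\texttt{KJ}}_n/n)$ is also $\Op(n^{-1/2})$. For $k\in\mathcal{H}_{(j)}$ the gap is $\Op(n^{-1/2}) = \op(n^{-\delta})$ and lies inside $[-\bl,\br]$ with high probability, while for $k\notin\mathcal{H}_{(j)}$ Assumption~\ref{Assumption:seperation} forces $|\beta_k-\beta_{(j)}|\gg n^{-\delta}$ so the gap eventually exceeds $\bl\vee\br$.

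Granted set consistency, the first claim is essentially by construction: on $A_n$, the resampled statistic equals $\sqrt{n}\,\bm{e}_j'(\hat\bbeta^*-\hat\bbeta)/(\bm{e}_j'\hat{\bm{\Omega}}^{\texttt{KJ}}_n\bm{e}_j)^{1/2}$, which is exactly $\mathcal{N}(0,1)$ conditional on the data by \eqref{eq:resample-step}; decomposing the conditional CDF along $A_n$ and $A_n^c$ and bounding the latter by $\mathbb{P}(A_n^c\,|\,\{\z_{i,n}\}_{i=1}^n)\toProb 0$ yields the pointwise conditional limit $\Phi(t)$. For the second claim, on $A_n$ I split
\begin{equation*}
\sqrt{n}(\tilde\beta_{(j)} - \beta_{(j)}) = \sqrt{n}\,\bm{e}_j'(\hat\bbeta-\bbeta) + \sqrt{n}\cdot|\mathcal{H}_{(j)}|^{-1}\sum_{k\in\mathcal{H}_{(j)}}(\beta_k - \beta_{(j)}).
\end{equation*}
The second term is $o(1)$ directly from the near-tie definition $|\beta_k - \beta_{(j)}| = o(n^{-1/2})$, while the first term, rescaled by $(\bm{e}_j'\hat{\bm{\Omega}}^{\texttt{KJ}}_n\bm{e}_j)^{1/2}$, is asymptotically standard normal by the Cattaneo--Jochmans CLT for linear contrasts of the OLS estimator paired with their consistent covariance estimator.

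The third claim combines the previous two. Writing $T_n = \sqrt{n}(\beta_{(j)}-\tilde\beta_{(j)})/(\hat{\bm{e}}_j'\hat{\bm{\Omega}}^{\texttt{KJ}}_n\hat{\bm{e}}_j)^{1/2}$ and letting $F_n(\cdot\,|\,\{\z_{i,n}\}_{i=1}^n)$ denote the random conditional CDF in the first claim, an in-probability version of P\'olya's theorem strengthens the pointwise convergence to $\sup_t|F_n(t\,|\,\{\z_{i,n}\}_{i=1}^n) - \Phi(t)| \toProb 0$, so evaluating at the random argument $T_n$ yields $\mathbb{P}(\tilde\beta_{(j)}^*\leq\beta_{(j)}\,|\,\{\z_{i,n}\}_{i=1}^n) = \Phi(T_n) + \op(1)$. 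The second claim together with the symmetry of the standard normal gives $T_n \toDist Z \sim \mathcal{N}(0,1)$, so the continuous mapping theorem delivers $\Phi(T_n) \toDist \Phi(Z) \sim \mathrm{Uniform}(0,1)$ and hence the exact limit $s$. The main obstacle I expect is the set-consistency step, since one must propagate the Gaussian perturbation through the non-smooth ordering operation, simultaneously verifying that $\bl,\br$ are large enough to absorb all $\Op(n^{-1/2})$ within-cluster fluctuations (so $\mathcal{H}_{(j)}\subseteq\widehat{\mathcal{H}}_{(j)}$) and small enough, thanks to the $n^{-\delta}$ separation in Assumption~\ref{Assumption:seperation}, to exclude every truly separated effect.
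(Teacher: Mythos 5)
Your proposal is correct and follows essentially the same route as the paper: a tie-set consistency lemma (the paper's Lemma 1) reduces $\hat{\bm{e}}_j$ to the deterministic contrast $\bm{e}_j$, the exact conditional normality of the resample and the Cattaneo--Jochmans CLT give the two Gaussian limits, and the uniformity claim follows by evaluating the (conditionally exact) normal CDF at the asymptotically standard normal statistic. The only cosmetic difference is that you close the third claim via P\'olya's theorem and the continuous mapping theorem, whereas the paper computes the conditional probability in closed form as $\Phi$ of the centered statistic and applies bounded convergence; both rest on the same two facts.
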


Theorem \ref{Theorem:resampling-consistency} confirms that our proposed confidence interval for $\beta_{(j)}$ achieves exact $1-\alpha$ coverage probability as the sample size goes to infinity when $B$ is sufficiently large, which distinguishes the proposed inference procedure from simultaneous methods. Furthermore, Theorem \ref{Theorem:resampling-consistency} says that $\tilde{\beta}_{(j)}$ is a root-$n$ consistent estimator of $\beta_{(j)}$, in the sense that $\forall \varepsilon >0$, there exists $M > 0$ such that $\mathbb{P}\big(|\sqrt{n}( \tilde{\beta}_{(j)} - {\beta}_{(j)} )| >M \big) \leq \varepsilon$, for $n\geq 1$. 

As for the observed best policies, recall that we denote the observed $j$-th largest policy as 
\begin{align}\label{eq:observed-best-policy}
        \hat{j} = \sum_{k=1}^d k\cdot \mathds{1}( \hat{\beta}_k = 	\hat{\beta}_{(j)} ).
\end{align}
The following corollary suggests that the proposed confidence interval for $\beta_{(j)}$ can also serve as an exact prediction interval for $\beta_{\hat{j}}$. Therefore, the proposed procedure in Section \ref{Sec:Method} can also be used to make inference on the observed top policies in a random sample: 
\begin{coro}\label{corollary:observed-policy}
Under Assumptions \ref{Assumption:Sampling}-\ref{Assumption:seperation}, we have that  $\lim_{n\rightarrow \infty }\mathbb{P}\Big( 	\mathbb{P}\big( 	\tilde{\beta}_{(j)}^* \leq  \beta_{\hat{j}}   |\{ \z_{i,n}\}_{i=1}^n   \big) \leq s \Big)  = s$. Furthermore, $\tilde{\beta}_{(j)}$ is a ``root-$n$ consistent" estimator of the data-dependent parameter $\beta_{\hat{j}}$ in the sense that $\forall \varepsilon >0$, there exists $M > 0$ such that $\mathbb{P}\big(|\sqrt{n}( \tilde{\beta}_{(j)} - {\beta}_{\hat{j}} )| >M \big) \leq \varepsilon$, for $n\geq 1$. 
\end{coro}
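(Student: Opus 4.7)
The plan is to reduce both conclusions to Theorem \ref{Theorem:resampling-consistency} by establishing the rank-preservation fact that $\hat{j}\in\mathcal{H}_{(j)}$ with probability tending to one. Since the near-tie set is defined through $|\beta_k-\beta_{(j)}|=o(n^{-1/2})$, on this event one automatically has $\sqrt{n}(\beta_{\hat{j}}-\beta_{(j)})\to 0$, so every statement about $\beta_{(j)}$ in Theorem \ref{Theorem:resampling-consistency} can be transferred to $\beta_{\hat{j}}$ by a Slutsky-type substitution.

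For the rank-preservation lemma, I would first note that Theorem \ref{Theorem:resampling-consistency} applied coordinatewise, together with the fact that $d$ is fixed, gives $\max_{1\leq k\leq d}|\hat{\beta}_k-\beta_k|=O_p(n^{-1/2})$. Assumption \ref{Assumption:seperation} provides some $\delta\in(0,\tfrac12)$ such that $\min_{k\notin\mathcal{H}_{(j)}}|\beta_{(j)}-\beta_k|\gg n^{-\delta}\gg n^{-1/2}$. Consequently, for any $k$ whose true effect lies strictly above the top of $\mathcal{H}_{(j)}$ one has $\hat{\beta}_k>\hat{\beta}_m$ for all $m$ with true effect at or below $\mathcal{H}_{(j)}$ with probability approaching one, and symmetrically for any $k$ strictly below. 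Indices inside $\mathcal{H}_{(j)}$ may freely swap sample ranks among themselves, but any such swap keeps $\hat{j}\in\mathcal{H}_{(j)}$, so $\mathbb{P}(\hat{j}\in\mathcal{H}_{(j)})\to 1$.

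Given the lemma, the root-$n$ consistency assertion follows from the decomposition
\[
\sqrt{n}\bigl(\tilde{\beta}_{(j)}-\beta_{\hat{j}}\bigr)=\sqrt{n}\bigl(\tilde{\beta}_{(j)}-\beta_{(j)}\bigr)+\sqrt{n}\bigl(\beta_{(j)}-\beta_{\hat{j}}\bigr),
\]
in which the first summand is $O_p(1)$ by Theorem \ref{Theorem:resampling-consistency} and the second is $o_p(1)$ on the event $\{\hat{j}\in\mathcal{H}_{(j)}\}$. For the probability-integral-transform assertion, let $F_n(b)=\mathbb{P}(\tilde{\beta}^*_{(j)}\leq b\mid \{\bm{z}_{i,n}\}_{i=1}^n)$. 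The normal resampling scheme in Eq.~\eqref{eq:resample-step}, together with the definition of $\tilde{\beta}^*_{(j)}$ as a convex combination of coordinates of $\hat{\bm{\beta}}^*$, yields the exact identity $F_n(b)=\Phi(\sqrt{n}(b-\tilde{\beta}_{(j)})/\sigma_{n,j})$ with $\sigma_{n,j}=(\hat{\bm{e}}_j'\hat{\bm{\Omega}}^{\texttt{KJ}}_n\hat{\bm{e}}_j)^{1/2}$. A one-term mean value expansion then gives
\[
F_n(\beta_{\hat{j}})-F_n(\beta_{(j)})=\Phi'(\xi_n)\cdot\sqrt{n}(\beta_{\hat{j}}-\beta_{(j)})/\sigma_{n,j}=o_p(1),
\]
using the uniform boundedness of $\Phi'$ and the fact that $1/\sigma_{n,j}=O_p(1)$, which is implicit in Assumption \ref{Assumption:Design}. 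Combined with the third conclusion of Theorem \ref{Theorem:resampling-consistency} and the continuous mapping theorem, this yields $\mathbb{P}(F_n(\beta_{\hat{j}})\leq s)\to s$, which is the claim.

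The main obstacle is the rank-preservation lemma, specifically ensuring that the pointwise $O_p(n^{-1/2})$ estimation errors do not cumulatively flip ranks across the boundary of $\mathcal{H}_{(j)}$; this is exactly where the separation rate $n^{-\delta}$ with $\delta<\tfrac12$ in Assumption \ref{Assumption:seperation} provides the necessary margin. A secondary subtlety is that the lemma controls only \emph{boundary-crossing} flips and not intra-$\mathcal{H}_{(j)}$ permutations, but the corollary is insensitive to the latter because any $\hat{j}\in\mathcal{H}_{(j)}$ delivers a value of $\beta_{\hat{j}}$ that agrees with $\beta_{(j)}$ up to $o(n^{-1/2})$, which is the only property used in the Slutsky step.
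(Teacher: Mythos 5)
Your proposal is correct and follows essentially the same route as the paper: establish that $\hat{j}\in\mathcal{H}_{(j)}$ with probability tending to one (the paper's Lemma 3, proved from the separation Assumption \ref{Assumption:seperation} exactly as you sketch), and then transfer both conclusions of Theorem \ref{Theorem:resampling-consistency} from $\beta_{(j)}$ to $\beta_{\hat{j}}$; if anything you are more careful than the paper, which asserts $\mathbb{P}(\beta_{\hat{j}}=\beta_{(j)})\to 1$ where the near-tie definition only guarantees $|\beta_{\hat{j}}-\beta_{(j)}|=o(n^{-1/2})$, a gap your Slutsky/mean-value step closes properly. One small overclaim: the identity $F_n(b)=\Phi\big(\sqrt{n}(b-\tilde{\beta}_{(j)})/\sigma_{n,j}\big)$ is not \emph{exact}, because $\widehat{\mathcal{H}}_{(j)}$ is computed from the resampled draw $\hat{\bm{\beta}}^*$, so conditionally on the data $\tilde{\beta}^*_{(j)}$ is a mixture of Gaussians over realizations of the estimated tie set rather than a single Gaussian; the identity holds only on the event $\widehat{\mathcal{H}}_{(j)}=\mathcal{H}_{(j)}$, whose conditional probability tends to one by the tie-set consistency lemma you already invoke, so your argument survives after conditioning on that event as the paper does.
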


\begin{remark}[Regression models with fixed effects]
The proposed resampling-based approach can be used to calibrate multiple best policies when fixed effects are introduced in linear regression models (see \cite{verdier2020estimation} for comprehensive discussion). This suggests that our approach not only applies to independently sampled data, but also remains valid when there are repeated-measurements present in the data. These may include short panel data, and datasets in which, for example, two individuals have sampled from each household. To conserve space in the main manuscript, we have leave the detailed discussion in the Supplementary Materials (Section D). 
\end{remark}

\begin{remark}[Data dependent choice of $d_0$]\label{remark:data-dependent-choice-d}
In addition to a deterministic choice of $d_0$, another practically relevant scenario is a data dependent choice of $d_0$. An example of such a data dependent choice is $\hat{d}_0 = \max \{k:\hat{\beta}_{(k)}> C\}$, where $C$ is a user-specified threshold for the effect size. A relatively complicated situation is that $C$ coincides with some of the policy sizes in $\beta_1,\beta_2,\cdots,\beta_d$. In this situation, it is possible that no matter how large $n$ is, $\hat{d}_0$ does not converge to a deterministic value but instead to a non-degenerate random variable. For the purpose of separation, we may adjust  $\hat{d}_0 = \max \{k:\hat{\beta}_{(k)}> C\}$ to be  $\hat{d}'_0 = \max \{k:\hat{\beta}_{(k)}> C + C_1\cdot n^{-0.25}\}$, where $C_1$ is a constant that does not depend on $n$. The choice of $-0.25$ is tunable and may be of independent interest. By this new choice of $\hat{d}'_0$, the policy effects that exactly equal $C$ will be eliminated almost surely when $n$ tends to infinity. This elimination exactly matches the target to select all the policy sizes that are larger than $C.$ In the limit of $n$ tending to infinity, $\max \{k:\hat{\beta}_{(k)}> C + C_1\cdot n^{-0.25}\}$ will converge almost surely to a set that contains all effect sizes larger than $C$. Therefore, the large-sample theory results for a pre-specified deterministic integer would still hold by plugging in $\hat{d}'_0 $. 
\end{remark}

\section{Simulation studies}\label{Sec:simulation}

\subsection{Simulation design}\label{subsec:sim-design}

 We generate i.i.d. Monte Carlo samples of $\{( y_{i,n}, \x_{i,n}',\w_{i,n}') \}_{i=1}^n$ from the model
\begin{align*}
y_{i,n} = \x_{i,n}'\bbeta +\w_{i,n}'\gamma_n + \varepsilon_{i,n}, \quad i=1, \ldots, n. 
\end{align*}
\textcolor{black}{We consider various data generating processes (DGP) for different choices of the policy variable $\x_{i,n}$, the covariates $\w_{i,n}$ and the random noise $\varepsilon_{i,n}$. } The first DGP follows a similar setup taking from \cite{jochmans2020heteroscedasticity} and \cite{cattaneo2018inference}, where we generate many (sparse) dummy variables entering the estimation of $\bbeta$. We generate $\x_{i,n}\sim \mathcal{N}(0, \Sigma) $ with $\Sigma_{jk} = 0.5^{|j-k|}$ for $j,k=1, \ldots, d$, $\w_{i,n} = \mathds{1}( \tilde{\w}_{i,n}\geq \Phi^{-1}(0.98) )$ with $\tilde{\w}_{i,n} \sim \mathcal{N}(0, \bm{I}_{q_n})$ and $\bm{I}_{q_n}$ is a $q_n$-dimensional identity matrix, and $\varepsilon_{i,n}\sim \mathcal{N}(0,1)$. The second DGP considers a case with dummy policy random variables and normal covariates, where we generate $\x_{i,n} = \mathds{1}(\tilde{\x}_{i,n} > 0) $ with $ \tilde{\x}_{i,n}\sim  \mathcal{N}(0, \Sigma)  $, $\w_{i,n} \sim \mathcal{N}(0, \bm{I}_{q_n})$ and $\varepsilon_{i,n}\sim \mathcal{N}(0,1)$.  In the Supplementary Materials, we have further included DGPs with more realistic error terms beyond normal distribution, including error terms with asymmetric and bimodal distributions. For most of the DGPs, we investigate both homoscedastic as well as heteroscedastic models. See Supplementary Materials Section C for detailed description and simulation results. 

As for the coefficients, we consider {three DGPs that vary in $\bbeta$ and $\bm{\gamma}_n$.} The first DGP considers the case in which no policy is effective (meaning that $\bbeta = 0$), and the coefficient $\gamma_{j} = 1/j$, for $j=1, \ldots, q_n$. We refer to this case as the ``homogeneity" case since $\beta_j$'s take the same value zero. The second and the third DGPs consider cases where policy effects are generated from $\beta_{j} = \Phi^{-1}\big( \frac{j}{d+1} \big)$ for $j=1, \ldots, d$, and the coefficients are either $\bm{\gamma}_{n} = 0$ or $\gamma_{j} = 1/j$, for $j=1, \ldots, q_n$. We refer to this case as the {``heterogeneity(1)" case and ``heterogeneity(2)" case}, respectively, since different policies have heterogeneous effects. 

We set the sample size $n \in\{700,2000\}$ to mimic the sample size in our case studies, the number of policies $d \in \{5,10\}$, and the dimension of the covariates $q_n$ from $ q_n\in \{ 1,  141,  281,  421,561, 631 \}$. All statistics reported below are computed based on over 1,000 Monte Carlo replications. To avoid redundancy, we present the results for $n=700$ and $d=5$ in the main manuscript, and rests are provided in the Supplementary Materials (Section C).

To demonstrate the robustness of the adopted covariance matrix estimator, we compare our proposal with three alternative covariance matrix estimators. The first one we compare with is the covariance matrix estimator proposed by \cite{cattaneo2018inference}: $ \hat{\bm{\Omega}}^{\texttt{HCK}}_n= \hat{\bm{\Gamma}}_n^{-1}\hat{\Sigma}^{\texttt{HCK}}_n \hat{\bm{\Gamma}}_n^{-1}$, 
where $ \hat{\Sigma}^{\texttt{HCK}}_n \triangleq \frac{1}{n}\sum_{i=1}^n\sum_{j=1}^n\kappa_{ij,n}^{\texttt{HCK}} \hat{\v}_{i,n}\hat{\v}_{i,n}'  \hat{u}_{j,n}^2$,  $\hat{u}_{j,n} = \sum_{k=1}^n (\M_n)_{j,k}(y_{k,n} -\x_{k,n}'\hat{\bbeta} )$, and 
\begin{align*}
    \kappa_n^{\texttt{HCK}}=
    \begin{pmatrix}
        M_{11,n}^2 & \cdots & M_{1n,n}^2 \\
        \vdots & \ddots & \vdots \\
        M_{n1,n}^2 & \cdots & M_{nn,n}^2  \\
    \end{pmatrix}^{-1}
    & =(\M_n\odot\M_n)^{-1},
\end{align*} 
with $\odot$ denoting the Hadamard product. The estimator $\hat{\Sigma}^{\texttt{HCK}}_n$ is well-defined whenever $(\M_n\odot\M_n)$ is invertible. We use the acronym ``HCK" to denote this estimator in the following parts. The second one we compare with is the classical Eicker-White covariance matrix estimator \citep{eicker1963asymptotic, white1980heteroskedasticity} of the form: 
\begin{align}\label{Eq:eicker-white}
    \hat{\bm{\Omega}}^{\texttt{EW}}_n= \hat{\bm{\Gamma}}_n^{-1}\hat{\Sigma}^{\texttt{EW}}_n \hat{\bm{\Gamma}}_n^{-1},
\end{align}
where $ \hat{\Sigma}^{\texttt{EW}}_n \triangleq \frac{1}{n}\sum_{i=1}^n \hat{\v}_{i,n}\hat{\v}_{i,n}'  \hat{u}_{i,n}^2$ and  $\hat{u}_{i,n} = \sum_{j=1}^n (\M_n)_{i,j}(y_{j,n} -\x_{j,n}'\hat{\bbeta} )$. 
We use the acronym ``EW" to denote this estimator in our simulation results section.  Huber-Eicker-White standard error is also known as the HC0 standard error, where HC stands for ``heteroskedasticity robust." The last covariance matrix estimator we adopted is a variant of the HC0 estimator: 
\begin{align}\label{Eq:HC3}
    \hat{\bm{\Omega}}^{\texttt{HC3}}_n= \hat{\bm{\Gamma}}_n^{-1}\hat{\Sigma}^{\texttt{HC3}}_n \hat{\bm{\Gamma}}_n^{-1},\quad\text{ where } \hat{\Sigma}^{\texttt{HC3}}_n \triangleq \frac{1}{n}\sum_{i=1}^n \hat{\v}_{i,n}\hat{\v}_{i,n}'  \frac{ \hat{u}_{i,n}^2}{ (\M_n)_{i,j}^2 }.
\end{align}
The above estimator upward reweights regression residuals, and we use the acronym ``HC3" to denote this estimator in our simulation results section.


\subsection{Simulation results}

We summarize our main takeaways from the simulation results presented in Table \ref{table:sim-d5-hetero-1st}-\ref{table:sim-d5-hetero-2nd}, where we have compared our proposed approach (``Proposed + KJ") in Section \ref{Sec:Method} with four other methods. ``Proposed + EW", ``Proposed + HC3", and ``Proposed + HCK" refer to methods adjusting for the winner's curse bias but use $\hat{\bm{\Omega}}_n^{\texttt{EW}}$, $\hat{\bm{\Omega}}_n^{\texttt{HC3}}$,  and $\hat{\bm{\Omega}}_n^{\texttt{HCK}}$, respectively, to estimate the covariance matrix of $\bbeta$.  ``No adjustment+KJ" refers to the approach with no adjustment for the winner's curse bias and adopts the robust covariance matrix estimator proposed by \cite{jochmans2020heteroscedasticity} to make inference on the best policies. 
We present the coverage probabilities and $\sqrt{n}$-scaled biases for the top two policies in the population, i.e., $\beta_{(1)}$ and $\beta_{(2)}$. As the simulation results are rather similar for the observed top two policies in the random sample, i.e., $\beta_{\hat{1}}$, $\beta_{\hat{2}}$, we present these results in the Supplementary Materials (Section C.4). 

Our simulation results confirm our theoretical results presented in Theorem \ref{Theorem:resampling-consistency}. When no policy is effective, our proposed method not only successfully suppresses the winner's curse bias for the top two policies but also attains near nominal coverage (Table \ref{table:sim-d5-homo-beta1}).  Similar pattern can also be observed when top policies are effective (i.e., $\beta_j$'s are heterogeneous, and Table \ref{table:sim-d5-hetero-1st} in particular). 
In nearly all designs and for a range of considered values of $q_n$, our proposal yields close to nominal confidence interval, though some under coverage is observed for large values of $q_n$. The method with no adjustment is obviously biased upward due to the winner's curse phenomenon, thus it provides under-covered confidence intervals and point estimates with rather large biases.
In all considered cases, both the EW-based method {and the HC3-based method} tend to lose coverage when $q_n\geq141$, and the HCK-based method tends to produce under-covered confidence interval whenever $q_n\geq561$. 
In moderately high dimensions so that $q_n/n$ is approximately one half, the proposed method with the HCK variance estimator has comparable performances with our approach. 



\begin{table}[h!]
\caption{ Simulation results ($d=5,\text{heterogeneity}, \beta_{(1)}$)}\label{table:sim-d5-hetero-1st}

\begin{adjustbox}{width=.8\textwidth,center}
\begin{tabular}{ccccccc}
     \hline\hline
           & &\multicolumn{4}{c}{  $\beta_{j} = \Phi^{-1}\big( \frac{j}{d+1} \big), \quad \bm{\gamma}_n=0$, \quad j = 1,\ldots, d} \\[0.15cm]
           \cline{3-7}
           &&\multicolumn{4}{c}{ $ \x_{i,n}\sim \mathcal{N}(0, \Sigma), \quad \w_{i,n} = \mathds{1}( \tilde{\w}_{i,n}\geq \Phi^{-1}(0.98) ) $}   \\[0.15cm]
            \cline{3-7} 
             & &  Proposed+KJ  &Proposed+HCK  &   Proposed+HC3 & Proposed+EW & No adjustment+KJ\\[0.15cm]
       \cline{3-7} 
    $q_n=1$ &Cover  & 0.97(0.01) & 0.96(0.01) & 0.96(0.01)  & 0.96(0.01)& 0.97(0.01)
    \\
    
 & $\sqrt{n}$Bias & -0.04(0.06)
     &-0.03(0.04) &-0.03(0.04) & -0.04(0.05) & 0.05(0.06)
    \\[0.15cm]

   $q_n=141$ &Cover & 0.96(0.01) & 0.96(0.01) & 0.94(0.01) & 0.95(0.01) & 0.95(0.01)\\

     & $\sqrt{n}$Bias & -0.04(0.05) & -0.04(0.04) & -0.04(0.04) & 0.06(0.06) & 0.06(0.07)\\[0.15cm]
     
       $q_n=281$ &Cover  &0.96(0.01) & 0.95(0.01) & 0.82(0.02)  &  0.80(0.01)& 0.94(0.01)
    \\  
    
    & $\sqrt{n}$Bias & -0.05(0.06)
     & -0.06(0.05)& -0.06(0.03) & -0.10(0.07) & -0.08(0.08)
    \\[0.15cm]

   $q_n=421$ &Cover & 0.95(0.02) & 0.94(0.01) &0.79(0.01) &  0.76(0.01)& 0.78(0.01)\\

     & $\sqrt{n}$Bias & -0.05(0.05) & -0.06(0.05)&-0.07(0.05) &  -0.12(0.09)& 0.11(0.09)\\[0.15cm]
     
      $q_n=561$ &Cover &0.95(0.01) & 0.92(0.01) & 0.65(0.02)& 0.63(0.01) & 0.68(0.01) \\
    
     & $\sqrt{n}$Bias & -0.07(0.07) & -0.09(0.07) &-0.17(0.10) & -0.20(0.12) & 0.15(0.13)\\[0.15cm]
     
      $q_n=631^*$ &Cover &0.93(0.01) &0.91(0.01) &0.51(0.02) & 0.48(0.01) & 0.55(0.01) \\
    
     & $\sqrt{n}$Bias & -0.17(0.08) & -0.19(0.10) &-0.28(0.11) & -0.35(0.22) & -0.26(0.13)\\[0.15cm]
      \cline{3-7}

     & & \multicolumn{4}{c}{$\x_{i,n} = \mathds{1}(\tilde{\x}_{i,n} > 0), \quad \w_{i,n} \sim \mathcal{N}(0, I) $}    \\[0.15cm]
      \cline{3-7} 
      & &  Proposed+KJ & Proposed+HCK  &  Proposed+HC3  &  Proposed + EW &No adjustment+KJ\\[0.15cm]
       \cline{3-7} 
     $q_n=1$ &Cover  &0.97(0.01) &0.97(0.01) & 0.95(0.01) & 0.96(0.01) & 0.97(0.01)
    \\

       & $\sqrt{n}$Bias & -0.02(0.07) & -0.02(0.04)& -0.01(0.03) & -0.07(0.11) & -0.05(0.09)
       \\[0.15cm]
       
    $q_n=141$ &Cover &0.96(0.01) & 0.95(0.01)& 0.94(0.01)&  0.94(0.01) &0.96(0.01)\\

        & $\sqrt{n}$Bias &-0.02(0.03) & -0.02(0.02)& -0.03(0.02)& 0.11(0.12) & -0.06(0.12)\\[0.15cm]
        
          $q_n=281$ &Cover  & 0.95(0.01) & 0.94(0.01) & 0.87(0.01) & 0.85(0.01)& 0.95(0.01)
    \\

    & $\sqrt{n}$Bias & -0.03(0.04)
     & -0.03(0.03) & -0.04(0.02) & 0.14(0.12) & -0.08(0.13)
    \\[0.15cm]

   $q_n=421$ &Cover & 0.95(0.01) & 0.94(0.01) & 0.78(0.01) & 0.76(0.01) & 0.75(0.01)\\

     & $\sqrt{n}$Bias & -0.03(0.03) &-0.05(0.04) & -0.08(0.05)& -0.19(0.17)& 0.19(0.14)\\[0.15cm]
     
      $q_n=561$ &Cover & 0.95(0.01) &0.92(0.01) &0.63(0.02) & 0.61(0.01) & 0.63(0.01)\\

     & $\sqrt{n}$Bias & -0.04(0.04) & -0.08(0.06)&-0.19(0.10) &  -0.30(0.26)& -0.24(0.22)\\[0.15cm]
     
     $q_n=631$ &Cover &0.94(0.01) & 0.91(0.01) & 0.49(0.02) &  0.45(0.01)& 0.68(0.01)\\
    
     & $\sqrt{n}$Bias &-0.06(0.07) & -0.11(0.09)& -0.23(0.13)& -0.42(0.20) & 0.39(0.29)\\[0.15cm]
      \cline{2-5}

     \hline\hline
     \end{tabular}
     \end{adjustbox}
      \begin{tablenotes}\footnotesize
  \item Note: ``Cover" is the empirical coverage of the 95\% confidence interval for $\beta_{(1)}$ and `` $\sqrt{n}$Bias " captures the root-$n$ scaled Monte Carlo bias for estimating $\beta_{(1)}$. `` * " indicates that $\hat{\bm{\Omega}}_n^{\texttt{KJ}}$ is not positive semi-definite in some Monte Carlo samples.
     \end{tablenotes}
 \end{table}

  \begin{table}[h!]
\caption{ Simulation results ($d=5, \text{homogeneity},  \beta_{(1)}$)}\label{table:sim-d5-homo-beta1}

\begin{adjustbox}{width=.8\textwidth,center}
\begin{tabular}{ccccccc}
     \hline\hline
           & &\multicolumn{4}{c}{  $\bm{\beta}=0, \quad \bm{\gamma}_j=1/j$} \\[0.15cm]
           \cline{3-7}
           &&\multicolumn{4}{c}{ $ \x_{i,n}\sim \mathcal{N}(0, \Sigma), \quad \w_{i,n} = \mathds{1}( \tilde{\w}_{i,n}\geq \Phi^{-1}(0.98) ) $}   \\[0.15cm]
            \cline{3-7} 
             & &  Proposed+KJ & Proposed+HCK  & Proposed+HC3 & Proposed+EW  & No adjustment+KJ\\[0.15cm]
       \cline{3-7} 
    $q_n=1$ &Cover  & 0.97(0.01) & 0.96(0.01) & 0.96(0.01)  & 0.93(0.02) &0.90(0.01)
    \\
    
 & $\sqrt{n}$Bias & 0.02(0.03)
     & 0.02(0.03) &0.03(0.03) & 0.03(0.03) &1.64(0.04)
    \\[0.15cm]

   $q_n=141$ &Cover & 0.96(0.01) & 0.96(0.01) & 0.96(0.01) & 0.89(0.02) &0.88(0.01)\\

     & $\sqrt{n}$Bias & 0.03(0.04) & 0.04(0.04) & 0.03(0.04) & 0.12(0.04) & 1.78(0.04)\\[0.15cm]
     
       $q_n=281$ &Cover  &0.96(0.01) & 0.94(0.01) & 0.90(0.02)  & 0.85(0.01) &0.83(0.01)
    \\
    
    & $\sqrt{n}$Bias & 0.03(0.04)
     & 0.04(0.04)& 0.05(0.04) &  0.22(0.03) &2.03(0.05)
    \\[0.15cm]

   $q_n=421$ &Cover & 0.95(0.01) & 0.93(0.01) &0.82(0.02) & 0.79(0.01) &0.74(0.02) \\

     & $\sqrt{n}$Bias & 0.05(0.05) & 0.18(0.05)&0.24(0.06) & 0.36(0.03)  &2.63(0.06) \\[0.15cm]
     
      $q_n=561$ &Cover &0.95(0.01) & 0.93(0.01) & 0.67(0.02)& 0.73(0.01) &0.63(0.02)\\
    
     & $\sqrt{n}$Bias & 0.08(0.09) & 0.51(0.05) &0.74(0.06) & 0.44(0.04) &3.74(0.09)\\[0.15cm]
     
      $q_n=631^*$ &Cover &0.93(0.01) &0.89(0.01) &0.53(0.02) & 0.50(0.01) &0.45(0.02)\\
    
     & $\sqrt{n}$Bias & 0.18(0.09) & 1.21(0.09) &1.84(0.11) & 2.42(0.06) &5.10(0.12)\\[0.15cm]
      \cline{3-7}

     & & \multicolumn{4}{c}{$\x_{i,n} = \mathds{1}(\tilde{\x}_{i,n} > 0), \quad \w_{i,n} \sim \mathcal{N}(0, I) $}    \\[0.15cm]
      \cline{3-7} 
      & &  Proposed+KJ &Proposed+HCK  & Proposed+HC3 & Proposed+EW & No adjustment+KJ\\[0.15cm]
       \cline{3-7} 
     $q_n=1$ &Cover  &0.96(0.01) &0.96(0.01) & 0.96(0.01) & 0.94(0.01) &0.90(0.01)
    \\

       & $\sqrt{n}$Bias & 0.03(0.04) & 0.04(0.05)& 0.05(0.05) & 0.07(0.07) &2.75(0.06)
       \\[0.15cm]
       
    $q_n=141$ &Cover &0.96(0.01) & 0.96(0.01)& 0.93(0.01)& 0.90(0.01) &0.83(0.01) \\

        & $\sqrt{n}$Bias &0.05(0.05) & 0.05(0.06)& 0.17(0.07)& 0.31(0.07) &3.29(0.08) \\[0.15cm]
        
          $q_n=281$ &Cover  & 0.95(0.01) & 0.95(0.01) & 0.90(0.01) & 0.88(0.01) &0.75(0.02)
    \\

    & $\sqrt{n}$Bias & 0.07(0.08) & 0.07(0.07) &0.31(0.08) & 0.54(0.05) &3.59(0.08)
    \\[0.15cm]

   $q_n=421$ &Cover &0.95(0.01) & 0.95(0.01) & 0.85(0.01) & 0.86(0.01) &0.65(0.02)\\

     & $\sqrt{n}$Bias &0.04(0.04) &0.10(0.11) & 0.73(0.13)& 0.64(0.06) &4.58(0.11)\\[0.15cm]
     
     $q_n=561$ &Cover & 0.93(0.01) &0.90(0.02) &0.59(0.02) & 0.61(0.01) &0.53(0.02)\\

     & $\sqrt{n}$Bias &0.13(0.07) & 0.19(0.13)&2.00(0.13) & 1.73(0.08) &5.90(0.13)\\[0.15cm]
     
     $q_n=631$ &Cover &0.90(0.01) & 0.78(0.02) & 0.38(0.02) & 0.30(0.01) &0.33(0.02)\\
    
     & $\sqrt{n}$Bias &0.50(0.12) & 2.47(0.16)& 5.16(0.19)& 7.68(0.18) &6.51(0.19)\\[0.15cm]
      \cline{2-5}

     \hline\hline
     \end{tabular}
     \end{adjustbox}
      \begin{tablenotes}\footnotesize
  \item Note: ``Cover" is the empirical coverage of the 95\% confidence interval for $\beta_{(1)}$ and `` $\sqrt{n}$Bias " captures the root-$n$ scaled Monte Carlo bias for estimating $\beta_{(1)}$. `` * " indicates that $\hat{\bm{\Omega}}_n^{\texttt{KJ}}$ is not positive semi-definite in some Monte Carlo samples.
     \end{tablenotes}
 \end{table}


 \begin{table}[h!]
\caption{ Simulation results ($d=5,\text{heterogeneity}, \beta_{(2)}$)}\label{table:sim-d5-hetero-2nd}

\begin{adjustbox}{width=.8\textwidth,center}
\begin{tabular}{ccccccc}
     \hline\hline
           & &\multicolumn{4}{c}{  $\beta_{j} = \Phi^{-1}\big( \frac{j}{d+1} \big), \quad \bm{\gamma}_n=0$} \\[0.15cm]
           \cline{3-7}
           &&\multicolumn{4}{c}{ $ \x_{i,n}\sim \mathcal{N}(0, \Sigma), \quad \w_{i,n} = \mathds{1}( \tilde{\w}_{i,n}\geq \Phi^{-1}(0.98) ) $}   \\[0.15cm]
            \cline{3-7} 
             & &  Proposed+KJ & Proposed+HCK & Proposed+HC3 & Proposed+EW & No adjustment+KJ \\[0.15cm]
       \cline{3-7} 
       
  $q_n=1$ &Cover  & 0.97(0.01) & 0.96(0.01) & 0.96(0.01)  & 0.96(0.01)& 0.96(0.01)
    \\
    
 & $\sqrt{n}$Bias & 0.02(0.06) &0.02(0.06) & 0.01(0.06) & -0.05(0.07) &-0.04(0.07)
    \\[0.15cm]

   $q_n=141$ &Cover & 0.97(0.01) & 0.96(0.01) & 0.93(0.01) & 0.94(0.01)& 0.94(0.01)\\

     & $\sqrt{n}$Bias & -0.02(0.04) & -0.02(0.04) & -0.04(0.03) & -0.06(0.06) &-0.07(0.08) \\[0.15cm]
     
       $q_n=281$ &Cover  &0.95(0.01) & 0.94(0.01) & 0.88(0.02)  & 0.89(0.01)& 0.85(0.01)
    \\
    
    & $\sqrt{n}$Bias & -0.03(0.04)& -0.03(0.03)& -0.07(0.03) & -0.11(0.10) & 0.19(0.11)
    \\[0.15cm]

   $q_n=421$ &Cover & 0.95(0.01) & 0.94(0.02) &0.81(0.02) &  0.80(0.01) &0.77(0.01) \\

     & $\sqrt{n}$Bias & -0.03(0.03) & -0.03(0.04)&-0.10(0.06) &  -0.15(0.12)& -0.23(0.15) \\[0.15cm]
     
      $q_n=561$ &Cover &0.95(0.01) & 0.94(0.02) & 0.67(0.02)& 0.65(0.01)& 0.63(0.01)\\
    
     & $\sqrt{n}$Bias & 0.03(0.03) & 0.05(0.06) &0.12(0.08) & -0.17(0.13) &-0.26(0.18) \\[0.15cm]
     
      $q_n=631^*$ &Cover &0.94(0.01) &0.93(0.01) &0.56(0.02) & 0.53(0.01) & 0.50(0.01) \\
    
     & $\sqrt{n}$Bias & -0.07(0.07) & -0.18(0.06) &-0.23(0.08) & -0.26(0.17) & 0.47(0.22) \\[0.15cm]
      \cline{3-7}

     & & \multicolumn{4}{c}{$\x_{i,n} = \mathds{1}(\tilde{\x}_{i,n} > 0), \quad \w_{i,n} \sim \mathcal{N}(0, I) $}    \\[0.15cm]
      \cline{3-7} 
      & &  Proposed+KJ & Proposed+HCK  &  Proposed+HC3 &Proposed+EW &No adjustment+KJ \\[0.15cm]
       \cline{3-7} 
      $q_n=1$ &Cover  & 0.98(0.01) &0.98(0.01) & 0.98(0.01) & 0.96(0.01)& 0.98(0.01)
    \\

       & $\sqrt{n}$Bias & -0.08(0.12) & -0.09(0.12)& -0.10(0.13) & 0.09(0.12) & -0.07(0.10)
       \\[0.15cm]
       
    $q_n=141$ &Cover &0.97(0.01) & 0.97(0.01)& 0.95(0.01)& 0.95(0.01)& 0.97(0.01) \\

        & $\sqrt{n}$Bias &-0.09(0.12) & -0.10(0.13)& -0.12(0.13)& 0.09(0.13)  & -0.08(0.10)\\[0.15cm]
        
          $q_n=281$ &Cover  & 0.97(0.01) & 0.97(0.01) & 0.90(0.01) & 0.87(0.01) & 0.96(0.01)
    \\

    & $\sqrt{n}$Bias & -0.11(0.14) & -0.10(0.14) &-0.15(0.14) & -0.18(0.14) & -0.10(0.11)
    \\[0.15cm]

   $q_n=421$ &Cover &0.96(0.01) & 0.95(0.02) & 0.80(0.02) &  0.75(0.01) &0.94(0.01) \\

     & $\sqrt{n}$Bias &0.14(0.16) &-0.16(0.18) & -0.20(0.18)& -0.22(0.17) & -0.15(0.15) \\[0.15cm]
     
      $q_n=561$ &Cover & 0.96(0.01) &0.94(0.02) &0.63(0.02) & 0.60(0.01)& 0.92(0.01) \\

     & $\sqrt{n}$Bias &0.14(0.18) & 0.20(0.23)&-0.24(0.22) &  -0.30(0.23)& 0.19(0.20)\\[0.15cm]
     
     $q_n=631$ &Cover &0.94(0.01) & 0.93(0.02) & 0.58(0.02) & 0.55(0.01)& 0.52(0.01) \\
    
     & $\sqrt{n}$Bias&0.15(0.20)& 0.24(0.26)&0.28(0.25)& -0.35(0.13) & 0.56(0.24)\\[0.15cm]

      \cline{2-5}

     \hline\hline
     \end{tabular}
     \end{adjustbox}
      \begin{tablenotes}\footnotesize
  \item Note: ``Cover" is the empirical coverage of the 95\% confidence interval for $\beta_{(2)}$ and `` $\sqrt{n}$Bias " captures the root-$n$ scaled Monte Carlo bias for estimating $\beta_{(2)}$. `` * " indicates that $\hat{\bm{\Omega}}_n^{\texttt{KJ}}$ is not positive semi-definite in some Monte Carlo samples.
     \end{tablenotes}
 \end{table}

\section{Case studies}\label{Sec:real-data}

\subsection{Case study I: Charitable giving}

In the past half century, charitable giving by individuals in the United States has grown and it has contributed to more than two percent of the annual GDP since 1998 \citep{list2011market}.
Charitable giving is often driven by altruism, while as suggested by many field experiments, improper policies adopted by the demand side--fundraisers--may impair the supply side's (individual donors) motivation of giving \citep{andreoni2002giving}. Therefore, to effectively attract resources from individual donors, fundraisers need to properly design donation incentives. One of the donation incentives is matching grant which means that a matching donor pledges to match any donation from other donors with certain ratio and up to some threshold. As the price elasticity of matching donation may differ from other donation incentives, we hope to carefully investigate different pricing policies in a matching donation and study if the observed top two performing policies are indeed effective. 

We work with the charitable giving data in \cite{karlan2007does}. \cite{karlan2007does} conduct a field experiment that explores the price elasticity in a matching donation. The field experiment involves $50,083$ previous donors to a political charity. Individuals are randomly assigned to two groups: treatment ($n=33,396$) and control ($n=16,687$). In the control group, individuals receive a standard letter with no matching details. In the treatment group, each potential donor receives a letter with three strategies: (1) match ratio, (2) match size, and (3) ask amount. Within each strategy, individuals are randomly assigned to a sub-policy detailed below. 

For the match ratio strategy, there are three sub-policies: (1) 1:1 (the matching donor contributes the same amount as the individual donor), (2) 2:1 (the matching donor contributes twice as many as the individual donor), (3) 3:1 (the matching donor contributes three times as many as the individual donor). For the match size strategy, there are four sub-policies with different pledge amounts: (1) \$25,000, (2) \$50,000, (3) \$100,000, and (4) unstated amount. For the ask amount strategy, individual donors are asked to give same amount, 25\% more or 50\% more than their largest past donation. 

In our study, we focus on the treatment ``ask amount" with three pricing policies, and we study the subpopulation ($n=7,938$) of unmarried males living in red counties or red states. Red county (state) refers to a county (state) in which residents predominantly vote for the Republican Party. The outcome of interest is the donation amount. We have adjusted $q_n = 1,049$ covariates including the donors' demographic information (26 variables), census information (27 variables), and their two-way interaction terms. Our results are summarized in Table \ref{table:ask-amount}.

\begin{table}[h!]
    \centering
 \begin{tabular}{ccccc}
    \hline
     \hline\\[-2ex] 
       Method & Policies(Ask amount)  & Est (95\% CI)  & $p$-value  &   \\
  \\[-2ex] 
    \hline
      \\[-2ex] 
 Uncalibrated & Same & 0.67 (0.09, 1.25) & 0.023*  &  \\
    \\[-2ex] 
  & 25\% more & 0.66 (0.01, 1.31)  & 0.046*  &\\
  \\[-2ex] 
  & 50\% more & 0.33 (-0.21, 0.86)  & 0.235 &  \\
  \\[-2ex]
  \hline
  \\[-2ex]
Calibrated  &  Same & 0.63 (0.10, 1.20) & 0.025*   &  \\
    \\[-2ex] 
  & 25\% more & 0.56(-0.01, 1.07)  & 0.052 &  \\
  \\[-2ex] 
     \hline\hline
    \end{tabular}
    \caption{\small Estimated treatment effects (Est), 95\% confidence intervals (95\% CI), and two-sided $p$-values for the three  ``ask amount" policies. ``Uncalibrated" refers to the study results obtained without any adjustment, and the confidence intervals are constructed based on normal approximation with the estimated covariance matrix $ \hat{\bm{\Omega}}^{\texttt{KJ}}_n$.   ``Calibrated" refers to our proposed methodology. The computational time is $741$ seconds on a Lenovo NeXtScale nx360m5 node (24 cores per node) equipped with Intel Xeon Haswell processor.   \label{table:ask-amount}}
\end{table}

Results in Table \ref{table:ask-amount} suggest that, without any calibration, asking the donor either to give the same amount or to give 25\% more than their highest past donation seems to be the best policies that significantly increase the donation amount. Specifically, our results from running a simple linear regression model suggest that asking the individual donor to give the same amount of their largest past donation appears to be the most effective pricing policy, and it on average raises \$0.67 (95\% CI = (0.09, 1.25), $p$-value = 0.023) per donor. Asking the individual donor to give 25\% more than their largest past donation is the second most effective policy, with an increased donation by \$0.66 (95\% CI = (0.01, 1.31), $p$-value = 0.046) per donor. 

Because we pick the most effective policies from a random sample, these estimates are potentially subject to the winner's curse bias. We thus apply the proposed method to carefully examine these seemly effective policies. After calibrating for the winner's curse bias, we confirm that the asking for the same amount policy remains as the most effective policy, though with a slightly smaller estimated effect size (Est = \$0.63, 95\% CI = (0.10, 1.20), $p$-value = 0.025). This result is moderately aligned with the analysis in \cite{karlan2007does}, whose results suggest that donors from red states or red counties are more willing to contribute, partially because the collaborating charity is politically oriented. However, for the effect of the second best policy--asking to donate 25\% more than past donation--is shifted downward, and it no longer has significant impact in promoting the donation amount (Est = \$0.56, 95\% CI = (-0.01, 1.07), $p$-value = 0.052). This result might be partially explained by the observation that donors are more motivated by a lower ``price" of donation \citep{warwick2003testing}. In sum, our analyses suggest that the best pricing policy of charitable giving for unmarried males living in the Republican Party dominated voting regions \textcolor{black}{could be} asking for the same amount as their highest previous donation, and asking for more donations may not incentivize the donors to give. Though given the obtained p-value before and after calibration for the second best policy is rather close to the 5 percent threshold, we note that such a conclusion should also be viewed with caution. 


\subsection{Case study II: National supported work (NSW) program}

In this case study, we revisit a dataset from the National Supported Work (NSW) program. The NSW program is a labor training program implemented in 1970's that provides work experience to disadvantaged workers. Our proposed method can also be used to evaluate if the job training program is indeed beneficial for certain groups of workers. To do so, the structural component $\x_{i,n}$ in the model \eqref{eq:linear-model} would include variables representing the interactions between the treatment variable (the job training program) and different subgroup indicator variables of interest. 

We use the field experiment dataset adopted in \cite{dehejia2002propensity} ($n=455$), in which 185 workers are in the treatment group and 260 workers are in the control group. This dataset consists of a treatment indicator variable, an outcome variable measured by the participant post-treatment earnings in 1978, and eight baseline variables ( including age, years of education, an indicator for high school degree, indicators for Black and Hispanic, marital status, and pre-treatment earnings in 1974 and 1975). We further add three sets of additional covariates following the setup in \cite{farrell2015robust}: (1) $\mathds{1}$(1974 earnings =0) and $\mathds{1}$(1975 earnings =0); (2) all first-order interactions; (3) all polynomials up to the 2nd-order. The final dataset includes $51$ covariates. We aim to investigate the effectiveness of the NSW program in four groups of workers:  (1) married Black workers, (2) unmarried Black workers, (3) married Non-Black workers, and (4) unmarried Non-Black workers. The summarized results are shown in Table \ref{table:nsw}.

\begin{table}[h!]
    \centering
 \begin{tabular}{ccccc}
    \hline
     \hline\\[-2ex] 
       Method & Subgroups  & Est (95\% CI) (\$$10^3$)  & $p$-value  &   \\
  \\[-2ex] 
    \hline
      \\[-2ex] 
 Uncalibrated & Black, married & 4.35 (0.89, 7.81) & 0.014* &  \\
    \\[-2ex] 
  & Black, unmarried & 1.10(-0.55, 2.75) & 0.190  & \\
  \\[-2ex] 
  & Non-Black, married &  1.33(-6.63, 9.29) & 0.743  &  \\
  \\[-2ex]
   & Non-Black, unmarried & 1.40(-2.61, 5.40)  &  0.494 &   \\
  \\[-2ex]
  \hline
  \\[-2ex]
Calibrated  & Black, married & 4.41(1.74, 8.50)  & 0.009* &  \\
    \\[-2ex] 
     \hline\hline
    \end{tabular}
    \caption{\small Estimated treatment effects (Est), 95\% confidence intervals (CI), in units \$$10^3$/year, and two-sided $p$-values for the four  subgroups in the NSW study ($n=445$, $q_n=51$). ``Uncalibrated" refers to the study results obtained without any adjustment, and the confidence intervals are constructed based on normal approximation with the estimated covariance matrix $ \hat{\bm{\Omega}}^{\texttt{KJ}}_n$.   ``Calibrated" refers to our proposed methodology. The computational time is $122$ seconds on a Lenovo NeXtScale nx360m5 node (24 cores per node) equipped with Intel Xeon Haswell processor. \label{table:nsw}}
\end{table}

Table \ref{table:nsw} demonstrates that without adjusting for the winner's curse bias, married Black workers (estimated treatment effect =  4.35, 95\% CI = (0.89, 7.81), $p$-value = 0.014, in units $\$10^3$) seem to benefit from the program the most. After accounting for the winner's curse bias issue, our approach \textcolor{black}{potentially} confirms that the treatment effect of the NSW program for the married Black workers is still significant, and the calibrated treatment effect remains roughly the same ( Est= 4.41, 95\% CI = (1.74, 8.50), $p$-value = 0.009, in units $\$10^3$/year).

The dataset collected from the NSW program has been frequently analyzed in the past decade, and our results are largely in-line with current understandings gathered in past studies. For example, although not focusing on the same groups of workers,  \cite{imai2013estimating} suggest that married and unemployed Black workers with some college education have increased their post-treatment earnings for about 38\%. \cite{dehejia2002propensity} show that the job training program yields positive treatment effect on the overall Black participants. In this case study, our approach {may} help to confirm the seemly effective subgroup observed in a random sample while providing a statistically justified estimate accounting for the winner's curse bias.

\section{Concluding remarks}\label{Sec:discussion}

In this article, we have introduced an approach to evaluate multiple best policies based on resampling in the context of a linear model with many covariates. While our approach is numerically reliable and theoretically grounded, it is worthwhile to generalize our framework so that the policy effects can be estimated with other off-shelf methods that are, for example, robust to the high-dimensional confounders or to the presence of interference and noncompliance. Our current theoretical analysis suggests that our proposed approach can be readily extended as long as the covariance matrix between different policies can be consistently estimated. It is thus desirable for us to provide a general framework to broaden future applications for other disciplines in general. 

\bigskip

\noindent\textbf{Acknowledgement.} The research of Jingshen Wang is supported in part by the National Science Foundation (DMS 2015325) and the National Institute of Health (R01MH125746).

\clearpage

\bibliographystyle{jasa} 
\bibliography{reference}

@article{stock2008heteroskedasticity,
  title={Heteroskedasticity-robust standard errors for fixed effects panel data regression},
  author={Stock, James H and Watson, Mark W},
  journal={Econometrica},
  volume={76},
  number={1},
  pages={155--174},
  year={2008},
  publisher={Wiley Online Library}
}

@article{kueck2017estimation,
  title={Estimation and Inference of Treatment Effects with $ L\_2 $-Boosting in High-Dimensional Settings},
  author={Kueck, Jannis and Luo, Ye and Spindler, Martin and Wang, Zigan},
  journal={arXiv preprint arXiv:1801.00364},
  year={2017}
}

@article{fan2021optimal,
  title={Optimal covariate balancing conditions in propensity score estimation},
  author={Fan, Jianqing and Imai, Kosuke and Lee, Inbeom and Liu, Han and Ning, Yang and Yang, Xiaolin},
  journal={Journal of Business \& Economic Statistics},
  pages={1--14},
  year={2021},
  publisher={Taylor \& Francis}
}

@article{zhang2014confidence,
  title={Confidence intervals for low dimensional parameters in high dimensional linear models},
  author={Zhang, Cun-Hui and Zhang, Stephanie S},
  journal={Journal of the Royal Statistical Society: Series B (Statistical Methodology)},
  volume={76},
  number={1},
  pages={217--242},
  year={2014},
  publisher={Wiley Online Library}
}

@article{van2014asymptotically,
  title={On asymptotically optimal confidence regions and tests for high-dimensional models},
  author={Van de Geer, Sara and B{\"u}hlmann, Peter and Ritov, Ya’acov and Dezeure, Ruben},
  journal={The Annals of Statistics},
  volume={42},
  number={3},
  pages={1166--1202},
  year={2014},
  publisher={Institute of Mathematical Statistics}
}

@article{fan2016guarding,
  title={Guarding against spurious discoveries in high dimensions},
  author={Fan, Jianqing and Zhou, Wen-Xin},
  journal={The Journal of Machine Learning Research},
  volume={17},
  number={1},
  pages={7123--7156},
  year={2016},
  publisher={JMLR. org}
}

@article{fan2007many,
  title={To how many simultaneous hypothesis tests can normal, student's t or bootstrap calibration be applied?},
  author={Fan, Jianqing and Hall, Peter and Yao, Qiwei},
  journal={Journal of the American Statistical Association},
  volume={102},
  number={480},
  pages={1282--1288},
  year={2007},
  publisher={Taylor \& Francis}
}

@article{fan2011sparse,
  title={Sparse high-dimensional models in economics},
  author={Fan, Jianqing and Lv, Jinchi and Qi, Lei},
  journal={Annu. Rev. Econ.},
  volume={3},
  number={1},
  pages={291--317},
  year={2011},
  publisher={Annual Reviews}
}

@article{fan2018discoveries,
  title={Are discoveries spurious? Distributions of maximum spurious correlations and their applications},
  author={Fan, Jianqing and Shao, Qi-Man and Zhou, Wen-Xin},
  journal={Annals of statistics},
  volume={46},
  number={3},
  pages={989},
  year={2018},
  publisher={NIH Public Access}
}

@book{fan2020statistical,
  title={Statistical foundations of data science},
  author={Fan, Jianqing and Li, Runze and Zhang, Cun-Hui and Zou, Hui},
  year={2020},
  publisher={Chapman and Hall/CRC}
}

@article{chernozhukov2013gaussian,
  title={Gaussian approximations and multiplier bootstrap for maxima of sums of high-dimensional random vectors},
  author={Chernozhukov, Victor and Chetverikov, Denis and Kato, Kengo},
  journal={The Annals of Statistics},
  volume={41},
  number={6},
  pages={2786--2819},
  year={2013},
  publisher={Institute of Mathematical Statistics}
}

@article{fan2014challenges,
  title={Challenges of big data analysis},
  author={Fan, Jianqing and Han, Fang and Liu, Han},
  journal={National science review},
  volume={1},
  number={2},
  pages={293--314},
  year={2014},
  publisher={Oxford University Press}
}

@article{andrews2000inconsistency,
  title={Inconsistency of the bootstrap when a parameter is on the boundary of the parameter space},
  author={Andrews, Donald WK},
  journal={Econometrica},
  pages={399--405},
  year={2000},
  publisher={JSTOR}
}

@article{belloni2014inference,
  title={Inference on treatment effects after selection among high-dimensional controls},
  author={Belloni, Alexandre and Chernozhukov, Victor and Hansen, Christian},
  journal={The Review of Economic Studies},
  volume={81},
  number={2},
  pages={608--650},
  year={2014},
  publisher={Oxford University Press}
}

@misc{chernozhukov2018double,
  title={Double/debiased machine learning for treatment and structural parameters},
  author={Chernozhukov, Victor and Chetverikov, Denis and Demirer, Mert and Duflo, Esther and Hansen, Christian and Newey, Whitney and Robins, James},
  year={2018},
  publisher={Oxford University Press Oxford, UK}
}

@article{vazquez2021identification,
  title={Identification and estimation of spillover effects in randomized experiments},
  author={Vazquez-Bare, Gonzalo},
  journal={Journal of Econometrics},
  volume={Forthcoming},
  year={2021}
}

@article{cattaneo2019two,
  title={Two-step estimation and inference with possibly many included covariates},
  author={Cattaneo, Matias D and Jansson, Michael and Ma, Xinwei},
  journal={The Review of Economic Studies},
  volume={86},
  number={3},
  pages={1095--1122},
  year={2019},
  publisher={Oxford University Press}
}

@article{abadie2018econometric,
  title={Econometric methods for program evaluation},
  author={Abadie, Alberto and Cattaneo, Matias D},
  journal={Annual Review of Economics},
  volume={10},
  pages={465--503},
  year={2018},
  publisher={Annual Reviews}
}

@article{athey2017state,
  title={The state of applied econometrics: Causality and policy evaluation},
  author={Athey, Susan and Imbens, Guido W},
  journal={Journal of Economic Perspectives},
  volume={31},
  number={2},
  pages={3--32},
  year={2017}
}

@article{farrell2015robust,
  title={Robust inference on average treatment effects with possibly more covariates than observations},
  author={Farrell, Max H},
  journal={Journal of Econometrics},
  volume={189},
  number={1},
  pages={1--23},
  year={2015},
  publisher={Elsevier}
}

@article{charness2009origin,
  title={The origin of the winner's curse: a laboratory study},
  author={Charness, Gary and Levin, Dan},
  journal={American Economic Journal: Microeconomics},
  volume={1},
  number={1},
  pages={207--36},
  year={2009}
}

@article{xie2009confidence,
  title={Confidence intervals for population ranks in the presence of ties and near ties},
  author={Xie, Minge and Singh, Kesar and Zhang, Cun-Hui},
  journal={Journal of the American Statistical Association},
  volume={104},
  number={486},
  pages={775--788},
  year={2009},
  publisher={Taylor \& Francis}
}

@article{hall2010bootstrap,
  title={Bootstrap confidence intervals and hypothesis tests for extrema of parameters},
  author={Hall, Peter and Miller, Hugh},
  journal={Biometrika},
  volume={97},
  number={4},
  pages={881--892},
  year={2010},
  publisher={Oxford University Press}
}

@inproceedings{lee2018winner,
  title={Winner's curse: Bias estimation for total effects of features in online controlled experiments},
  author={Lee, Minyong R and Shen, Milan},
  booktitle={Proceedings of the 24th ACM SIGKDD International Conference on Knowledge Discovery \& Data Mining},
  pages={491--499},
  year={2018}
}

@article{efron2011tweedie,
  title={Tweedie’s formula and selection bias},
  author={Efron, Bradley},
  journal={Journal of the American Statistical Association},
  volume={106},
  number={496},
  pages={1602--1614},
  year={2011},
  publisher={Taylor \& Francis}
}

@article{guo2020inference,
  title={Inference on selected subgroups in clinical trials},
  author={Guo, Xinzhou and He, Xuming},
  journal={Journal of the American Statistical Association},
  pages={1--19},
  year={2020},
  publisher={Taylor \& Francis}
}

@incollection{dawid1994selection,
  title={Selection paradoxes of Bayesian inference},
  author={Dawid, AP},
  booktitle={Multivariate analysis and its applications},
  pages={211--220},
  year={1994},
  publisher={Institute of Mathematical Statistics}
}

@article{chernozhukov2013intersection,
  title={Intersection bounds: estimation and inference},
  author={Chernozhukov, Victor and Lee, Sokbae and Rosen, Adam M},
  journal={Econometrica},
  volume={81},
  number={2},
  pages={667--737},
  year={2013},
  publisher={Wiley Online Library}
}

@article{rai2018statistical,
  title={Statistical inference for treatment assignment policies},
  author={Rai, Yoshiyasu},
  journal={Unpublished Manuscript},
  year={2018}
}

@article{anatolyev2012inference,
  title={Inference in regression models with many regressors},
  author={Anatolyev, Stanislav},
  journal={Journal of Econometrics},
  volume={170},
  number={2},
  pages={368--382},
  year={2012},
  publisher={Elsevier}
}

@article{long2000using,
  title={Using heteroscedasticity consistent standard errors in the linear regression model},
  author={Long, J Scott and Ervin, Laurie H},
  journal={The American Statistician},
  volume={54},
  number={3},
  pages={217--224},
  year={2000},
  publisher={Taylor \& Francis}
}

@article{verdier2020estimation,
  title={Estimation and inference for linear models with two-way fixed effects and sparsely matched data},
  author={Verdier, Valentin},
  journal={Review of Economics and Statistics},
  volume={102},
  number={1},
  pages={1--16},
  year={2020},
  publisher={MIT Press One Rogers Street, Cambridge, MA 02142-1209, USA journals-info~…}
}

@article{cattaneo2018alternative,
  title={Alternative asymptotics and the partially linear model with many regressors},
  author={Cattaneo, Matias D and Jansson, Michael and Newey, Whitney K},
  journal={Econometric Theory},
  volume={34},
  number={2},
  pages={277--301},
  year={2018},
  publisher={Cambridge University Press}
}

@article{el2013robust,
  title={On robust regression with high-dimensional predictors},
  author={El Karoui, Noureddine and Bean, Derek and Bickel, Peter J and Lim, Chinghway and Yu, Bin},
  journal={Proceedings of the National Academy of Sciences},
  volume={110},
  number={36},
  pages={14557--14562},
  year={2013},
  publisher={National Acad Sciences}
}

@article{mammen1993bootstrap,
  title={Bootstrap and wild bootstrap for high dimensional linear models},
  author={Mammen, Enno},
  journal={The annals of statistics},
  pages={255--285},
  year={1993},
  publisher={JSTOR}
}

@article{huber1973robust,
  title={Robust regression: asymptotics, conjectures and Monte Carlo},
  author={Huber, Peter J},
  journal={The annals of statistics},
  pages={799--821},
  year={1973},
  publisher={JSTOR}
}

@article{eicker1963asymptotic,
  title={Asymptotic normality and consistency of the least squares estimators for families of linear regressions},
  author={Eicker, Friedhelm},
  journal={The annals of mathematical statistics},
  pages={447--456},
  year={1963},
  publisher={JSTOR}
}

@article{white1980heteroskedasticity,
  title={A heteroskedasticity-consistent covariance matrix estimator and a direct test for heteroskedasticity},
  author={White, Halbert},
  journal={Econometrica: journal of the Econometric Society},
  pages={817--838},
  year={1980},
  publisher={JSTOR}
}

@article{el2018can,
  title={Can we trust the bootstrap in high-dimensions? The case of linear models},
  author={El Karoui, Noureddine and Purdom, Elizabeth},
  journal={The Journal of Machine Learning Research},
  volume={19},
  number={1},
  pages={170--235},
  year={2018},
  publisher={JMLR. org}
}

@article{karlan2007does,
  title={Does price matter in charitable giving? Evidence from a large-scale natural field experiment},
  author={Karlan, Dean and List, John A},
  journal={American Economic Review},
  volume={97},
  number={5},
  pages={1774--1793},
  year={2007}
}

@book{mammen2012does,
  title={When does bootstrap work?: asymptotic results and simulations},
  author={Mammen, Enno},
  volume={77},
  year={2012},
  publisher={Springer Science \& Business Media}
}

@article{claggett2014meta,
	title={Meta-analysis with fixed, unknown, study-specific parameters},
	author={Claggett, Brian and Xie, Minge and Tian, Lu},
	journal={Journal of the American Statistical Association},
	volume={109},
	number={508},
	pages={1660--1671},
	year={2014},
	publisher={Taylor \& Francis}
}

@article{cattaneo2018inference,
  title={Inference in linear regression models with many covariates and heteroscedasticity},
  author={Cattaneo, Matias D and Jansson, Michael and Newey, Whitney K},
  journal={Journal of the American Statistical Association},
  volume={113},
  number={523},
  pages={1350--1361},
  year={2018},
  publisher={Taylor \& Francis}
}

@article{jochmans2020heteroscedasticity,
	title={Heteroscedasticity-robust inference in linear regression models with many covariates},
	author={Jochmans, Koen},
	journal={Journal of the American Statistical Association},
	pages={1--10},
	year={2020},
	publisher={Taylor \& Francis}
}

@techreport{andrews2019inference,
  title={Inference on winners},
  author={Andrews, Isaiah and Kitagawa, Toru and McCloskey, Adam},
  year={2019},
  institution={National Bureau of Economic Research}
}

@article{imai2013estimating,
  title={Estimating treatment effect heterogeneity in randomized program evaluation},
  author={Imai, Kosuke and Ratkovic, Marc},
  journal={The Annals of Applied Statistics},
  volume={7},
  number={1},
  pages={443--470},
  year={2013},
  publisher={Institute of Mathematical Statistics}
}

@article{list2011market,
  title={The market for charitable giving},
  author={List, John A},
  journal={Journal of Economic Perspectives},
  volume={25},
  number={2},
  pages={157--80},
  year={2011}
}

@article{andreoni2002giving,
  title={Giving according to GARP: An experimental test of the consistency of preferences for altruism},
  author={Andreoni, James and Miller, John},
  journal={Econometrica},
  volume={70},
  number={2},
  pages={737--753},
  year={2002},
  publisher={JSTOR}
}

@book{warwick2003testing,
  title={Testing, Testing 1, 2, 3: Raise More Money with Direct Mail Tests},
  author={Warwick, Mal},
  year={2003},
  publisher={John Wiley \& Sons}
}

@article{lalonde1986evaluating,
  title={Evaluating the econometric evaluations of training programs with experimental data},
  author={LaLonde, Robert J},
  journal={The American economic review},
  pages={604--620},
  year={1986},
  publisher={JSTOR}
}

@article{dehejia2002propensity,
  title={Propensity score-matching methods for nonexperimental causal studies},
  author={Dehejia, Rajeev H and Wahba, Sadek},
  journal={Review of Economics and statistics},
  volume={84},
  number={1},
  pages={151--161},
  year={2002},
  publisher={MIT Press 238 Main St., Suite 500, Cambridge, MA 02142-1046, USA journals~…}
}

@article {686054,
	title = {Inference for Losers},
	journal = {American Economic Association Papers and Proceedings},
	year = {2022+},
	author = {Isaiah Andrews and Dillon Bowen and Toru Kitagawa and Adam McCloskey}
}

\clearpage

\appendix
\addcontentsline{toc}{section}{Appendix} 
\part{Appendix} 
\parttoc 

\setcounter{page}{1}

\setcounter{table}{0}
\renewcommand\thetable{\Alph{section}.\arabic{table}}

\numberwithin{equation}{section}

\section{Theorem 1}

\subsection{Review of notations and assumptions}

We denote the sample $\{ (y_{i,n}, \x_{i,n}', \w_{i,n}')'\}_{i=1}^n$ as $\{ \z_{i,n}\}_{i=1}^n$. Recall $u_{i,n}$ is the random error in the considered linear model:
\begin{align}\label{supp:eq:linear-model}
y_{i,n} = \x_{i,n}'\bbeta +\w_{i,n}'\bgamma_n + u_{i,n}, \quad i=1, \ldots, n,
\end{align}
we define 
\begin{align}\label{supp:eq:varepsion-v-definition}
\varepsilon_{i,n} = u_{i,n} - \mathbb{E}[ u_{i,n} |  \{ \w_{i,n} \}_{i=1}^n, \{ \x_{i,n} \}_{i=1}^n],\quad \v_{i,n} = \x_{i,n} - \mathbb{E}[\x_{i,n} |  \{ \w_{i,n} \}_{i=1}^n], \quad i=1, \ldots, n.  
\end{align}
Let $e_{i,n} = \mathbb{E}[ u_{i,n} |  \{ \w_{i,n} \}_{i=1}^n, \{ \x_{i,n} \}_{i=1}^n]$, we further denote 
\begin{align*}
& \hat{u}_i = \sum_{j=1}^n (\M_n)_{i,j}(y_{j,n} -\x_{j,n}'\hat{\bbeta} ),\quad \hat{\v}_{i,n} = \sum_{i=1}^n (\M_{n})_{i,j} \x_{j,n}, \\
&  (\M_{n})_{i,j} = \mathds{1}( i=j ) -  \w'_{i,n}\Big( \sum_{k=1}^n \w_{k,n}\w_{k,n}' \Big)^{-1} \w_{j,n},\\
&  \sigma_{i,n}^2 = \mathbb{E}[\varepsilon_{i,n}^2 |  \{ \w_{i,n} \}_{i=1}^n, \{ \x_{i,n} \}_{i=1}^n ], \quad  \tilde{\v}_{i,n} = \sum_{j=1}^n (\bm{M}_n)_{i,j}\v_{j,n}, \\ 
&  \rho_{n}^{1} = \frac{ \sum_{i=1}^n\mathbb{E}\big[ e_{i,n}^2  \big]  }{n},\quad  \rho_{n}^{2} = \frac{ \sum_{i=1}^n\mathbb{E}\Big[ \mathbb{E}\big(e_{i,n} |  \{ \w_{i,n} \}_{i=1}^n \big)^2  \Big]  }{n}, \\
& \bm{Q}_{i,n} = \mathbb{E}\Big[ \x_{i,n} - \big(\sum_{j=1}^n \mathbb{E}[ \w_{j,n}\w_{j,n}' ]  \big)^{-1} \sum_{j=1}^n \mathbb{E}[ \w_{j,n}\x'_{j,n} ]   \Big|  \{ \w_{i,n} \}_{i=1}^n  \Big].
\end{align*}
We will use the notation $\mathbb{P}(\cdot | \{ \z_{i,n}\}_{i=1}^n)$ to refer to the probability that is conditional on the random variables $\{ \z_{i,n}\}_{i=1}^n$. 

For a policy $j$, recall our definition of the near tie set: 
\begin{align*}
\mathcal{H}_{(j)} = \big\{ k: | \beta_k  - \beta_{(j)} | = o(n^{-\frac{1}{2}}),\ k=1, \ldots, d \big\}.
\end{align*}
This suggests that $\forall k \in \mathcal{H}_{(j)} $, there exist a sequence $\delta_{n} \rightarrow 0$ as $n\rightarrow 0$, such that 
\begin{align*}
\beta_k = \beta_{(j)} + n^{-\frac{1}{2}} \cdot \delta_{n}, \quad \forall k \in \mathcal{H}_{(j)} .
\end{align*}
Next, let $\hat{\bm{e}}_j $ denote a $d$-dimensional (sparse) vector based on the estimated tie set $\widehat{\mathcal{H}}_{(j)} $ with
\begin{align*}
\hat{\bm{e}}_j = (\hat{e}_{j,1}, \ldots, \hat{e}_{j, d}),\quad  \hat{e}_{jk} = \frac{\mathds{1}( k \in  \widehat{\mathcal{H}}_{(j)} )}{ |\hat{\mathcal{H}}_{(j)} | }, \quad k=1, \ldots, d,
\end{align*}
and define a $d-$dimensional sparse index vector  based on the true near-tie set ${\mathcal{H}}_{(j)} $ as 
	\begin{align}\label{eq:e-j-definition}
    {\bm{e}}_j = ({e}_{j,1}, \ldots, {e}_{j, d}),\quad  {e}_{jk} = \frac{\mathds{1}( k \in  {\mathcal{H}}_{(j)} )}{ |{\mathcal{H}}_{(j)} | }, \quad k=1, \ldots, d.
	\end{align}
	
We make following assumptions throughout this section: 

\begin{assumption}[Sampling] \label{supp:Assumption:Sampling}
The errors $\varepsilon_{i,n}$ are uncorrelated across $i$ conditional on $\{ \x_{i,n} \}_{i=1}^n$ and $\{ \w_{i,n} \}_{i=1}^n$. Let $\{ N_1, \ldots, N_{G_n} \}$ represents a partition of $\{1,\ldots, n\}$ with $\underset{ 1\leq g\leq G_n}{\max}{ |N_g| } = O(1)$ such that $ \{ (\varepsilon_{i,n}, \v_{i,n} ), i\in N_g \} $ (defined in \eqref{supp:eq:varepsion-v-definition}) are independent across $g$ conditional on $\{ \w_{i,n} \}_{i=1}^n$. 
\end{assumption}

\begin{assumption}[Design] \label{supp:Assumption:Design}
The dimension of the covariates $\w_{i,n}$ satisfies that $\limsup_{n\rightarrow \infty} q_n/n < 1$. The minimum eigenvalue of of the matrix $\sum_{i=1}^n \w_{i,n}\w_{i,n}'$ is bounded away from 0 with probability approaching one, that is  
\begin{align*}
 \lim_{n\rightarrow \infty}\mathbb{P}\Big( \lambda_{\min} \big( \sum_{i=1}^n \w_{i,n}\w_{i,n}'\big) >0 \big)\Big) = 1. 
\end{align*}
Lastly, 
\begin{align*}
& \underset{1\leq i\leq n}{\max}\Big\{  \mathbb{E}[\varepsilon_{i,n}^4 |  \{ \w_{i,n} \}_{i=1}^n, \{ \x_{i,n} \}_{i=1}^n ],\ \frac{1}{\sigma_{i,n}^2},\\
& \quad \quad\quad \quad  \mathbb{E}[\v_{i,n}^4 | \{ \w_{i,n} \}_{i=1}^n ], \ 1/\lambda_{\min} \big( \frac{\sum_{i=1}^n \mathbb{E}\big[  \tilde{\v}_{i,n}\tilde{\v}_{i,n}'|  \{ \w_{i,n} \}_{i=1}^n \big] }{n} \big) \Big\} = O_p(1).
\end{align*}
\end{assumption}

\begin{assumption}[Linear model approximation] \label{supp:Assumption:Approximation}
 $ \sum_{i=1}^n \mathbb{E}[|| Q_{i,n}||^2 ] /n = O(1)$, $\rho_{n}^{1} + n( \rho_{n}^{1} - \rho_{n}^{2} ) +  \rho_{n}^{2} \cdot \sum_{i=1}^n \mathbb{E}[|| Q_{i,n}||^2 ]  = o(1)$, and $\underset{1\leq i\leq n}{\max} || \hat{\v}_{i,n}||/\sqrt{n} = o_p(1)$, $n\rho_n^{1} = O(1)$. 
\end{assumption}

\begin{assumption}[Variance estimation] $\lim_{n\rightarrow \infty}\mathbb{P}( \min_i  \big( \bm{M}_n \big)_{i,i} >0  ) = 1$, 
\begin{align*}
\mathbb{P}\Big( \min_i \big( \bm{M}_n \big)_{i,i}>0 \Big) = O_p(1), \quad \frac{ \sum_{i=1}^n || \tilde{ \bm{Q} }_{i,n}||^4 }{n} = O_p(1),
\end{align*}
and $\max_i ||\mu_{i,n}||/\sqrt{n} = o_p(1)$ with $\mu_{i,n} = \mathbb{E}\big[ y_{i,n} | \{\mathbf{x}_{i,n} \}_{i=1}^n, \{\mathbf{w}_{i,n}\}_{i=1}^n \big]$.
\label{supp:assumption:variance_estimation}
\end{assumption}

\begin{assumption}[Policy effect sizes]\label{supp:Assumption:seperation}
For $\delta \in (0,\frac{1}{2})$, the asymptotic distance between the effects of policy $k$ and $j$ diverges as $n\rightarrow \infty$: 
\begin{align*}
n^{\delta} \cdot \min_{k\not\in \mathcal{H}_{(j)}}\big| \beta_{(j)} - \beta_k \big| \rightarrow \infty, \text{ as }n\rightarrow \infty, \quad j=1, \ldots, d. 
\end{align*}
\end{assumption}

\subsection{Proof of Theorem \ref{Theorem:resampling-consistency}}

In this section, we show the following theorem holds: 
\begin{thm}\label{supp:Theorem:resampling-consistency}
Under Assumptions \ref{supp:Assumption:Sampling}-\ref{supp:Assumption:seperation}, for any $t\in \mathbb{R}$, for the resampled statistics, the following holds precisely
\begin{align*}
\mathbb{P}\left(  \frac{ \sqrt{n}\big(\tilde{\beta}_{(j)}^* - \tilde{\beta}_{(j)}\big) }{ (\hat{\bm{e}}_j'\hat{\bm{\Omega}}^{\texttt{KJ}}_n \hat{\bm{e}}_j  )^{\frac{1}{2}} } \leq t\Big | \{ (y_{i,n}, \x_{i,n}', \w_{i,n}')' \}_{i=1}^n \right) = \Phi(t).
\end{align*}
For the original statistics, it holds that 
\begin{align*}
 \lim_{n\rightarrow \infty} \mathbb{P}\left(  \frac{ \sqrt{n}\big(\tilde{\beta}_{(j)} - {\beta}_{(j)}\big) }{ (\hat{\bm{e}}_j'\hat{\bm{\Omega}}^{\texttt{KJ}}_n \hat{\bm{e}}_j  )^{\frac{1}{2}} } \leq t\right) = \Phi(t). 
\end{align*}
In addition, we show that 
	\begin{align*}
\lim_{n\rightarrow \infty }\mathbb{P}\Big( 	\mathbb{P}\big( 	\tilde{\beta}_{(j)}^* \leq  \beta_{(j)}   |\{ \z_{i,n}\}_{i=1}^n   \big) \leq s \Big)  = s.
	\end{align*}
\end{thm}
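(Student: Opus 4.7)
The plan is to reduce all three claims to the asymptotic theory for linear contrasts of OLS with many covariates, once the single technical kernel---consistent identification of the near-tie set---is in place. The first step is to establish
\[
\mathbb{P}(\hat{\mathcal{H}}_{(j)} = \mathcal{H}_{(j)}) \to 1.
\]
From the CLT of \cite{cattaneo2018inference} / \cite{jochmans2020heteroscedasticity}, $\max_k |\hat{\beta}_k - \beta_k| = O_p(n^{-1/2})$, and the Gaussian resampling gives $\max_k|\hat{\beta}^*_k - \hat{\beta}_k| = O_p(n^{-1/2})$ conditional on the data (with $d$ fixed); $1$-Lipschitzness of the $j$-th order statistic then yields $|\hat{\beta}^*_{(j)} - \beta_{(j)}| = O_p(n^{-1/2})$. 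For $k \in \mathcal{H}_{(j)}$, $\beta_k - \beta_{(j)} = o(n^{-1/2})$ gives $\hat{\beta}^*_k - \hat{\beta}^*_{(j)} = O_p(n^{-1/2}) = o_p(n^{-\delta})$, so the membership inequality $\hat{\beta}^*_{(j)} - \bl \leq \hat{\beta}^*_k \leq \hat{\beta}^*_{(j)} + \br$ is satisfied (provided $\bl, \br = O(n^{-\delta})$); for $k \notin \mathcal{H}_{(j)}$, Assumption 5 forces $|\beta_k - \beta_{(j)}| \gg n^{-\delta}$, which dominates both the $O_p(n^{-1/2})$ noise and the $O(n^{-\delta})$ bandwidth, so the inequality fails with probability tending to one.

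On the event $\mathcal{E}_n := \{\hat{\mathcal{H}}_{(j)} = \mathcal{H}_{(j)}\}$ one has $\hat{\bm{e}}_j = \bm{e}_j$ (the deterministic sparse vector built from the population near-tie set), and
\[
\tilde{\beta}_{(j)} - \beta_{(j)} = \bm{e}_j'(\hat{\bm{\beta}} - \bm{\beta}) + |\mathcal{H}_{(j)}|^{-1}\sum_{k \in \mathcal{H}_{(j)}}(\beta_k - \beta_{(j)}),
\]
whose second term is $o(n^{-1/2})$ by definition of $\mathcal{H}_{(j)}$. The CLT of \cite{cattaneo2018inference} / \cite{jochmans2020heteroscedasticity} applied to the fixed contrast $\bm{e}_j$, combined with Jochmans's consistency of $\hat{\bm{e}}_j'\hat{\bm{\Omega}}_n^{\texttt{KJ}}\hat{\bm{e}}_j$ for the asymptotic variance of $\sqrt{n}\bm{e}_j'(\hat{\bm{\beta}} - \bm{\beta})$ under Assumption 4, and Slutsky, yields
\[
T_n := \frac{\sqrt{n}(\tilde{\beta}_{(j)} - \beta_{(j)})}{(\hat{\bm{e}}_j'\hat{\bm{\Omega}}_n^{\texttt{KJ}}\hat{\bm{e}}_j)^{1/2}} \toDist N(0,1),
\]
which is the second claim. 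For the first claim, on $\mathcal{E}_n$ one has $\tilde{\beta}^*_{(j)} - \tilde{\beta}_{(j)} = \bm{e}_j'(\hat{\bm{\beta}}^* - \hat{\bm{\beta}})$, which is \emph{exactly} Gaussian given the data with variance $\bm{e}_j'\hat{\bm{\Omega}}_n^{\texttt{KJ}}\bm{e}_j/n$ by the resampling definition, so the ratio is exactly $N(0,1)$ on $\mathcal{E}_n$; the vanishing-probability complement contributes negligibly.

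The third claim follows by combining the first two. The exact conditional Gaussianity yields
\[
\mathbb{P}\bigl(\tilde{\beta}^*_{(j)} \leq \beta_{(j)} \,\big|\, \{\z_{i,n}\}_{i=1}^n\bigr) = \Phi(-T_n),
\]
and $T_n \toDist N(0,1)$ with the continuous mapping theorem gives $\Phi(-T_n) \toDist \Phi(-Z) \sim \text{Uniform}(0,1)$ by the probability integral transform, which is precisely the desired conclusion $\mathbb{P}(\mathbb{P}(\tilde{\beta}^*_{(j)} \leq \beta_{(j)} \mid \{\z_{i,n}\}_{i=1}^n) \leq s) \to s$. The main obstacle is the first step: $\hat{\mathcal{H}}_{(j)}$ is built from the noisy resampled order statistic $\hat{\beta}^*_{(j)}$, so one must simultaneously propagate the OLS estimation error and the resampling fluctuation through the ranking mechanism. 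The delicate balancing is that $\bl, \br = O(n^{-\delta})$ with $\delta < 1/2$ must be large enough to absorb the $O_p(n^{-1/2})$ noise yet small enough to exclude non-tied policies; Assumption 5 with $\delta \in (0, 1/2)$ provides exactly the scale-separation needed to make both feasible simultaneously.
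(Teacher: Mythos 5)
Your proposal is correct and follows essentially the same route as the paper's proof: establish consistency of the estimated near-tie set (the paper's Lemma~1, proved via the scale separation $n^{-1/2}\ll n^{-\delta}\ll\min_{k\notin\mathcal{H}_{(j)}}|\beta_{(j)}-\beta_k|$ from Assumption~5), then on that event reduce $\tilde{\beta}_{(j)}$ to the fixed contrast $\bm{e}_j'\hat{\bbeta}$ handled by the Jochmans/Cattaneo--Jansson--Newey CLT, use the exact conditional Gaussianity of the resample for the first claim, and conclude the uniformity statement via $\Phi(-T_n)$ and the probability integral transform. The decomposition, the key lemma, and the final bounded-convergence step all match the paper's argument.
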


\begin{proof}

Our proof of Theorem 1 entails the following steps: 
\begin{description}
\item[Step 1.] Under Assumptions \ref{supp:Assumption:Sampling}-\ref{supp:Assumption:Approximation}, \cite{jochmans2020heteroscedasticity} has shown the following holds 
\begin{align*}
(\hat{\bm{\Omega}}^{\texttt{KJ}}_n)^{-\frac{1}{2}}\sqrt{n}(\hat{ \bbeta} - \bbeta ) \leadsto N(0, \bm{I}_d),
\end{align*}
where $ \bm{I}_d$ is a $d$-dimensional identity matrix. 
Therefore, following the definition of $\bm{e}_j$ in Eq \eqref{eq:e-j-definition}, we have 
	\begin{align}
({\bm{e}}_j'\hat{\bm{\Omega}}^{\texttt{KJ}}_n {\bm{e}}_j  )^{-\frac{1}{2}}\cdot \sqrt{n}\big(  \frac{ \sum_{k\in  {\mathcal{H}}_{(j)}} \hat{\beta}_k }{ | {\mathcal{H}}_{(j)} |} - \frac{ \sum_{k\in  {\mathcal{H}}_{(j)}} {\beta}_k }{ | {\mathcal{H}}_{(j)} |}  \big)  \leadsto \mathcal{N}(0,1). \label{eq:proof-of-theorem-1-step-1}
	\end{align}
	
	\item[Step 2.] Because of Lemma \ref{Lemma:tie-set-consistency}, we have
	\[
	\lim_{n\rightarrow\infty} \mathbb{P}(\hat{\mathcal{H}}_{(j)} \neq \mathcal{H}_{(j)}) = 0.
	\]
	Combing this with \eqref{eq:proof-of-theorem-1-step-1}, we have 
		\begin{align*}
	    \Phi(t) &=  \lim_{n\rightarrow\infty}  \mathbb{P}\Big(({\bm{e}}_j'\hat{\Sigma}^{\texttt{KJ}}_n {\bm{e}}_j  )^{-\frac{1}{2}}\cdot \sqrt{n}\big(  \frac{ \sum_{k\in  {\mathcal{H}}_{(j)}} \hat{\beta}_k }{ | {\mathcal{H}}_{(j)} |} - \frac{ \sum_{k\in  {\mathcal{H}}_{(j)}} {\beta}_k }{ | {\mathcal{H}}_{(j)} |}  \big) \le t \Big) \\
	    &=  \lim_{n\rightarrow\infty} \Bigg[ 
	    \mathbb{P}\Big(({\bm{e}}_j'\hat{\Sigma}^{\texttt{KJ}}_n {\bm{e}}_j  )^{-\frac{1}{2}}\cdot \sqrt{n}\big(  \frac{ \sum_{k\in  {\mathcal{H}}_{(j)}} \hat{\beta}_k }{ | {\mathcal{H}}_{(j)} |} - \frac{ \sum_{k\in  {\mathcal{H}}_{(j)}} {\beta}_k }{ | {\mathcal{H}}_{(j)} |}  \big) \le t \Big| \hat{\mathcal{H}}_{(j)} = \mathcal{H}_{(j)}\Big) \mathbb{P}(\hat{\mathcal{H}}_{(j)} = \mathcal{H}_{(j)})  + o_\mathbb{P}(1) \Bigg]\\
	    &= \lim_{n\rightarrow\infty}  \mathbb{P}\Big(({\bm{e}}_j'\hat{\Sigma}^{\texttt{KJ}}_n {\bm{e}}_j  )^{-\frac{1}{2}}\cdot \sqrt{n}\big(  \frac{ \sum_{k\in  {\mathcal{H}}_{(j)}} \hat{\beta}_k }{ | {\mathcal{H}}_{(j)} |} - \frac{ \sum_{k\in  {\mathcal{H}}_{(j)}} {\beta}_k }{ | {\mathcal{H}}_{(j)} |}  \big) \le t \Big| \hat{\mathcal{H}}_{(j)} = \mathcal{H}_{(j)}\Big),
	\end{align*}
	in which $o_\mathbb{P}(1)$ is lower bounded by zero and upper bounded by $\mathbb{P}(\hat{\mathcal{H}}_{(j)} \neq \mathcal{H}_{(j)})$, which tends to zero when $n$ tends to infinity. We use this same $o_\mathbb{P}(1)$ notion throughout this proof. Now we have  
	\begin{align*}
\lim_{n\rightarrow \infty}\mathbb{P}\Big( (\hat{\bm{e}}_j'\hat{\Sigma}^{\texttt{KJ}}_n \hat{\bm{e}}_j  )^{-\frac{1}{2}}\cdot  \sqrt{n}\big(  \frac{ \sum_{k\in  \hat{\mathcal{H}}_{(j)}} \hat{\beta}_k }{ | \hat{\mathcal{H}}_{(j)} |} - \frac{ \sum_{k\in  \hat{\mathcal{H}}_{(j)}} {\beta}_k }{ | \hat{\mathcal{H}}_{(j)} |}  \big)   \leq t \Big| \hat{\mathcal{H}}_{(j)} = \mathcal{H}_{(j)}\Big)  = \Phi(t).
	\end{align*}
	Next, we have
	\begin{align*}
& \lim_{n\rightarrow \infty}\mathbb{P}\Big( (\hat{\bm{e}}_j'\hat{\Sigma}^{\texttt{KJ}}_n \hat{\bm{e}}_j  )^{-\frac{1}{2}}\cdot  \sqrt{n}\big(  \frac{ \sum_{k\in  \hat{\mathcal{H}}_{(j)}} \hat{\beta}_k }{ | \hat{\mathcal{H}}_{(j)} |} - \frac{ \sum_{k\in  \hat{\mathcal{H}}_{(j)}} {\beta}_k }{ | \hat{\mathcal{H}}_{(j)} |}  \big)   \leq t \Big)\\
=& \lim_{n\rightarrow \infty} \Bigg[ \mathbb{P}\Big( (\hat{\bm{e}}_j'\hat{\Sigma}^{\texttt{KJ}}_n \hat{\bm{e}}_j  )^{-\frac{1}{2}}\cdot  \sqrt{n}\big(  \frac{ \sum_{k\in  \hat{\mathcal{H}}_{(j)}} \hat{\beta}_k }{ | \hat{\mathcal{H}}_{(j)} |} - \frac{ \sum_{k\in  \hat{\mathcal{H}}_{(j)}} {\beta}_k }{ | \hat{\mathcal{H}}_{(j)} |}  \big)   \leq t \Big| \hat{\mathcal{H}}_{(j)} = \mathcal{H}_{(j)}\Big) \Big) \mathbb{P}(\hat{\mathcal{H}}_{(j)} = \mathcal{H}_{(j)}) + o_\mathbb{P}(1) \Bigg] \\
=& \lim_{n\rightarrow \infty}\mathbb{P}\Big( (\hat{\bm{e}}_j'\hat{\Sigma}^{\texttt{KJ}}_n \hat{\bm{e}}_j  )^{-\frac{1}{2}}\cdot  \sqrt{n}\big(  \frac{ \sum_{k\in  \hat{\mathcal{H}}_{(j)}} \hat{\beta}_k }{ | \hat{\mathcal{H}}_{(j)} |} - \frac{ \sum_{k\in  \hat{\mathcal{H}}_{(j)}} {\beta}_k }{ | \hat{\mathcal{H}}_{(j)} |}  \big)   \leq t \Big| \hat{\mathcal{H}}_{(j)} = \mathcal{H}_{(j)}\Big)\\
=&\,\Phi(t)
	\end{align*}
	The following holds precisely following the definition of the resampling procedure: 
	\begin{align}\label{proof-of-theorem-1}
\mathbb{P}\Big( ({\bm{e}}_j'\hat{\Sigma}^{\texttt{KJ}}_n {\bm{e}}_j  )^{-\frac{1}{2}}\cdot  \sqrt{n}\big(  \frac{ \sum_{k\in  {\mathcal{H}}_{(j)}} \hat{\beta}^*_k }{ | {\mathcal{H}}_{(j)} |} - \frac{ \sum_{k\in  {\mathcal{H}}_{(j)}} \hat{\beta}_k }{ | {\mathcal{H}}_{(j)} |}  \big)   \leq t \big| \{ \z_{i,n}\}_{i=1}^n  \Big)  = \Phi(t).
\end{align}	
We now show that
\begin{align*}
& \lim_{n\rightarrow \infty} \mathbb{P}\Big( (\hat{\bm{e}}_j'\hat{\Sigma}^{\texttt{KJ}}_n \hat{\bm{e}}_j  )^{-\frac{1}{2}}\cdot  \sqrt{n}\big(  \frac{ \sum_{k\in  \hat{\mathcal{H}}_{(j)}} \hat{\beta}^*_k }{ | \hat{\mathcal{H}}_{(j)} |} - \frac{ \sum_{k\in  \hat{\mathcal{H}}_{(j)}} \hat{\beta}_k }{ | \hat{\mathcal{H}}_{(j)} |}  \big)   \leq t \big| \{ \z_{i,n}\}_{i=1}^n  \Big)  \\ 
=& \lim_{n\rightarrow \infty} \Bigg[\mathbb{P}\Big( (\hat{\bm{e}}_j'\hat{\Sigma}^{\texttt{KJ}}_n \hat{\bm{e}}_j  )^{-\frac{1}{2}}\cdot  \sqrt{n}\big(  \frac{ \sum_{k\in  \hat{\mathcal{H}}_{(j)}} \hat{\beta}^*_k }{ | \hat{\mathcal{H}}_{(j)} |} - \frac{ \sum_{k\in  \hat{\mathcal{H}}_{(j)}} \hat{\beta}_k }{ | \hat{\mathcal{H}}_{(j)} |}  \big)   \leq t \big| \{ \z_{i,n}\}_{i=1}^n, \hat{\mathcal{H}}_{(j)} = \mathcal{H}_{(j)} \Big) \mathbb{P}(\hat{\mathcal{H}}_{(j)} = \mathcal{H}_{(j)}) + o_\mathbb{P}(1) \Bigg] \\ 
=& \lim_{n\rightarrow \infty} \Bigg[\mathbb{P}\Big( ({\bm{e}}_j'\hat{\Sigma}^{\texttt{KJ}}_n {\bm{e}}_j  )^{-\frac{1}{2}}\cdot  \sqrt{n}\big(  \frac{ \sum_{k\in  {\mathcal{H}}_{(j)}} \hat{\beta}^*_k }{ | {\mathcal{H}}_{(j)} |} - \frac{ \sum_{k\in  {\mathcal{H}}_{(j)}} \hat{\beta}_k }{ | {\mathcal{H}}_{(j)} |}  \big)   \leq t \big| \{ \z_{i,n}\}_{i=1}^n, \hat{\mathcal{H}}_{(j)} = \mathcal{H}_{(j)} \Big) \mathbb{P}(\hat{\mathcal{H}}_{(j)} = \mathcal{H}_{(j)}) + o_\mathbb{P}(1) \Bigg]\\
=&  \lim_{n\rightarrow \infty} \mathbb{P}\Big( ({\bm{e}}_j'\hat{\Sigma}^{\texttt{KJ}}_n {\bm{e}}_j  )^{-\frac{1}{2}}\cdot  \sqrt{n}\big(  \frac{ \sum_{k\in  {\mathcal{H}}_{(j)}} \hat{\beta}^*_k }{ | {\mathcal{H}}_{(j)} |} - \frac{ \sum_{k\in  {\mathcal{H}}_{(j)}} \hat{\beta}_k }{ | {\mathcal{H}}_{(j)} |}  \big)   \leq t \big| \{ \z_{i,n}\}_{i=1}^n \Big)\\
=&\, \Phi(t).
\end{align*}
Recall our definition in the main manuscript
\begin{align}
\tilde{\beta}_{(j)}^* = \frac{ \sum_{k\in  \widehat{\mathcal{H}}_{(j)}} \hat{\beta}^*_k }{ | \widehat{\mathcal{H}}_{(j)} |}, \text{ and } \tilde{\beta}_{(j)} = \frac{ \sum_{k\in  \widehat{\mathcal{H}}_{(j)}} \hat{\beta}_k }{ | \widehat{\mathcal{H}}_{(j)} |}, 
\end{align}  
we thus have reached the conclusion presented in the theorem:
	\begin{align*}
\lim_{n\rightarrow \infty} \mathbb{P}\Big( (\hat{\bm{e}}_j'\hat{\Sigma}^{\texttt{KJ}}_n \hat{\bm{e}}_j  )^{-\frac{1}{2}}\cdot  \sqrt{n}\big( \tilde{\beta}_{(j)}^*- \tilde{\beta}_{(j)}   \big)   \leq t \big| \{ \z_{i,n}\}_{i=1}^n  \Big)  = \Phi(t). 
\end{align*}
	\item[Step 3.] 
	Lastly, to prove the bootstrap consistency, we show that 
	\begin{align*}
\lim_{n\rightarrow \infty }\mathbb{P}\Big( 	\mathbb{P}\big( 	\tilde{\beta}_{(j)}^* \leq  \beta_{(j)}   |\{ \z_{i,n}\}_{i=1}^n   \big) \leq s \Big)  = s.
	\end{align*}
	Note that 
	\begin{align*}
	  & \lim_{n\rightarrow\infty}	\mathbb{P}\big( 	\tilde{\beta}_{(j)}^* \leq  \beta_{(j)}   |\{ \z_{i,n}\}_{i=1}^n   \big)  \\
	  =&\lim_{n\rightarrow\infty} \mathbb{P}\Big( ({\bm{e}}_j'\hat{\bm{\Omega}}^{\texttt{KJ}}_n {\bm{e}}_j  )^{-\frac{1}{2}}\cdot  \sqrt{n}\big(  \frac{ \sum_{k\in  {\mathcal{H}}_{(j)}} \hat{\beta}^*_k }{ | {\mathcal{H}}_{(j)} |} - \frac{ \sum_{k\in  {\mathcal{H}}_{(j)}} \hat{\beta}_k }{ | {\mathcal{H}}_{(j)} |}  \big) \\
	  \leq& ({\bm{e}}_j'\hat{\bm{\Omega}}^{\texttt{KJ}}_n {\bm{e}}_j  )^{-\frac{1}{2}}\cdot  \sqrt{n}\big(  \frac{ \sum_{k\in  {\mathcal{H}}_{(j)}} {\beta}_k }{ | {\mathcal{H}}_{(j)} |} - \frac{ \sum_{k\in  {\mathcal{H}}_{(j)}} \hat{\beta}_k }{ | {\mathcal{H}}_{(j)} |}  \big)  \big| \{ \z_{i,n}\}_{i=1}^n  \Big)  \\
	  =& \lim_{n\rightarrow\infty} \Phi(({\bm{e}}_j'\hat{\bm{\Omega}}^{\texttt{KJ}}_n {\bm{e}}_j  )^{-\frac{1}{2}}\cdot  \sqrt{n}\big(  \frac{ \sum_{k\in  {\mathcal{H}}_{(j)}} {\beta}_k }{ | {\mathcal{H}}_{(j)} |} - \frac{ \sum_{k\in  {\mathcal{H}}_{(j)}} \hat{\beta}_k }{ | {\mathcal{H}}_{(j)} |}  \big) ).
	\end{align*}
	Therefore, by the bounded convergence theorem, we have
	\begin{align*}
&\lim_{n\rightarrow \infty }\mathbb{P}\Big( 	\mathbb{P}\big( 	\tilde{\beta}_{(j)}^* \leq  \beta_{(j)}   |\{ \z_{i,n}\}_{i=1}^n   \big) \leq s \Big) \\
=&\ \mathbb{P}(\lim_{n\rightarrow \infty }\Phi(({\bm{e}}_j'\hat{\bm{\Omega}}^{\texttt{KJ}}_n {\bm{e}}_j  )^{-\frac{1}{2}}\cdot  \sqrt{n}\big(  \frac{ \sum_{k\in  {\mathcal{H}}_{(j)}} {\beta}_k }{ | {\mathcal{H}}_{(j)} |} - \frac{ \sum_{k\in  {\mathcal{H}}_{(j)}} \hat{\beta}_k }{ | {\mathcal{H}}_{(j)} |}  \big) ) \le s) \\
=&\ \mathbb{P}(\lim_{n\rightarrow \infty }({\bm{e}}_j'\hat{\bm{\Omega}}^{\texttt{KJ}}_n {\bm{e}}_j  )^{-\frac{1}{2}}\cdot  \sqrt{n}\big(  \frac{ \sum_{k\in  {\mathcal{H}}_{(j)}} {\beta}_k }{ | {\mathcal{H}}_{(j)} |} - \frac{ \sum_{k\in  {\mathcal{H}}_{(j)}} \hat{\beta}_k }{ | {\mathcal{H}}_{(j)} |}  \big) \le \Phi^{-1}(s))\\
=&\ \mathbb{P}(N(0,1)\le \Phi^{-1}(s))\\
=&\ s.
	\end{align*}
	
	
\end{description}

\end{proof}

\section{Lemmas and corollary}
\subsection{Lemma \ref{Lemma:tie-set-consistency}}

\begin{lem}\label{Lemma:tie-set-consistency}
	Suppose $w_{k,(j)} = \mathds{1}( k \in \hat{\mathcal{H}}_{(j)} )$, for $j, k = 1, \ldots, d$, under Assumptions \ref{supp:Assumption:Sampling}-\ref{supp:Assumption:seperation}, we have the following argument holds $\forall \varepsilon >0$,
	\begin{align*}
		\lim_{n\rightarrow \infty}\mathbb{P}\big(  |w_{k,(j)} - \mathds{1}(k \in \mathcal{H}_{(j)} ) | > \varepsilon   \big) =0.
	\end{align*}

\end{lem}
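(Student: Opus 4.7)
The plan is to show that the estimated near-tie set $\widehat{\mathcal{H}}_{(j)}$ coincides with the population near-tie set $\mathcal{H}_{(j)}$ with probability tending to one, which immediately gives the claim since both $w_{k,(j)}$ and $\mathds{1}(k \in \mathcal{H}_{(j)})$ are $\{0,1\}$-valued (so the event $\{|w_{k,(j)} - \mathds{1}(k\in\mathcal{H}_{(j)})|>\varepsilon\}$ coincides with $\{w_{k,(j)} \neq \mathds{1}(k\in\mathcal{H}_{(j)})\}$ for any $\varepsilon \in (0,1)$). Because $d$ is fixed, I would argue pointwise in $k$, splitting into an inclusion step for $k \in \mathcal{H}_{(j)}$ and an exclusion step for $k \notin \mathcal{H}_{(j)}$. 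Both steps will be driven by the rate separation
\[
n^{-1/2} \;\ll\; n^{-\delta} \;\ll\; \min_{k' \notin \mathcal{H}_{(j)}}|\beta_{(j)}-\beta_{k'}|,
\]
made available by the choice of $\bl,\br$ together with Assumption~\ref{supp:Assumption:seperation}.

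As a preparatory ingredient I would establish the uniform control $\max_{1\le k\le d}|\hat{\beta}^*_k - \beta_k| = O_p(n^{-1/2})$. The OLS piece $\hat{\bbeta} - \bbeta$ is coordinate-wise $O_p(n^{-1/2})$ by the Jochmans asymptotic normality result invoked in Step~1 of the proof of Theorem~\ref{Theorem:resampling-consistency}, while the resampling increment $\hat{\bbeta}^* - \hat{\bbeta}$ is, by construction~\eqref{eq:resample-step}, a centered Gaussian with covariance $\hat{\bm{\Omega}}_n^{\texttt{KJ}}/n$ whose diagonal is $O_p(1/n)$ under Assumption~\ref{supp:Assumption:Design}. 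The finiteness of $d$ removes any need for entropy-type arguments. Combining this with the elementary fact that the $j$-th order statistic is $1$-Lipschitz in the $\ell_\infty$ norm yields the order-statistic bound $|\hat{\beta}^*_{(j)} - \beta_{(j)}| \le \max_k|\hat{\beta}^*_k - \beta_k| = O_p(n^{-1/2})$.

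With these rates in hand, the two halves of the argument are short. For $k \in \mathcal{H}_{(j)}$ the triangle inequality gives
\[
|\hat{\beta}^*_k - \hat{\beta}^*_{(j)}| \;\le\; |\hat{\beta}^*_k - \beta_k| + |\beta_k - \beta_{(j)}| + |\beta_{(j)} - \hat{\beta}^*_{(j)}| \;=\; O_p(n^{-1/2}) + o(n^{-1/2}),
\]
which is $o_p(n^{-\delta})$ since $\delta<1/2$. Since $\bl,\br$ are of order $n^{-\delta}$, both $\hat{\beta}^*_{(j)}-\bl \le \hat{\beta}^*_k$ and $\hat{\beta}^*_k \le \hat{\beta}^*_{(j)}+\br$ hold with probability tending to one, placing $k$ inside $\widehat{\mathcal{H}}_{(j)}$. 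For $k \notin \mathcal{H}_{(j)}$, Assumption~\ref{supp:Assumption:seperation} gives $n^\delta|\beta_k-\beta_{(j)}|\to\infty$, and the reverse triangle inequality yields
\[
|\hat{\beta}^*_k - \hat{\beta}^*_{(j)}| \;\ge\; |\beta_k - \beta_{(j)}| - O_p(n^{-1/2}),
\]
which strictly exceeds $\max(\bl,\br) = O(n^{-\delta})$ with probability going to one, placing $k$ outside $\widehat{\mathcal{H}}_{(j)}$.

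The main obstacle I anticipate is pinning down the rate of the resampled order statistic $\hat{\beta}^*_{(j)}$, since within the near-tie set the identity of the $j$-th largest component is random and can fluctuate across samples and resampling draws. The $\ell_\infty$-Lipschitz property of order statistics sidesteps this difficulty cleanly, after which the two-rate separation carries the remainder of the argument essentially mechanically.
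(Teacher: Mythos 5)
Your proposal is correct, and it reaches the same conclusion as the paper via a genuinely different organization of the argument. The paper's proof proceeds in three steps: it first localizes the random index $\check{j}$ realizing the $j$-th largest resampled value, showing $\mathbb{P}(\check{j}\in\mathcal{H}_{(j)})\to 1$ by comparing resampled statistics across the lower set $\mathcal{H}_{(j)}^{L}$, the near-tie set, and the upper set $\mathcal{H}_{(j)}^{U}$; it then handles false inclusion ($k\notin\mathcal{H}_{(j)}$) and false exclusion ($k\in\mathcal{H}_{(j)}$) separately by conditioning on $\{\check{j}\in\mathcal{H}_{(j)}\}$ and invoking the rate $|\hat{\beta}^*_k-\beta_k|=o_p(n^{-\delta})$ from its Lemma 2. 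You instead exploit the $1$-Lipschitz property of the sorting map in $\ell_\infty$ to obtain $|\hat{\beta}^*_{(j)}-\beta_{(j)}|\le\max_k|\hat{\beta}^*_k-\beta_k|=O_p(n^{-1/2})$ outright, which eliminates the need to track $\check{j}$ and collapses the conditioning bookkeeping into two triangle-inequality computations driven by the same rate separation $n^{-1/2}\ll n^{-\delta}\ll\min_{k\notin\mathcal{H}_{(j)}}|\beta_{(j)}-\beta_k|$. Your route is more economical; what the paper's route buys in exchange is the intermediate fact $\mathbb{P}(\check{j}\in\mathcal{H}_{(j)})\to 1$ (and its analogue for $\hat{j}$), which is reused in its Lemma 3 and Corollary 1 --- though your order-statistic bound would deliver that byproduct with one additional line. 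The only point to make explicit is that the inclusion step needs $\bl$ and $\br$ bounded \emph{below} by a positive constant times $n^{-\delta}$ (as the data-adaptive choices $\bl^j=n^{-\delta}s_{\hat{j}}^{\delta}c^j_{\texttt{L}}$, $\br^j=n^{-\delta}s_{\hat{j}}^{\delta}c^j_{\texttt{R}}$ with $c^j_{\texttt{L}},c^j_{\texttt{R}}>0$ provide), not merely the upper bound $O(n^{-\delta})$; the paper is equally implicit about this, so it is not a gap relative to the reference argument.
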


\begin{proof}
%
We start with reviewing and introducing some notations to pave the way for a clear proof. Recall that $\tilde{\beta}_{(j)}^*$ is the $j$-th largest effect size for the resampled statistics $\hat{\beta}_{(1)}, \ldots, \hat{\beta}_{(d)}$, suppose $\tilde{\beta}_{(j)}^*$ is resampled statistics from the normal distribution centered at $ \hat{\beta}_{\widecheck{j}}$, that is 
	\begin{align*}
  	\tilde{\beta}_{(j)}^* \big|\{ \z_{i,n}\}_{i=1}^n \sim \mathcal{N}( \hat{\beta}_{\check{j}} , (\hat{\Sigma}^{\texttt{KJ}}_{n})_{\check{j}\check{j} }    ), \quad \check{j} = \sum_{k=1}^d k\cdot \mathds{1}( \hat{\beta}_k^* = 	\tilde{\beta}_{(j)}^* ),
	\end{align*}
	where $(\hat{\Sigma}^{\texttt{KJ}}_{n})_{\check{j}\check{j} }  $ is the $\check{j}$th component in the diagonal of the matrix $\hat{\Sigma}^{\texttt{KJ}}_{n}$. 
	
	Recall we define the near tie $	\mathcal{H}_{(j)} $ set as 
	\begin{align*}
	\mathcal{H}_{(j)} = \big\{ k: | \beta_k  - \beta_{(j)} | = o(n^{-\frac{1}{2}}),\ k=1, \ldots, d \big\}.
	\end{align*}
	We further define two sets of policies that have effect sizes lower/larger than the policies in the set $\mathcal{H}_{(j)} $: 
	\begin{align*}
&	\mathcal{H}_{(j)}^{L} =  \big\{ k: \  \beta_k < \min_{m\in   \mathcal{H}_{(j)} } \{\beta_m\} \ k=1, \ldots, d \big\}, \quad 	\mathcal{H}_{(j)}^{U} =  \big\{ k: \  \beta_k > \max_{m\in   \mathcal{H}_{(j)} } \{\beta_m\} \ k=1, \ldots, d \big\}. 
	\end{align*}
	As for the estimated near tie set, we have for any $j\in \hat{\mathcal{H}}_{(j)}$ that 
	\begin{align*}
	- b_L \leq \hat{\beta}^*_{k} -\tilde{\beta}^*_{ {{(j)}}} \leq b_R, \quad \text{with } |b_R- b_L | = O ( n^{-\delta} ),
	\end{align*}
	where $\delta\in(0,0.5)$. Thus, there exists a positive constant $C$ such that 
\begin{align*}
 \frac{| \hat{\beta}^*_{k} -\tilde{\beta}^*_{ {{(j)}}} | }{ n^{-\delta}} < 	C, \quad \forall j \in  \hat{\mathcal{H}}_{(j)}.
\end{align*}	
	
	Our proof is composed of the following three steps: 
	\begin{description}
		\item[Step 1.] We first show that the policy with the $j$th largest policy effect size in the resampled statistics falls into the set $ \mathcal{H}_{(j)} $ with high probability, that is 
		\begin{align}\label{eq:Lemma1-step1}
	\lim_{n\rightarrow \infty}	\mathbb{P}\Big(  \check{j} \in  \mathcal{H}_{(j)} \Big) =1. 
		\end{align}
	Because $  \hat{\beta}_{\check{j}}^*  \in [\min_{j\in \mathcal{H}_{(j)}}  \hat{\beta}_j^* ,  \max _{j\in \mathcal{H}_{(j)}}  \hat{\beta}_j^*]  $ by definition, coupled with the fact that 
			\begin{align*}
		\Big\{ \max_{k\in 	\mathcal{H}_{(j)}^{L} } \hat{\beta}_k^* < \min_{j\in \mathcal{H}_{(j)}}  \hat{\beta}_j^* \leq  \max _{j\in \mathcal{H}_{(j)}}  \hat{\beta}_j^* <   \min_{k\in 	\mathcal{H}_{(j)}^{U} } \hat{\beta}_k^* \Big\}  \subset \Big(  \check{j} \in  \mathcal{H}_{(j)} \Big) ,
		\end{align*}	
		it is suffice to show 
	\begin{align*}
\lim_{n\rightarrow \infty }	\mathbb{P}\left(	 \max_{k\in 	\mathcal{H}_{(j)}^{L} } \hat{\beta}_k^* < \min_{j\in \mathcal{H}_{(j)}}  \hat{\beta}_j^* \leq  \max _{j\in \mathcal{H}_{(j)}}  \hat{\beta}_j^* <   \min_{k\in 	\mathcal{H}_{(j)}^{U} } \hat{\beta}_k^*  \right) = 1. 
	\end{align*}
	
	Under Assumption \ref{supp:Assumption:seperation}, by Lemma \ref{lemma:bootstrap-convergence-rate}, for any $k\in \mathcal{H}_{(j)}^L$ and $m \in  \mathcal{H}_{(j)}$, we have 
	\begin{align*}
	\lim_{n\rightarrow \infty}\mathbb{P}\left(  \hat{\beta}_{k}^*  < \hat{\beta}_{m}^*  \right) =1 . 
	\end{align*}
	
	Similarly, for any $k\in \mathcal{H}_{(j)}^U$ and $m \in  \mathcal{H}_{(j)}$, we have 
	\begin{align*}
	\lim_{n\rightarrow \infty}\mathbb{P}\left(  \hat{\beta}_{m}^*  < \hat{\beta}_{k}^*  \right) =1 . 
	\end{align*}
	
	\item[Step 2.] We then show, for $k\not \in \mathcal{H}_{(j)} $
\begin{align*}
	\lim_{n\rightarrow \infty}\mathbb{P}\big(  w_{k,(j)}   > \varepsilon , k\not \in \mathcal{H}_{(j)}   \big) = 0 .
\end{align*}
For any $\varepsilon >0$ and $k\not \in \mathcal{H}_{(j)} $, we have the following holds
\begin{align*}
\mathbb{P}\big(  w_{k,(j)}   > \varepsilon   \big) = &\ \mathbb{P}\big( \mathds{1}( k \in \hat{\mathcal{H}}_{(j)} ) > \varepsilon  \big) \\
=&\  \mathbb{P}\big( \mathds{1}( k \in \hat{\mathcal{H}}_{(j)} ) > \varepsilon |   k \in \hat{\mathcal{H}}_{(j)}  \big) \cdot  \mathbb{P}\big(  k \in \hat{\mathcal{H}}_{(j)} \big)   \\
&\ + \mathbb{P}\big(\mathds{1}( k \in \hat{\mathcal{H}}_{(j)} ) > \varepsilon |   k \not\in \hat{\mathcal{H}}_{(j)}  \big) \cdot  \mathbb{P}\big(  k \not\in \hat{\mathcal{H}}_{(j)} \big)  \\
\leq & \ \mathbb{P}\big(  k \in \hat{\mathcal{H}}_{(j)} \big) \\
\overset{\text{Def}}{=}&\ \mathbb{P}\Big(  \frac{| \hat{\beta}^*_{k} -\hat{\beta}^*_{ {{(j)}}} | }{ n^{-\delta}} < 	C \Big) \\
= &\ \mathbb{P}\Big(  \frac{| \hat{\beta}^*_{k} -\hat{\beta}^*_{\check{j}} | }{ n^{-\delta}} < 	C \Big) \\
= &\ \mathbb{P}\Big( | \hat{\beta}^*_{k} -\hat{\beta}^*_{\check{j}}  | < 	n^{-\delta}\cdot C \Big) \\
=& \ \mathbb{P}\Big( | (\hat{\beta}^*_{k} - \beta_k) -  (\hat{\beta}^*_{\check{j}} - \beta_{\check{j}})+ (\beta_k - \beta_{\check{j}})  | < 	n^{-\delta}\cdot C \Big) \\
\leq & \ \mathbb{P}\Big( n^{\delta}|\beta_k - \beta_{\check{j}}| -  n^{\delta}| \hat{\beta}^*_{k} - \beta_k| -  n^{\delta} |\hat{\beta}^*_{\check{j}} - \beta_{\check{j}} | <  C \Big) \\
= & \ \mathbb{P}\Big(   n^{\delta}| \hat{\beta}^*_{k} - \beta_k| +  n^{\delta} |\hat{\beta}^*_{\check{j}} - \beta_{\check{j}} | > n^{\delta}|\beta_k - \beta_{\check{j}}|  + C \Big) 
\end{align*}

By definition, for $k\not\in \mathcal{H}_{(j)}$ 
\begin{align*}
\mathbb{P}\big(  n^{\delta} | \beta_k - \beta_{\check{j}} | < C \big) \leq & \  \mathbb{P}\big(  n^{\delta} | \beta_k - \beta_{\check{j}} | <  C , \check{j} \in  \mathcal{H}_{(j)}  \big) + \mathbb{P}\big(\check{j} \not\in  \mathcal{H}_{(j)}  \big)\\
\leq &\ \max_{j\in \mathcal{H}_j}\ \mathbb{P}\big(  n^{\delta} | \beta_k - \beta_{{j}} | <  C , {j} \in  \mathcal{H}_{(j)}  \big) + \mathbb{P}\big(\check{j} \not\in  \mathcal{H}_{(j)}  \big).
\end{align*}
Under Assumption \ref{supp:Assumption:seperation}, Lemma \ref{lemma:bootstrap-convergence-rate} and the conclusion in Eq \eqref{eq:Lemma1-step1} in Step 1 suggest that by letting $n\rightarrow \infty$ on both side, we have the above probability converges to zero. Based on above derivation, we have shown that $	\lim_{n\rightarrow \infty}\mathbb{P}\big(w_{k,(j)}   > \varepsilon , k\not \in \mathcal{H}_{(j)}   \big) = 0 $, for $k\not\in \mathcal{H}_{(j)}$. 

\item[Step 3.] We are left to prove that for all $k \in  \mathcal{H}_{(j)}$, the following holds $\forall \varepsilon>0$:
\begin{align*}
\lim_{n\rightarrow \infty}\mathbb{P}\big(  |w_{k,(j)} - 1  | > \varepsilon   \big) =0.
\end{align*}
Following similar arguments, for a positive constant $C$, we have for $k,j \in  \mathcal{H}_{(j)} $ the following statement holds
\begin{align*}
\mathbb{P}\big( |w_{k,(j)} - 1  | > \varepsilon  \big) = &\ \mathbb{P}\big(| \mathds{1}( k \in \hat{\mathcal{H}}_{(j)})-1 | > \varepsilon  \big) \\
=&\  \mathbb{P}\big( | \mathds{1}( k \in \hat{\mathcal{H}}_{(j)} )-1| > \varepsilon |   k \in \hat{\mathcal{H}}_{(j)}  \big) \cdot  \mathbb{P}\big(  k \in \hat{\mathcal{H}}_{(j)} \big)   \\
& \ +\mathbb{P}\big(| \mathds{1}( k \in \hat{\mathcal{H}}_{(j)} )-1| > \varepsilon |   k \not\in \hat{\mathcal{H}}_{(j)}  \big) \cdot  \mathbb{P}\big(  k \not\in \hat{\mathcal{H}}_{(j)} \big)   \\
\leq &\   \mathbb{P}\big(  k \not\in \hat{\mathcal{H}}_{(j)} \big) \\
\overset{\text{Def}}{=}&\ \mathbb{P}\Big(  \frac{| \hat{\beta}^*_{k} -\hat{\beta}^*_{ {{(j)}}} | }{ n^{-\delta}} \geq 	C  \Big) \\
= &\ \mathbb{P}\Big(  \frac{| \hat{\beta}^*_{k} -\hat{\beta}^*_{\check{j}} | }{ n^{-\delta}} \geq	C \Big) \\
=& \ \mathbb{P}\Big( | (\hat{\beta}^*_{k} - \beta_k) -  (\hat{\beta}^*_{\check{j}} - \beta_{\check{j}})+ (\beta_k - \beta_{\check{j}})  | \geq	n^{-\delta}\cdot C \Big) \\
\leq  & \ \mathbb{P}\Big( | \hat{\beta}^*_{k} - \beta_k| + |\hat{\beta}^*_{\check{j}} - \beta_{\check{j}}|+ |\beta_k - \beta_{\check{j}}|   \geq	n^{-\delta}\cdot C \Big) \\
\leq & \ \mathbb{P}\Big( n^{\frac{1}{2}} | \hat{\beta}^*_{k} - \beta_k| + n^{\frac{1}{2}}|\hat{\beta}^*_{\check{j}} - \beta_{\check{j}}|+ n^{\frac{1}{2}}|\beta_k - \beta_{\check{j}}|   \geq	n^{\frac{1}{2}-\delta}\cdot C \Big). 
\end{align*}

By definition of the near-tie set, for $k\in \mathcal{H}_{(j)}$, we have 
\begin{align*}
\mathbb{P}\big(  n^{\frac{1}{2}} | \beta_k - \beta_{\check{j}} | < C \big) \leq & \  \mathbb{P}\big(  n^{\delta} | \beta_k - \beta_{\check{j}} | <  C , \check{j} \in  \mathcal{H}_{(j)}  \big) + \mathbb{P}\big(\check{j} \not\in  \mathcal{H}_{(j)}  \big)\\
\leq &\ \max_{j\in \mathcal{H}_j}\ \mathbb{P}\big(  n^{\delta} | \beta_k - \beta_{{j}} | <  C , {j} \in  \mathcal{H}_{(j)}  \big) + \mathbb{P}\big(\check{j} \not\in  \mathcal{H}_{(j)}  \big).
\end{align*}
Again, under Assumption \ref{supp:Assumption:seperation}, Lemma \ref{lemma:bootstrap-convergence-rate} and the conclusion in Eq \eqref{eq:Lemma1-step1} we have derived in Step 1, by letting $n\rightarrow \infty$ on both side, we have the above probability converges to 1. 

\end{description}
	
\end{proof}

\subsection{Lemma \ref{lemma:bootstrap-convergence-rate}}

\begin{lem}\label{lemma:bootstrap-convergence-rate}
	Under Assumption \ref{supp:Assumption:seperation}, we show that for all $k\in \{1, \ldots, d\}$,  any positive constant $C$ and $\delta \in (0,\frac{1}{2})$, the following statement holds 
	\begin{align*}
	\lim_{n\rightarrow\infty}\mathbb{P}\big( | \hat{\beta}^*_{k} - \beta_k| \geq n^{-\delta}\cdot C \big) = 0. 
	\end{align*}
\end{lem}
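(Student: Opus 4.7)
The plan is to decompose the deviation $\hat{\beta}^*_k - \beta_k$ around the OLS estimator and to bound the resampling error and the estimation error separately, both at the parametric rate $n^{-1/2}$, which is faster than the target $n^{-\delta}$ because $\delta \in (0,1/2)$. Concretely, the first step will be to write
\[
\hat{\beta}^*_k - \beta_k \;=\; (\hat{\beta}^*_k - \hat{\beta}_k) \;+\; (\hat{\beta}_k - \beta_k),
\]
and then argue each summand is $O_p(n^{-1/2})$, after which the conclusion follows by the triangle inequality and a Markov-type step.

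For the estimation error $\hat{\beta}_k - \beta_k$, I would invoke directly the asymptotic normality result of \cite{jochmans2020heteroscedasticity} that was already cited at the beginning of the proof of Theorem \ref{supp:Theorem:resampling-consistency}, namely $(\hat{\bm{\Omega}}^{\texttt{KJ}}_n)^{-1/2}\sqrt{n}(\hat{\bbeta}-\bbeta) \leadsto \mathcal{N}(0,\bm{I}_d)$ under Assumptions \ref{supp:Assumption:Sampling}--\ref{supp:Assumption:Approximation}. Reading off the $k$-th coordinate and combining with the boundedness of $(\hat{\bm{\Omega}}^{\texttt{KJ}}_n)_{kk}$ in probability (which follows from the eigenvalue control in Assumption \ref{supp:Assumption:Design} plus the consistency of the KJ estimator under Assumption \ref{supp:assumption:variance_estimation}) immediately yields $|\hat{\beta}_k - \beta_k| = O_p(n^{-1/2})$.

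For the resampling error $\hat{\beta}^*_k - \hat{\beta}_k$, the construction in Eq \eqref{eq:resample-step} tells us that, conditional on $\{\z_{i,n}\}_{i=1}^n$, it is exactly Gaussian with variance $(\hat{\bm{\Omega}}^{\texttt{KJ}}_n)_{kk}/n$. Hence for any fixed $t>0$, a standard Gaussian tail bound gives $\mathbb{P}(\sqrt{n}|\hat{\beta}^*_k - \hat{\beta}_k| \ge t \mid \{\z_{i,n}\}_{i=1}^n) \le 2\exp\!\big(-t^2/(2(\hat{\bm{\Omega}}^{\texttt{KJ}}_n)_{kk})\big)$. Taking expectation and using $(\hat{\bm{\Omega}}^{\texttt{KJ}}_n)_{kk}=O_p(1)$ together with the law of total probability will upgrade this to the unconditional statement $|\hat{\beta}^*_k - \hat{\beta}_k| = O_p(n^{-1/2})$.

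Combining the two pieces gives $\sqrt{n}|\hat{\beta}^*_k - \beta_k| = O_p(1)$, so for any constant $C>0$,
\[
\mathbb{P}\big(|\hat{\beta}^*_k - \beta_k| \ge n^{-\delta} C\big) \;=\; \mathbb{P}\big(\sqrt{n}|\hat{\beta}^*_k - \beta_k| \ge n^{1/2-\delta} C\big) \longrightarrow 0,
\]
since $n^{1/2-\delta}\to\infty$. The main obstacle is the bookkeeping that transfers the conditional Gaussian tail on the resampling step to an unconditional $O_p(n^{-1/2})$ rate; this forces me to verify carefully that $(\hat{\bm{\Omega}}^{\texttt{KJ}}_n)_{kk}$ is stochastically bounded, which is not stated as a primitive but follows by combining Assumption \ref{supp:Assumption:Design} with the KJ consistency argument already in the literature. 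Apart from that, every step is routine and the lemma is in fact strictly weaker than the CLT plus a Gaussian resampling step, so no further structural ideas are required.
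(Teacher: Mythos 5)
Your proposal is correct and follows essentially the same route as the paper: the paper likewise writes $\sqrt{n}(\hat{\beta}^*_k-\beta_k)=\sqrt{n}(\hat{\beta}_k-\beta_k)+\mathcal{N}\big(0,(\hat{\bm{\Omega}}^{\texttt{KJ}}_n)_{k,k}\big)$, invokes the asymptotic normality of $\hat{\bbeta}$ and the consistency of $\hat{\bm{\Omega}}^{\texttt{KJ}}_n$ from \cite{jochmans2020heteroscedasticity} to get stochastic boundedness of $\sqrt{n}(\hat{\beta}^*_k-\beta_k)$, and then concludes since $n^{1/2-\delta}\to\infty$. Your extra care in upgrading the conditional Gaussian tail to an unconditional $O_p(n^{-1/2})$ bound via stochastic boundedness of $(\hat{\bm{\Omega}}^{\texttt{KJ}}_n)_{kk}$ is just a more explicit version of the step the paper asserts directly.
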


\begin{proof}
    Note that 
    \begin{align}
        \sqrt{n}( \hat{\beta}^*_{k} - \beta_k) &= \sqrt{n}(\hat{\beta}_k-\beta_k) + \mathcal{N}\big(0,(\hat{\bm{\Omega}}^{\texttt{KJ}}_n)_{k,k} \big).
    \end{align}
    Because $\sqrt{n}(\hat{\beta}_k-\beta_k)$ converges in distribution to a finite-value random variable and $\hat{\bm{\Omega}}^{\texttt{KJ}}_n $ converges in probability to a finite-value matrix when $n$ tends to infinity \citep{jochmans2020heteroscedasticity}, for any given $\epsilon>0$, there exists an $M$, such that 
    \begin{align}
        \mathbb{P}(\sqrt{n}| \hat{\beta}^*_{k} - \beta_k| >M )<\epsilon.
    \end{align}
    Then, for any $n$ such that 
    \[
    n > \big(\frac{M}{C}\big)^{\frac{1}{\frac{1}{2}-\delta}}
    \]
    we have that
    \[
    \mathbb{P}\big( | \hat{\beta}^*_{k} - \beta_k| \geq n^{-\delta}\cdot C \big) <\epsilon
    \]
    and therefore
    \[
    \limsup_{n\rightarrow\infty}\mathbb{P}\big( | \hat{\beta}^*_{k} - \beta_k| \geq n^{-\delta}\cdot C \big) <\epsilon.
    \]
    Note that the above inequality holds for arbitrary $\epsilon>0$. Therefore, we have 
    \[
    \limsup_{n\rightarrow\infty}\mathbb{P}\big( | \hat{\beta}^*_{k} - \beta_k| \geq n^{-\delta}\cdot C \big)=0,
    \]
    completing the proof.
\end{proof}

 \begin{lem}
     Denote the selected policy as 
     \begin{align*}
        \hat{j} = \sum_{k=1}^d k\cdot \mathds{1}( \hat{\beta}_k = 	\hat{\beta}_{(j)} ),
     \end{align*}
     we show that 
     \begin{align*}
        \lim_{n\rightarrow \infty}\mathbb{P}\big( \hat{j}\in\mathcal{H}_{(j)} \big) =1. 
     \end{align*}
 \end{lem}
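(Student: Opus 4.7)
The plan is to adapt Step 1 of the proof of Lemma \ref{Lemma:tie-set-consistency}, but applied to the original OLS estimators $\hat{\beta}_k$ in place of the resampled $\hat{\beta}_k^*$. Introduce the upper and lower neighbor sets
\[
\mathcal{H}_{(j)}^{L} = \big\{k:\ \beta_k < \min_{m \in \mathcal{H}_{(j)}} \beta_m\big\}, \quad \mathcal{H}_{(j)}^{U} = \big\{k:\ \beta_k > \max_{m \in \mathcal{H}_{(j)}} \beta_m\big\},
\]
so that $\mathcal{H}_{(j)}^{L}$, $\mathcal{H}_{(j)}$, $\mathcal{H}_{(j)}^{U}$ form a partition of $\{1,\ldots,d\}$. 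Since $\beta_{(j)}$ is attained at some index in $\mathcal{H}_{(j)}$, its rank $j$ lies in the block $\{|\mathcal{H}_{(j)}^{U}|+1,\ldots,|\mathcal{H}_{(j)}^{U}|+|\mathcal{H}_{(j)}|\}$. Define the clean-separation event
\[
\mathcal{E}_n \triangleq \Big\{\max_{k\in\mathcal{H}_{(j)}^{L}}\hat{\beta}_k < \min_{m\in\mathcal{H}_{(j)}}\hat{\beta}_m \le \max_{m\in\mathcal{H}_{(j)}}\hat{\beta}_m < \min_{k\in\mathcal{H}_{(j)}^{U}}\hat{\beta}_k\Big\}.
\]
On $\mathcal{E}_n$, the top $|\mathcal{H}_{(j)}^{U}|$ order statistics of $\hat{\bbeta}$ come entirely from $\mathcal{H}_{(j)}^{U}$ and the next $|\mathcal{H}_{(j)}|$ entirely from $\mathcal{H}_{(j)}$; consequently the index $\hat{j}$ attaining the $j$-th largest value lies in $\mathcal{H}_{(j)}$. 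It thus suffices to show $\mathbb{P}(\mathcal{E}_n)\to 1$.

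To control $\mathbb{P}(\mathcal{E}_n^c)$, the plan is to invoke the asymptotic normality $(\hat{\bm{\Omega}}_n^{\texttt{KJ}})^{-1/2}\sqrt{n}(\hat{\bbeta}-\bbeta)\leadsto\mathcal{N}(0,\bm{I}_d)$ used in Step 1 of the proof of Theorem \ref{supp:Theorem:resampling-consistency} (via \cite{jochmans2020heteroscedasticity}), which gives $\hat{\beta}_k-\beta_k=O_p(n^{-1/2})$ for each $k$. For any $k\notin\mathcal{H}_{(j)}$ and $m\in\mathcal{H}_{(j)}$, the definition of the near-tie set gives $|\beta_m-\beta_{(j)}|=o(n^{-1/2})$, while Assumption \ref{supp:Assumption:seperation} supplies a $\delta\in(0,1/2)$ with $n^{\delta}|\beta_k-\beta_{(j)}|\to\infty$. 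Together these yield $n^{\delta}|\beta_k-\beta_m|\to\infty$, which dominates the $O_p(n^{-1/2})=o_p(n^{-\delta})$ estimation errors, so $\hat{\beta}_k-\hat{\beta}_m=(\beta_k-\beta_m)(1+o_p(1))$ and its sign matches that of $\beta_k-\beta_m$ with probability tending to one. A union bound over the finitely many such pairs $(k,m)$ then gives $\mathbb{P}(\mathcal{E}_n)\to 1$.

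The only conceptual snag is the ordering \emph{inside} $\mathcal{H}_{(j)}$: there the population gaps are only $o(n^{-1/2})$, of the same order as the estimation noise, so the induced ranking of $\{\hat{\beta}_m\}_{m\in\mathcal{H}_{(j)}}$ is genuinely unstable and cannot be consistently recovered from the sample. Fortunately the target statement $\hat{j}\in\mathcal{H}_{(j)}$ does not require resolving this internal order — it only asks that the rank-$j$ index land \emph{somewhere} in the near-tie block — and the rank-counting argument above delivers exactly this. This is also why the weaker separation rate $n^\delta$, $\delta\in(0,1/2)$, in Assumption \ref{supp:Assumption:seperation} (rather than a rate faster than $n^{1/2}$) is sufficient for the conclusion.
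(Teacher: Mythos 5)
Your proof is correct and follows essentially the same route as the paper: the paper disposes of this lemma with a one-line appeal to the earlier separation argument (the content of Step 1 in the proof of Lemma \ref{Lemma:tie-set-consistency}, there carried out for the resampled $\hat{\beta}_k^*$), and your proposal is precisely that argument transplanted to the original $\hat{\beta}_k$, using $\sqrt{n}(\hat{\beta}_k-\beta_k)=O_p(1)$ in place of Lemma \ref{lemma:bootstrap-convergence-rate}. Your write-up is in fact more complete than the paper's, correctly verifying the rank-counting step (that $j$ falls in the block $\{|\mathcal{H}_{(j)}^{U}|+1,\ldots,|\mathcal{H}_{(j)}^{U}|+|\mathcal{H}_{(j)}|\}$) and rightly observing that the unrecoverable ordering inside the near-tie set is immaterial to the conclusion.
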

 \begin{proof}
     This is a direct result from Step 1 and Step 2 in the proof for Theorem \ref{Theorem:resampling-consistency}.
 \end{proof}

\subsection{Corollary \ref{corollary:observed-policy}}
\begin{coro}
\label{supp:corollary:observed-policy}
Under Assumptions \ref{supp:Assumption:Sampling}-\ref{supp:Assumption:seperation}, we have that  $\lim_{n\rightarrow \infty }\mathbb{P}\Big( 	\mathbb{P}\big( 	\tilde{\beta}_{(j)}^* \leq  \beta_{\hat{j}}   |\{ \z_{i,n}\}_{i=1}^n   \big) \leq s \Big)  = s$. 
\end{coro}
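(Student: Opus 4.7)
The plan is to reduce Corollary~\ref{supp:corollary:observed-policy} to Theorem~\ref{Theorem:resampling-consistency} by showing that $\beta_{\hat{j}}$ differs from $\beta_{(j)}$ by $o(n^{-1/2})$ on an event of probability tending to one. The key input is the lemma established immediately before the corollary, which shows $\lim_{n\to\infty}\mathbb{P}(\hat{j}\in\mathcal{H}_{(j)})=1$. On the event $\{\hat{j}\in\mathcal{H}_{(j)}\}$, the defining property of the near-tie set gives $\sqrt{n}|\beta_{\hat{j}}-\beta_{(j)}|\to 0$, so that after standardization by $({\bm{e}}_j'\hat{\bm{\Omega}}^{\texttt{KJ}}_n{\bm{e}}_j)^{1/2}$ (which is $O_\mathbb{P}(1)$ and bounded away from zero under Assumption~\ref{supp:Assumption:Design}), the difference $\beta_{\hat{j}}-\beta_{(j)}$ becomes asymptotically negligible compared with the $n^{-1/2}$-scale stochastic fluctuations of the resampling distribution.

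Concretely, I would begin exactly as in Step~3 of the proof of Theorem~\ref{Theorem:resampling-consistency}. By the Gaussian resampling structure together with Lemma~\ref{Lemma:tie-set-consistency}, I would write
\begin{align*}
\mathbb{P}\big(\tilde{\beta}_{(j)}^*\leq \beta_{\hat{j}}\,\big|\,\{\z_{i,n}\}_{i=1}^n\big)
= \Phi\!\left(({\bm{e}}_j'\hat{\bm{\Omega}}^{\texttt{KJ}}_n{\bm{e}}_j)^{-1/2}\cdot\sqrt{n}\big(\beta_{\hat{j}} - \tilde{\beta}_{(j)}\big)\right) + o_{\mathbb{P}}(1),
\end{align*}
and then decompose
\begin{align*}
\sqrt{n}\big(\beta_{\hat{j}}-\tilde{\beta}_{(j)}\big)
= \sqrt{n}\big(\beta_{\hat{j}}-\beta_{(j)}\big) + \sqrt{n}\big(\beta_{(j)}-\tilde{\beta}_{(j)}\big).
\end{align*}
The first summand is $o_{\mathbb{P}}(1)$ by the near-tie argument above, while for the second, Theorem~\ref{Theorem:resampling-consistency} already established that $({\bm{e}}_j'\hat{\bm{\Omega}}^{\texttt{KJ}}_n{\bm{e}}_j)^{-1/2}\sqrt{n}(\tilde{\beta}_{(j)}-\beta_{(j)})\leadsto\mathcal{N}(0,1)$. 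Slutsky's theorem then gives that the argument of $\Phi(\cdot)$ converges weakly to a standard normal.

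To conclude, I would mirror the final step of Theorem~\ref{Theorem:resampling-consistency}: by continuity of $\Phi$ and the bounded convergence theorem,
\begin{align*}
\lim_{n\to\infty}\mathbb{P}\Big(\mathbb{P}\big(\tilde{\beta}_{(j)}^*\leq \beta_{\hat{j}}\,\big|\,\{\z_{i,n}\}_{i=1}^n\big)\leq s\Big)
= \mathbb{P}\big(\Phi(Z)\leq s\big) = s,
\end{align*}
where $Z\sim\mathcal{N}(0,1)$. The root-$n$ consistency statement for $\tilde{\beta}_{(j)}$ as an estimator of $\beta_{\hat{j}}$ follows by combining the triangle inequality with the analogous consistency statement for $\beta_{(j)}$ in Theorem~\ref{Theorem:resampling-consistency} and the $o_{\mathbb{P}}(n^{-1/2})$ bound on $\beta_{\hat{j}}-\beta_{(j)}$.

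The main obstacle is bookkeeping on the bad event $\{\hat{j}\notin\mathcal{H}_{(j)}\}$: although its probability vanishes, the inner conditional probability on this event is not obviously bounded. This is handled exactly as in Step~2 of the proof of Theorem~\ref{Theorem:resampling-consistency}, by conditioning on $\{\hat{j}\in\mathcal{H}_{(j)}\}$ and absorbing the complement into an $o_{\mathbb{P}}(1)$ term lower bounded by zero and upper bounded by $\mathbb{P}(\hat{j}\notin\mathcal{H}_{(j)})\to 0$, so that the bounded convergence argument goes through unchanged.
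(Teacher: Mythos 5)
Your proposal is correct and follows essentially the same route as the paper: both reduce the claim to Theorem \ref{Theorem:resampling-consistency} via the lemma showing $\lim_{n\to\infty}\mathbb{P}(\hat{j}\in\mathcal{H}_{(j)})=1$ and then reuse Step 3 of that theorem's proof together with bounded convergence. If anything, your treatment is slightly more careful than the paper's, which asserts $\mathbb{P}(\beta_{\hat{j}}=\beta_{(j)})\to 1$ outright, whereas you correctly work only with the $o(n^{-1/2})$ gap guaranteed by membership in the near-tie set and absorb it via Slutsky after the $\sqrt{n}$-standardization.
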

\begin{proof}
    Because of the consistency in Lemma 3, we have
    \[
    \lim_{n\rightarrow\infty}\mathbb{P} (\beta_{\hat{j}} = \beta_{(j)}) = 1.
    \]
    Therefore, 
    \begin{align*}
        \lim_{n\rightarrow\infty}|\mathbb{P}\big( 	\tilde{\beta}_{(j)}^* \leq  \beta_{(j)}   |\{ \z_{i,n}\}_{i=1}^n   \big) -	\mathbb{P}\big( 	\tilde{\beta}_{(j)}^* \leq  \beta_{(j)}   |\{ \z_{i,n}\}_{i=1}^n   \big) | \le \lim_{n\rightarrow\infty}\mathbb{P}(\beta_{(j)}\neq 	\hat{\beta}_{(j)}|\{ \z_{i,n}\}_{i=1}^n ) = 0. 
    \end{align*}
    The result then follows by applying Step 3 in the Proof of Theorem \ref{Theorem:resampling-consistency}. 
\end{proof}

\section{Additional simulation and empirical results}

\subsection{Practical implementation}\label{Sec:implementation}

In this section, we discuss the choice of tuning parameters (including $B$, $\delta$, $\bl$ and $\br$) of the proposed method in Section 2.2. For the number of repetitions for our resampling procedure, we recommend using $B=2,000$ as a good balance between computational load and statistical inference accuracy. 

For the tuning pair $(\br, \bl)$, from our theoretical analysis, we need to ensure that the distance between $\br$ and $\bl$ is of the order $n^{\delta}$ with $\delta\in (0,0.5)$ to guarantee the statistical validity of our proposed procedure. To achieve this goal, for the policy $\beta_{(j)}$, we adopt the tuning pair of the form 
\begin{align*}
    \bl^j= n^{-\delta}\cdot s_{\hat{j}}^{\delta} \cdot c_{\texttt{L}}^j, \quad  \br^j= n^{-\delta}\cdot s_{\hat{j}}^{\delta}\cdot c_{\texttt{R}}^j, 
\end{align*}
where $s_{\hat{j}}$ is the $\hat{j}$th element in the diagonal of the estimated covariance matrix $\hat{\bm{\Omega}}^{\texttt{KJ}}_n$, $c_{\texttt{L}}^j$ and $c_{\texttt{R}}^j$ are positive constants. 

The constants $c_{\texttt{L}}^j$ and $c_{\texttt{R}}^j$ can significantly impact the performance of the proposed approach in finite samples. In the extreme cases, on the one hand, if both $c_{\texttt{L}}^j$ and $c_{\texttt{R}}^j$ are overly large, the estimated near tie set might include more policies than necessary and our approach is only valid if all true policy effects are closely ties. On the other hand, if $c_{\texttt{L}}^j$ and $c_{\texttt{R}}^j$ are both closer to zero, our approach reduces to a standard parametric bootstrap approach, which is problematic in the presence of tied policy effects. To present a robust algorithm in finite samples, we thus adopt the following ``double-bootstrap" method as discussed in \cite{claggett2014meta} (note that double-bootstrapped statistics are labelled with double-star superscripts): 



\begin{enumerate}     
\item For $j=1,...,d$, set $\beta_j^*=\Delta\cdot\frac{\sum_{j=1}^{d}\hat\beta_j}{d}+(1-\Delta)\cdot\hat\beta_j$, where
$$\Delta={\min}\Big\{1,\frac{\sum_{j=1}^d s_{\hat{j}}}{n\sum_{j=1}^d(\hat\beta_j-\Bar{\hat\beta})^2}\times n^{0.05}\Big\}.$$

\item For every candidate pair $(c_{\texttt{L}}, c_{\texttt{R}})$ such that $c_{\texttt{L}}\in \mathcal{C}_{\texttt{L}}$ and $c_{\texttt{R}}\in \mathcal{C}_{\texttt{R}}$, do
        \begin{enumerate}[]
            \item For $t\gets 1$ to $T$, do
            \begin{enumerate}
			\item Generate $\hat{\bbeta}^*=(\hat\beta_1^*,...,\hat\beta_{d}^*)$ from $ \mathcal{N}(\bbeta^*,\hat \Omega_n^{\texttt{KJ}}/n)$, where $\bbeta^* = (\beta_1^*, \ldots, \beta^*_d)'$, and denote the ordered values in $\bbeta^*$ as $\beta_{(1)}^*\geq \ldots \geq \beta_{(d)}^*$.
			\item For $r\gets 1$ to $R$, do
			\begin{enumerate}
			    \item  Generate double bootstrap statistics $\hat{\bm{\beta}}^{**} \triangleq (\hat{\beta}_1^{**},...,\hat{\beta}_{d}^{**})'$ from $\mathcal{N}(\hat{\bbeta}^*,\hat \Omega_n^{\texttt{KJ}}/n)$, and denote the ordered values of $\hat{\bm{\beta}}^{**}$ as $\hat{\beta}_{(1)}^{**}\geq ...\geq \hat{\beta}_{(d)}^{**}$. 
			    \item Record 
			    $w_{k, (j)}^{**}=\mathds{1}\{- c_{\texttt{L}}\cdot n^{-\delta}\cdot s_{\hat{j}}^{\delta}\le(\hat{\beta}_k^{**}-\beta_{(j)}^{**})\le  c_{\texttt{R}}\cdot n^{-\delta}\cdot s_{\hat{j}}^{\delta}\}$
			    and
			    $\tilde{\beta}^{**}_{(j)}=\sum_{k=1}^{d}w_{k, (j)}^{**} \hat{\beta}_j^{**}/\sum_{k=1}^{d}w_{k, (j)}^{**}$, for $j=1,\ldots, d$.
			\end{enumerate}
            \item Calculate ${B}_{j,t}( c_{\texttt{L}}, c_{\texttt{R}})=\frac{1}{R} \sum_{r=1}^R\mathds{1}\big(\tilde{\beta}^{**, r}_{(j)}\le\beta^{*,r}_{(j)}\big) $,  for $j=1,\ldots, d$.
	    \end{enumerate}
        \end{enumerate}
 \item Record the loss function  
     \begin{align}\label{Eq:tuning-loss-function}
     {L}_j( c_{\texttt{L}}, c_{\texttt{R}})=\frac{1}{T}\sum_{t=1}^{T}\Big( {B}_{j,(t)}( c_{\texttt{L}}, c_{\texttt{R}})-\frac{t}{T+1}\Big)^2,
    \end{align}
 where ${B}_{j,(t)}( c_{\texttt{L}}, c_{\texttt{R}})$ is the $t$-th smallest statistics in ${B}_{j,1}( c_{\texttt{L}}, c_{\texttt{R}}), \ldots, {B}_{j,T}( c_{\texttt{L}}, c_{\texttt{R}})$. 
\item Choose the pair $( c_{\texttt{L}}^{j}, c_{\texttt{R}}^j)$ for inferring $\beta_{(j)}$ and $\beta_{\hat{j}}$ that minimizes ${L}_j( c_{\texttt{L}}, c_{\texttt{R}})$, that is
\begin{align*}
( c_{\texttt{L}}^{j}, c_{\texttt{R}}^j) = \underset{(c_{\texttt{L}}, c_{\texttt{R}})\in \mathcal{C}}{\min}{L}_j( c_{\texttt{L}}, c_{\texttt{R}}), \quad j=1, \ldots, d_0. 
\end{align*}
\end{enumerate}
 Note that we only use the above procedure to choose the tuning parameters $c_{\texttt{L}}^j$ and $c_{\texttt{R}}^j$, meaning that we do not use the resampled statistics in Step 1 to carry out inference on $\beta_{(j)}$. In Step 1, $\Delta$ is adopted to stabilize the performance of the tuning parameter selection in finite samples, and $\Delta$ only takes a close-to-zero value whenever limited variation is found between policy effect estimates.


Following Theorem \ref{Theorem:resampling-consistency}, we know that $\mathbb{P}\big( 	\tilde{\beta}_{(j)}^* \leq  \beta_{(j) }   |\{ \z_{i,n}\}_{i=1}^n   \big)$ roughly follows $\text{Unif}(0,1)$ when the sample size $n$ is large. Given a desirable tuning pair $( c_{\texttt{L}}, c_{\texttt{R}})$, we would thus expect that ${B}_{j,(1)}( c_{\texttt{L}}, c_{\texttt{R}}),...,{B}_{j,(T)}( c_{\texttt{L}}, c_{\texttt{R}})$ share a similar distribution with the ordered statistics of i.i.d. $\text{Unif}(0,1)$ random variables. The loss function defined in Eq \eqref{Eq:tuning-loss-function} measures the average of squared differences between ${B}_{j,(t)}( c_{\texttt{L}}, c_{\texttt{R}})$ and the expected value of the order statistics of the $\text{Unif}(0,1)$ random variables. Given the rational above, we would expect that the optimal tuning pair $( c_{\texttt{L}}^{j}, c_{\texttt{R}}^j)$ minimize such a loss.


We further comment on several implementation details.  Our numerical results suggest using $R=200$ and $T=40$ can provide reasonable choice of the tuning parameters in finite samples. In addition, when the loss function ${L}_j( c_{\texttt{L}}, c_{\texttt{R}})$ do not fluctuate substantially over all considered pairs $( c_{\texttt{L}}, c_{\texttt{R}})$. In this case, let $\gamma$ denote the 97.5th percentile of ${L}_U=\frac{1}{T}\sum_{t=1}^T({U}_{(b)}-\frac{t}{T+1})^2$ and ${U}_{(1)},...,{U}_{(T)}$ are ordered observations from $\text{Unif}(0,1)$ distribution, we choose $(\bar{c}_L^j,\bar{c}_R^j)$ which is the mean of all plausible pairs such that ${L}_j( c_{\texttt{L}}, c_{\texttt{R}})<\gamma$. Lastly, as for the candidate region of $c_{\texttt{L}}$ and $c_{\texttt{R}}$, we first consider selecting $c_L$ from 0 to $\frac{2(\hat\beta_{(1)}-\hat\beta_{(j)})n^\delta}{s_{\hat{j}}^\delta}$ and $c_R$ from 0 to $\frac{2(\hat\beta_{(j)}-\hat\beta_{(d)})n^\delta}{s_{\hat{j}}\delta}$. Then based on the values of $L_j( c_{\texttt{L}}, c_{\texttt{R}})$ for different tuning pairs, we may choose to expand or shrink the candidate region to make our algorithm more efficient. 

\subsection{Robustness to different tuning choice}

We summarizing our simulation results with different choices of the tuning parameter $\delta\in\{ 0.05, 0.15, 0.25 \}$, $R\in\{200, 500\}$ and $T\in\{40, 100, 200\}$. To avoid redundancy, we showcase the results with $\bm{\beta}=0$ and $\beta_{j} = \Phi^{-1}\big( \frac{j}{d+1} \big),\ j = 1,\ldots, d$ while $q_n$ takes value $141$ or $561$. We report the coverage probabilities and asymptotic biases for estimating the top two policies (i.e., $d_0=2$) $\beta_{(1)}$ and $\beta_{(2)}$. Supplementary Materials Table \ref{table:delta} summarizes the simulation results  with different choice of $\delta$ and fixed $R = 200$ and $T=40$. There, we observe that the performance of our method is overall robust to the choice of different $\delta$ in a variety of settings. Though when $q_n$ is large and no policy is effective, smaller $\delta$ likely leads to under-covered confidence intervals. 
Supplementary Materials Table \ref{table:tuning} summarizes the simulation results under $R\in\{200, 500\}$ and $T\in\{40, 100, 200\}$, while fixing $\delta = 0.25$. Our results demonstrate that when $R$ or $T$ increases, the coverage probabilities are slightly increased and biases are marginally reduced. Overall, we observe that the proposed method is not very sensitive to the choice of various tuning parameters $\delta$, $T$, and $R$. To guide readers for the selection of tuning parameters to reach an optimal accuracy and computational efficiency trade-off, we further provide the computational time under various choices of $T$ and $R$ in the Supplementary Materials Table \ref{table:computational-time}. In practice, to reduce computational cost while maintaining valid statistical inference, we adopt the following tuning set in the rest of the numerical studies: $R=200$, $T=100$, and $\delta=0.25$.


	\begin{table}[h!]
\begin{center}
     \caption{Coverage probability and asymptotic bias with different choices of $\delta$ \label{table:delta}}
   \centering
\begin{adjustbox}{width=.9\textwidth,center}
\begin{tabular}{ccccccccccc}
     \hline\hline
     & \multicolumn{9}{c}{No policy is effective,  $\beta_{(1)} = \beta_{(2)}=0$ }\\\cline{3-10}
     & &\multicolumn{3}{c}{  $q_n=141$ }&  & \multicolumn{3}{c}{  $q_n=561$}\\
           \cline{3-5} \cline{7-10}
         
             &  &   $\delta$=0.05 & $\delta$=0.15 & $\delta$=0.25 & & $\delta=0.05$ & $\delta$=0.15 & $\delta$=0.25 & &   \\
       \cline{3-5} \cline{7-10}
    $\beta_{(1)}$ &Cover  & 0.97(0.00) & 0.95(0.01) & 0.96(0.01)  &   &  0.93(0.01)
    &0.92(0.01) & 0.96(0.01) \\

    & $\sqrt{n}$Bias & 0.01(0.02) &0.03(0.04) &
   -0.02(0.02) & & 0.03(0.01)
     & 0.03(0.01)& -0.01(0.01) \\[0.15cm]

   $\beta_{(2)}$ &Cover & 0.98(0.00)  & 0.94(0.01) & 0.97(0.01) & & 0.93 (0.01)& 0.93(0.01) & 0.96(0.01)\\

     & $\sqrt{n}$Bias & 0.00(0.01) & 0.02(0.01) &0.01(0.01) &  &0.03(0.01) &0.02(0.01) & 0.01(0.01) \\[0.5cm]
       & \multicolumn{9}{c}{Top two policies are effective,  $\beta_{(1)} = 0.97, \beta_{(2)} = 0.43$ }\\\cline{3-10}
     & &\multicolumn{3}{c}{  $q_n=141$ }&  & \multicolumn{3}{c}{  $q_n=561$}\\
           \cline{3-5} \cline{7-10}
         
             &  &   $\delta$=0.05 & $\delta$=0.15 & $\delta$=0.25 & & $\delta=0.05$ & $\delta$=0.15 & $\delta$=0.25 & &   \\
             $\beta_{(1)}$ &Cover  & 0.94(0.01) & 0.98(0.00) & 0.95(0.01)  &   &  0.93(0.01)
    &0.95(0.01) & 0.94(0.01) \\

    & $\sqrt{n}$Bias & 0.05(0.05) &-0.03(0.04) &
   -0.04(0.05) & & -0.07(0.06)
     & -0.06(0.07)& -0.07(0.07) \\[0.15cm]

   $\beta_{(2)}$ &Cover & 0.94(0.01)  & 0.97(0.01) & 0.95(0.01) & &  0.95(0.01)& 0.96(0.01) & 0.95(0.01)\\

     & $\sqrt{n}$Bias & 0.08(0.08) & 0.05(0.06) & 0.06(0.06) &  &0.08(0.08) &0.07(0.08) & 0.08(0.09) \\[0.15cm]
     \hline\hline
     \end{tabular}
     \end{adjustbox}
      \begin{tablenotes}\footnotesize
   \item Note: We fix $R=200$ and $T=40$. ``Cover" is the empirical coverage of the 95\% confidence interval for $\beta_{(j)}$ and `` $\sqrt{n}$Bias " captures the root-$n$ scaled Monte Carlo bias for estimating $\beta_{(j)}$. Monte Carlo standard errors are provided in the parenthesis.
     \end{tablenotes}
\end{center}
 \end{table}

 \subsection{Computational time with different tuning parameters}
 
 In this section, we summarize computational time with respect to various choices of $R$, $T$, and $n$ to make the computational costs transparent for readers. Here, we fix $\delta$ at $0.25$ and select $20$ candidate tuning parameters. Table \ref{table:computational-time} demonstrates that the computational costs are largely determined by $T$ and $R$. When both $T$ and $R$ reach $500$, running our method once takes approximately one hour. For simulation study with multiple iterations, we recommend setting $R=200$ and $T \leq 200$ to achieve a reasonable trade-off between accuracy and computational efficiency.


	\begin{table}[h!]
\begin{center}
     \caption{Coverage probability and asymptotic bias with different choices of $T$ and $R$ \label{table:tuning}}
   \centering
\begin{adjustbox}{width=.9\textwidth,center}
\begin{tabular}{cccccccccccc}
     \hline\hline
     & \multicolumn{9}{c}{No policy is effective,  $\beta_{(1)} = \beta_{(2)}=0$ }\\\cline{4-10}
     & &\multicolumn{4}{c}{  $q_n=141$ }&  & \multicolumn{3}{c}{  $q_n=561$}\\
           \cline{4-6} \cline{8-11}
         
             &  &  & $T=40$ & $T=100$ & $T=200$ & & $T=40$ & $T=100$ & $T=200$ & &   \\
       \cline{4-6} \cline{8-11}
    $\beta_{(1)}$ & $R=200$ &Cover  & 0.96(0.01) & 0.96(0.01) & 0.96(0.01)  &   &  0.96(0.01)
    &0.96(0.02) & 0.96(0.02) \\

    &   & $\sqrt{n}$Bias & -0.02(0.02) & 0.03(0.06) &
   0.02(0.06) & & -0.01(0.01)
     & -0.01(0.12)& 0.01(0.12) \\[0.15cm]
     
        & $R=500$ &Cover  & 0.96(0.01) & 0.95(0.01) & 0.96(0.01)  &   &  0.97(0.01)
    &0.96(0.01) & 0.96(0.01) \\

    &   & $\sqrt{n}$Bias & 0.01(0.02) & -0.02(0.04) &
   0.01(0.02) & & 0.01(0.01)
     & 0.01(0.01)& 0.01(0.01) \\[0.15cm]

   $\beta_{(2)}$ & $R=200$ & Cover & 0.96(0.01)  & 0.96(0.01) & 0.97(0.01) & &  0.96(0.01)& 0.96(0.01) & 0.97(0.01)\\

     &  &$\sqrt{n}$Bias & 0.01(0.01) & -0.01(0.01) & 0.01(0.01) &  & 0.01(0.01) & 0.01(0.01) & -0.01(0.01)
     \\[0.15cm]

       & $R=500$ & Cover & 0.95(0.01)  & 0.95(0.01) & 0.96(0.01) & & 0.96(0.01)& 0.95(0.01) & 0.96(0.01)\\

      &  &$\sqrt{n}$Bias & 0.01(0.01) & 0.01(0.01) & 0.01(0.01) &  &0.01(0.01) &0.01(0.01) & 0.00(0.01)
     \\[0.5cm]

       & \multicolumn{9}{c}{Top two policies are effective,  $\beta_{(1)} = 0.97, \beta_{(2)} = 0.43$ }\\\cline{4-10}
     & &\multicolumn{4}{c}{  $q_n=141$ }&  & \multicolumn{4}{c}{  $q_n=561$}\\
           \cline{4-6} \cline{8-11}
         
             &  &  & $T=40$ & $T=100$ & $T=200$ & & $T=40$& $T=100$ & $T=200$ & &   \\
             
             $\beta_{(1)}$ & $R=200$ &Cover  & 0.95(0.01) & 0.95(0.01) & 0.96(0.01)  &   &  0.94(0.01)
    &0.95(0.01) & 0.95(0.01) \\

    & & $\sqrt{n}$Bias & -0.04(0.05) & 0.02(0.08) &
   0.02(0.08) & & -0.07(0.07)
     & 0.07(0.16)& 0.06(0.18) \\[0.15cm]
     
      & $R=500$ &Cover  & 0.96(0.01) & 0.96(0.01) & 0.96(0.01)  &   &  0.94(0.01)
    &0.95(0.01) & 0.95(0.01) \\  
    
      & & $\sqrt{n}$Bias & 0.02(0.04) & 0.01(0.04) &
   0.01(0.05) & & 0.05(0.07)
     & 0.05(0.07)& 0.04(0.07) \\[0.15cm]

   $\beta_{(2)}$ &$R=200$ &Cover & 0.95(0.01)  & 0.95(0.01) & 0.95(0.01) & &  0.95(0.01)& 0.95(0.01) & 0.96(0.01)\\

     &  & $\sqrt{n}$Bias & 0.06(0.06) & 0.05(0.06) & 0.05(0.06) &  &0.08(0.09) & 0.07(0.08) & 0.07(0.09) \\[0.15cm]

     & $R=500$ &Cover  & 0.96(0.01) & 0.96(0.01) & 0.96(0.01)  &   &  0.95(0.01)
    & 0.96(0.01) & 0.96(0.01) \\  
    
     &  & $\sqrt{n}$Bias & 0.06(0.08) & 0.05(0.06) & 0.05(0.06) &  & 0.07(0.08) & 0.07(0.08) & 0.06(0.09) \\[0.15cm]

     \hline\hline
     \end{tabular}
     \end{adjustbox}
      \begin{tablenotes}\footnotesize
   \item Note: We fix $\delta = 0.25$. ``Cover" is the empirical coverage of the 95\% confidence interval for $\beta_{(j)}$ and `` $\sqrt{n}$Bias " captures the root-$n$ scaled Monte Carlo bias for estimating $\beta_{(j)}$. Monte Carlo standard errors are provided in the parenthesis. 
     \end{tablenotes}
\end{center}
 \end{table}

 	\begin{table}[h!]
\begin{center}
   \centering
\begin{adjustbox}{width=.9\textwidth,center}
\begin{tabular}{cccccccccccc}
      & \multicolumn{9}{c}{Computational time with respect to various $n$, $q_n$, $T$, and $R$}\\
       \hline\hline

     & &\multicolumn{4}{c}{  $q_n=141$ ($s\times 10^{3}$) }&  & \multicolumn{3}{c}{  $q_n=561$ ($s\times 10^{3}$)}\\
           \cline{4-6} \cline{8-11}
         
             &  &  & $T=40$ & $T=200$ & $T=500$ & & $T=40$ & $T=200$ & $T=500$ & &   \\
       \cline{4-6} \cline{8-11}
    $n = 500$ & $R=200$ &  & 0.10 & 0.48  & 1.27  &  &0.11  & 0.48 & 1.35  \\

        & $R=500$ &  & 0.23 & 1.20 & 3.31  &   & 0.24 
    &1.38 & 3.50 \\  
    
    \cline{1-11}
   
   $n = 2000$ & $R=200$ &  & 0.10  & 0.50 & 1.29 & &0.12  & 0.51 & 1.40 \\

       & $R=500$ &  & 0.26  & 1.23  & 3.44 & &0.27 & 1.39 &3.98 \\
     
    \cline{1-11}
    
             $n = 5000$ & $R=200$ & & 0.11 & 0.52 & 1.32  &   &  0.14
    &0.55 & 1.51 \\

      & $R=500$ &  &0.27 &1.25 & 3.50  &   & 0.30 
    &1.40 & 4.05 \\  
    
    \cline{1-11}
     

    
     \hline\hline
     \end{tabular}
     \end{adjustbox}
   \caption{  \footnotesize The unit: 1,000 seconds.   We fix $\delta = 0.25$ and set $20$ candidate tuning pairs for $(c_{\texttt{L}}, c_{\texttt{R}})$.  The simulations are performed on a Lenovo NeXtScale nx360m5 node (24 cores per node) equipped with Intel Xeon Haswell processor. The core frequency is 2.3 Ghz and supports 16 floating-point operations per clock period.  \label{table:computational-time}}
\end{center}
 \end{table}

 \subsection{Simulation results: $\beta_{\hat{j}}$}
 
The simulation results presented in Table  \ref{table:sim-d5-hetero-1st-hat} and \ref{table:sim-d5-hetero-2nd-hat} help us confirm our theoretical analyses in Corollary \ref{corollary:observed-policy}, and we observe similar trends compared to the results in the main manuscript. 
 
  \begin{table}[h!]
\caption{ Simulation results ($d=5, \text{heterogeneity},  \beta_{\hat{1}}$)}\label{table:sim-d5-hetero-1st-hat}

\begin{adjustbox}{width=.8\textwidth,center}
\begin{tabular}{ccccccc}
     \hline\hline
           & &\multicolumn{4}{c}{  $\beta_{j} = \Phi^{-1}\big( \frac{j}{d+1} \big), \quad \bm{\gamma}_n=0, \quad j = 1,\ldots, d$} \\[0.15cm]
           \cline{3-7}
           &&\multicolumn{4}{c}{ $ \x_{i,n}\sim \mathcal{N}(0, \Sigma), \quad \w_{i,n} = \mathds{1}( \tilde{\w}_{i,n}\geq \Phi^{-1}(0.98) ) $}   \\[0.15cm]
            \cline{3-7} 
             & &  Proposed+KJ &  Proposed+HCK & Proposed+HC3 & Proposed+EW & No adjustment+KJ\\[0.15cm]
       \cline{3-7} 
   $q_n=1$ &Cover  & 0.96(0.01) & 0.96(0.01) & 0.95(0.01)  & 0.96(0.01) & 0.95(0.01)
    \\
    
 & $\sqrt{n}$Bias & 0.04(0.06)
     &0.04(0.05) &0.04(0.05) & -0.04(0.05) &0.04(0.06)
    \\[0.15cm]

   $q_n=141$ &Cover & 0.95(0.01) & 0.95(0.01) & 0.93(0.01) &  0.95(0.01)& 0.95(0.01)\\

     & $\sqrt{n}$Bias & 0.07(0.07) & 0.06(0.07) & 0.07(0.06) & 0.06(0.06) &0.05(0.07)\\[0.15cm]
     
       $q_n=281$ &Cover  & 0.94(0.01) & 0.95(0.01) & 0.84(0.01)  & 0.82(0.01) & 0.94(0.01)
    \\ 
    
    & $\sqrt{n}$Bias & -0.09(0.08)
     & -0.07(0.07)& -0.10(0.08) & -0.11(0.07) &-0.07(0.08)
    \\[0.15cm]

   $q_n=421$ &Cover & 0.94(0.01) & 0.91(0.01) &0.76(0.02) & 0.75(0.01) & 0.93(0.01)\\

     & $\sqrt{n}$Bias & -0.09(0.10) & -0.10(0.09)&-0.15(0.09) & -0.16(0.09) & -0.10(0.09)\\[0.15cm]
     
      $q_n=561$ &Cover &0.94(0.01) & 0.90(0.01) & 0.67(0.02)& 0.65(0.01) & 0.78(0.01)\\
    
     & $\sqrt{n}$Bias & -0.15(0.14) & -0.12(0.10) &-0.17(0.23) & -0.25(0.12) &0.15(0.11)\\[0.15cm]
     
      $q_n=631^*$ &Cover &0.92(0.01) &0.89(0.02) &0.45(0.02) & 0.42(0.01) & 0.68(0.01)\\
    
     & $\sqrt{n}$Bias & -0.19(0.18) & -0.22(0.13) &-0.35(0.29) & -0.54(0.22) &0.28(0.18)\\[0.15cm]
      \cline{3-7}

     & & \multicolumn{4}{c}{$\x_{i,n} = \mathds{1}(\tilde{\x}_{i,n} > 0), \quad \w_{i,n} \sim \mathcal{N}(0, I) $}    \\[0.15cm]
      \cline{3-7} 
      & &  Proposed+KJ &  Proposed+HCK  &  Proposed+HC3 & Proposed+EW & No adjustment+KJ\\[0.15cm]
       \cline{3-7} 
      $q_n=1$ &Cover  &0.97(0.01) &0.97(0.01) & 0.95(0.01) & 0.95(0.01) & 0.96(0.01)
    \\

       & $\sqrt{n}$Bias & -0.05(0.10) & -0.06(0.09)& -0.06(0.10) & -0.10(0.11) &-0.04(0.09)
       \\[0.15cm]
       
    $q_n=141$ &Cover &0.97(0.01) & 0.95(0.01)& 0.94(0.01)&  0.93(0.01) &0.95(0.01)\\

        & $\sqrt{n}$Bias &-0.06(0.11) & -0.08(0.12)& -0.07(0.11)& 0.13(0.12) &0.09(0.12) \\[0.15cm]
        
          $q_n=281$ &Cover  & 0.96(0.01) & 0.94(0.01) & 0.86(0.02) & 0.85(0.01) & 0.95(0.01)
    \\

    & $\sqrt{n}$Bias & -0.09(0.13)
     & -0.10(0.13) &-0.10(0.13) & -0.15(0.12) & -0.09(0.13)
    \\[0.15cm]

   $q_n=421$ &Cover &0.94(0.01) & 0.93(0.01) & 0.75(0.02) & 0.72(0.01) & 0.93(0.01)\\

     & $\sqrt{n}$Bias &0.11(0.17) &-0.12(0.13) & 0.18(0.17)& -0.20(0.17) &0.14(0.14)\\[0.15cm]
     
      $q_n=561$ &Cover & 0.94(0.01) &0.90(0.01) &0.51(0.02) &  0.48(0.01)& 0.92(0.01)\\

     & $\sqrt{n}$Bias &-0.15(0.22) & -0.21(0.20)&-0.25(0.23) & -0.46(0.26) &-0.21(0.20)\\[0.15cm]
     
     $q_n=631$ &Cover &0.91(0.01) &0.90(0.01) & 0.48(0.02) & 0.45(0.01)& 0.80(0.01)\\
    
     & $\sqrt{n}$Bias &-0.21(0.20) & -0.23(0.22)& 0.41(0.30)& -0.53(0.20) &0.35(0.22)\\[0.15cm]
      \cline{2-5}

     \hline\hline
     \end{tabular}
     \end{adjustbox}
      \begin{tablenotes}\footnotesize
  \item Note: ``Cover" is the empirical coverage of the 95\% confidence interval for $\beta_{\hat{1}}$ and `` $\sqrt{n}$Bias " captures the root-$n$ scaled Monte Carlo bias for estimating $\beta_{\hat{1}}$. `` * " indicates that $\hat{\bm{\Omega}}_n^{\texttt{KJ}}$ is not positive semi-definite in some Monte Carlo samples.
     \end{tablenotes}
 \end{table}

  \begin{table}[h!]
\caption{ Simulation results ($d=5,\text{heterogeneity}, \beta_{\hat{2}}$)}\label{table:sim-d5-hetero-2nd-hat}

\begin{adjustbox}{width=.8\textwidth,center}
\begin{tabular}{ccccccc}
     \hline\hline
           & &\multicolumn{4}{c}{  $
           \bm{\beta}_j = \Phi^{-1}\big( \frac{j}{d+1} \big),  \quad \bm{\gamma}_n=0, \quad j = 1,\ldots,d$} \\[0.15cm]
           \cline{3-7}
           &&\multicolumn{4}{c}{ $ \x_{i,n}\sim \mathcal{N}(0, \Sigma), \quad \w_{i,n} = \mathds{1}( \tilde{\w}_{i,n}\geq \Phi^{-1}(0.98) ) $}   \\[0.15cm]
            \cline{3-7} 
             & &  Proposed+KJ & Proposed+HCK &Proposed+HC3  & Proposed+EW &No adjustment+KJ\\[0.15cm]
       \cline{3-7} 
   $q_n=1$ &Cover  & 0.96(0.01) & 0.97(0.01) & 0.97(0.01)  & 0.96(0.01)& 0.97(0.01)
    \\
    
 & $\sqrt{n}$Bias & -0.04(0.06)
     & -0.03(0.06) &-0.03(0.06) & -0.04(0.07)& -0.04(0.06)
    \\[0.15cm]

   $q_n=141$ &Cover & 0.96(0.01) & 0.96(0.02) & 0.88(0.02) & 0.90(0.01) &0.94(0.01) \\

     & $\sqrt{n}$Bias & -0.05(0.08) & -0.05(0.08) & -0.10(0.08) & -0.09(0.06) & -0.07(0.07)\\[0.15cm]
     
       $q_n=281$ &Cover  &0.95(0.01) & 0.94(0.02) & 0.86(0.02)  & 0.84(0.01)& 0.91(0.02)
    \\
    
    & $\sqrt{n}$Bias & 0.07(0.09)
     & 0.07(0.09)& 0.13(0.08) & -0.15(0.10) & 0.12(0.10)
    \\[0.15cm]

   $q_n=421$ &Cover & 0.94(0.01) & 0.93(0.02) &0.77(0.02) & 0.72(0.02) &0.71(0.02) \\

     & $\sqrt{n}$Bias & -0.10(0.13) & -0.12(0.13)&-0.15(0.11) & -0.17(0.13)& -0.19(0.17)\\[0.15cm]
     
      $q_n=561$ &Cover &0.94(0.01) & 0.92(0.02) & 0.65(0.02)& 0.60(0.01)& 0.69(0.02)\\
    
     & $\sqrt{n}$Bias & -0.16(0.17) & -0.18(0.16) &-0.18(0.15) & 0.20(0.13)& 0.35(0.22)\\[0.15cm]
     
      $q_n=631^*$ &Cover &0.93(0.01) &0.92(0.02) &0.44(0.02) & 0.42(0.01)& 0.50(0.02) \\
    
     & $\sqrt{n}$Bias & -0.18(0.17) & -0.23(0.21) &-0.45(0.19) & -0.49(0.17) & 0.48(0.30)\\[0.15cm]
      \cline{3-7}

     & & \multicolumn{4}{c}{$\x_{i,n} = \mathds{1}(\tilde{\x}_{i,n} > 0), \quad \w_{i,n} \sim \mathcal{N}(0, I) $}    \\[0.15cm]
      \cline{3-7} 
      & &  Proposed+KJ &  Proposed+HCK  & Proposed+HC3 & Proposed+EW & No adjustment+KJ\\[0.15cm]
       \cline{3-7} 
     $q_n=1$ &Cover  & 0.96(0.01) &0.98(0.01) & 0.96(0.01) & 0.96(0.01)& 0.97(0.01)
    \\

       & $\sqrt{n}$Bias & -0.08(0.13) & -0.07(0.13)& -0.10(0.12) & 0.09(0.12) & -0.10(0.10)
       \\[0.15cm]
       
    $q_n=141$ &Cover &0.96(0.01) & 0.96(0.01)& 0.94(0.01)& 0.95(0.01) &0.95(0.01) \\

        & $\sqrt{n}$Bias &0.09(0.14) & 0.10(0.14)& 0.13(0.14)& 0.12(0.13) & 0.10(0.14) \\[0.15cm]
        
          $q_n=281$ &Cover  & 0.95(0.01) & 0.95(0.01) & 0.92(0.01) &0.90(0.01) & 0.95(0.01)
    \\

    & $\sqrt{n}$Bias & -0.11(0.16)
     & -0.11(0.15) &0.17(0.15) & 0.19(0.14)& 0.13(0.17)
    \\[0.15cm]

   $q_n=421$ &Cover &0.95(0.01) & 0.92(0.02) & 0.82(0.02) & 0.78(0.01) &0.94(0.01) \\

     & $\sqrt{n}$Bias &0.18(0.20) &-0.20(0.20) & -0.25(0.20)& -0.32(0.17) & -0.20(0.20)\\[0.15cm]
     
      $q_n=561$ &Cover & 0.94(0.01) &0.92(0.02) &0.67(0.02) & 0.64(0.01)& 0.73(0.01) \\

     & $\sqrt{n}$Bias &-0.26(0.24) & 0.22(0.24)&-0.41(0.23) & -0.48(0.23)& 0.27(0.25)\\[0.15cm]
     
     $q_n=631$ &Cover &0.92(0.01) & 0.87(0.02) & 0.52(0.02) &0.48(0.01) & 0.50(0.01)\\
    
     & $\sqrt{n}$Bias &0.30(0.27) & 0.41(0.29)& 0.50(0.28)&  -0.65(0.13)& 0.55(0.28)\\[0.15cm]
      \cline{2-5}

     \hline\hline
     \end{tabular}
     \end{adjustbox}
      \begin{tablenotes}\footnotesize
  \item Note: ``Cover" is the empirical coverage of the 95\% confidence interval for $\beta_{\hat{2}}$ and `` $\sqrt{n}$Bias " captures the root-$n$ scaled Monte Carlo bias for estimating $\beta_{\hat{2}}$. `` * " indicates that $\hat{\bm{\Omega}}_n^{\texttt{KJ}}$ is not positive semi-definite in some Monte Carlo samples.
     \end{tablenotes}
 \end{table}

\subsection{Simulation results: $d=10$}
This section provides an additional set of simulation results when $d=10$, which is larger than the setting ($d=5$) adopted in the main manuscript.  We investigate the performance of the five methods for estimating $\beta_{(2)}$, $\beta_{(5)}$, and $\beta_{(10)}$. Table \ref{table:sim-d10} - \ref{table:sim-d10-10th} demonstrate that 
without adjustment, the coverage probabilities for $\beta_{(5)}$ fall below $80\%$ when $q_n\geq 281$, while our proposed method reaches nominal level coverage regardless the ranking of $\beta_j$. ``Proposed + EW", ``Proposed + HCK", and ``Proposed + HC3" show similar trends compared to those when $d = 5$. 

 \begin{table}[h!]
\caption{ Simulation results ($d=10, \text{heterogeneity}, \beta_{(2)}$)}\label{table:sim-d10}

\begin{adjustbox}{width=.8\textwidth,center}
\begin{tabular}{ccccccc}
     \hline\hline
           & &\multicolumn{4}{c}{  $\beta_{j} = \Phi^{-1}\big( \frac{j}{d+1} \big), \quad \bm{\gamma}_n=0$, \quad j = 1, \ldots, d} \\[0.15cm]
           \cline{3-7}
           &&\multicolumn{4}{c}{ $ \x_{i,n}\sim \mathcal{N}(0, \Sigma), \quad \w_{i,n} = \mathds{1}( \tilde{\w}_{i,n}\geq \Phi^{-1}(0.98) ) $}   \\[0.15cm]
            \cline{3-7} 
             & &  Proposed+KJ &  Proposed+HCK  &  Proposed+HC3 & Proposed+EW &No adjustment+KJ\\[0.15cm]
       \cline{3-7} 
   $q_n=1$ &Cover  & 0.96(0.01) & 0.96(0.01) & 0.96(0.01) & 0.97(0.01)  & 0.94(0.01)
    \\
    
 & $\sqrt{n}$Bias & -0.05(0.07)
     &-0.04(0.07) & -0.05(0.06) &-0.06(0.07) &0.06(0.06)
    \\[0.15cm]

   $q_n=141$ &Cover & 0.96(0.01) & 0.95(0.01) &  0.96(0.01)& 0.96(0.01) & 0.94(0.01) \\

     & $\sqrt{n}$Bias & -0.06(0.07) & -0.07(0.07) & 0.07(0.08) & -0.07(0.07) &-0.06(0.06)\\[0.15cm]
     
       $q_n=281$ &Cover  &0.95(0.01) & 0.95(0.01) & 0.92(0.01)& 0.89(0.01)  & 0.92(0.01)
    \\
    
    & $\sqrt{n}$Bias & -0.07(0.09)
     & -0.08(0.09)& 0.15(0.10) & -0.18(0.08) &0.20(0.08)
    \\[0.15cm]

   $q_n=421$ &Cover & 0.96(0.01) & 0.94(0.01) & 0.88(0.01) &0.80(0.02) &0.90(0.01)\\

     & $\sqrt{n}$Bias & -0.08(0.11) & -0.10(0.11)& 0.20(0.11) &-0.25(0.10) &-0.25(0.09)\\[0.15cm]
     
      $q_n=561$ &Cover &0.95(0.01) & 0.94(0.01) & 0.75(0.01) & 0.67(0.02)&0.89(0.01) \\
    
     & $\sqrt{n}$Bias & 0.14(0.15) & 0.15(0.15) & -0.27(0.16) &-0.31(0.13) &-0.28(0.12)\\[0.15cm]
     
      $q_n=631^*$ &Cover &0.93(0.01) & 0.91(0.01) &  0.65(0.01)& 0.57(0.02) &0.88(0.01) \\
    
     & $\sqrt{n}$Bias & 0.22(0.20) & 0.27(0.19) & -0.33(0.20) & 0.35(0.17) &0.38(0.16)\\[0.15cm]
      \cline{3-7}

     & & \multicolumn{4}{c}{$\x_{i,n} = \mathds{1}(\tilde{\x}_{i,n} > 0), \quad \w_{i,n} \sim \mathcal{N}(0, I) $}    \\[0.15cm]
      \cline{3-7} 
      & &  Proposed+KJ & Proposed+HCK  & Proposed+HC3 &   Proposed+EW & No adjustment+KJ\\[0.15cm]
       \cline{3-7} 
       
      $q_n=1$ &Cover  &0.97(0.01) &0.96(0.01) & 0.96(0.01) & 0.96(0.01) &0.95(0.01)
    \\

       & $\sqrt{n}$Bias & -0.05(0.11) & -0.07(0.11)&  0.06(0.12) & -0.08(0.11) &-0.09(0.09)
       \\[0.15cm]
       
    $q_n=141$ &Cover &0.96(0.01) & 0.95(0.01)&  0.94(0.01)& 0.93(0.01)&0.94(0.01) \\

        & $\sqrt{n}$Bias &0.07(0.12) & 0.10(0.13)&  0.13(0.13) & 0.15(0.13)&0.10(0.11)\\[0.15cm]
        
          $q_n=281$ &Cover  & 0.95(0.01) & 0.95(0.01) & 0.91(0.01)& 0.88(0.01) & 0.93(0.01)
    \\

    & $\sqrt{n}$Bias & 0.14(0.15)
     & 0.16(0.16) & 0.20(0.13) & -0.28(0.15) &-0.29(0.13)
    \\[0.15cm]

   $q_n=421$ &Cover &0.94(0.01) & 0.93 (0.01) &  0.85(0.01) & 0.76(0.01) &0.92(0.01)\\

     & $\sqrt{n}$Bias & 0.20(0.19) & 0.22(0.18) & 0.25(0.17) & 0.37(0.18)&0.32(0.16)\\[0.15cm]
     
      $q_n=561$ &Cover & 0.93(0.01) &0.93(0.01) & 0.77(0.01) &0.67(0.02) &0.90(0.01)\\

     & $\sqrt{n}$Bias & 0.22(0.23) & 0.25(0.23)& -0.30(0.19) &0.41(0.22) &0.44(0.19)\\[0.15cm]
     
     $q_n=631$ &Cover &0.90(0.01) & 0.88(0.01) &   0.67(0.01) & 0.60(0.02) &0.88(0.01)\\
    
     & $\sqrt{n}$Bias &0.31(0.25) & 0.45(0.26)& -0.48(0.13) & 0.55(0.25)&0.61(0.23)\\[0.15cm]
      \cline{2-5}

     \hline\hline
     \end{tabular}
     \end{adjustbox}
      \begin{tablenotes}\footnotesize
  \item Note: ``Cover" is the empirical coverage of the 95\% confidence interval for $\beta_{(2)}$ and `` $\sqrt{n}$Bias " captures the root-$n$ scaled Monte Carlo bias for estimating $\beta_{(2)}$. ``$^*$" indicates that $\hat{\bm{\Omega}}_n^{\texttt{KJ}}$ is not positive semi-definite in some Monte Carlo samples.
     \end{tablenotes}
 \end{table}

 
  \begin{table}[h!]
\caption{ Simulation results ($d=10, \text{homogeneity}, \beta_{(5)}$)}\label{table:sim-d10-5th-homo}

\begin{adjustbox}{width=.8\textwidth,center}
\begin{tabular}{ccccccc}
     \hline\hline
           & &\multicolumn{4}{c}{  $\bm{\beta}=0, \quad \bm{\gamma}_j=1/j$, \quad $j = 1,\ldots, q_n$} \\[0.15cm]
           \cline{3-7}
           &&\multicolumn{4}{c}{ $ \x_{i,n}\sim \mathcal{N}(0, \Sigma), \quad \w_{i,n} = \mathds{1}( \tilde{\w}_{i,n}\geq \Phi^{-1}(0.98) ) $}   \\[0.15cm]
            \cline{3-7} 
             & &  Proposed+KJ  & Proposed+HCK  & Proposed+HC3 &Proposed+EW & No adjustment+KJ\\[0.15cm]
       \cline{3-7} 
    $q_n=1$ &Cover  & 0.95(0.01) & 0.95(0.01) & 0.95(0.01) & 0.93(0.01)  &0.92(0.01)
    \\
    
 & $\sqrt{n}$Bias & 0.01(0.01)
     &0.01(0.01) & 0.01(0.01) &0.01(0.01) & 0.04(0.02)
    \\[0.15cm]

   $q_n=141$ &Cover & 0.94(0.01) & 0.94(0.01) & 0.94(0.01) & 0.93(0.01) &0.82(0.01) \\

     & $\sqrt{n}$Bias & -0.01(0.01) & -0.01(0.01) & 0.02(0.02) & 0.02(0.01) &-0.17(0.02)\\[0.15cm]
     
       $q_n=281$ &Cover  &0.94(0.01) & 0.94(0.01) & 0.94(0.01) & 0.92(0.01)  &0.80(0.01)
    \\  
    
    & $\sqrt{n}$Bias & 0.02(0.02)
     & 0.02(0.02)& 0.02(0.02) & 0.05(0.02) &-0.18(0.02)
    \\[0.15cm]

   $q_n=421$ &Cover & 0.93(0.02) & 0.93(0.02) & 0.93(0.01) &0.80(0.02) &0.78(0.01)\\

     & $\sqrt{n}$Bias & -0.02(0.02) & -0.02(0.02)& 0.05(0.04) &0.17(0.03) &0.21(0.03)\\[0.15cm]
     
      $q_n=561$ &Cover &0.93(0.02) & 0.92(0.02) & 0.93(0.01) & 0.75(0.02)&0.76(0.01) \\
    
     & $\sqrt{n}$Bias & 0.02(0.02) & -0.03(0.02) & 0.05(0.04) & 0.32(0.04) &-0.28(0.04)\\[0.15cm]
     
      $q_n=631^*$ &Cover &0.91(0.02) & 0.90(0.01) & 0.90(0.01) & 0.73(0.02) &0.75(0.01) \\
    
     & $\sqrt{n}$Bias & -0.12(0.05) & -0.30(0.05) & 0.14(0.09) & 0.36(0.06) &-0.34(0.05)\\[0.15cm]
      \cline{3-7}

     & & \multicolumn{4}{c}{$\x_{i,n} = \mathds{1}(\tilde{\x}_{i,n} > 0), \quad \w_{i,n} \sim \mathcal{N}(0, I) $}    \\[0.15cm]
      \cline{3-7} 
      & &  Proposed+KJ & Proposed+HCK  &  Proposed+HC3 & Proposed + EW &No adjustment+KJ\\[0.15cm]
       \cline{3-7} 
     $q_n=1$ &Cover  &0.95(0.01) &0.95(0.02) & 0.94(0.01) & 0.93(0.02) &0.91(0.01)
    \\

       & $\sqrt{n}$Bias & 0.01(0.01) & 0.01(0.01)& -0.02(0.02)
       & 0.01(0.01) &0.06(0.03)\\[0.15cm]
       
    $q_n=141$ &Cover &0.94(0.01) & 0.94(0.01)& 0.94(0.01) & 0.93(0.01)&0.81(0.01) \\

        & $\sqrt{n}$Bias & -0.01(0.01) & -0.01(0.01)&  -0.02(0.02) & 0.01(0.00)&0.19(0.03)\\[0.15cm]
        
          $q_n=281$ &Cover  & 0.94(0.01) & 0.94(0.01) & 0.94(0.01) & 0.92(0.01) &0.78(0.01)
    \\

    & $\sqrt{n}$Bias & 0.02(0.02)
     & 0.02(0.02) & -0.02(0.03) & 0.06(0.02) &-0.21(0.04)
    \\[0.15cm]

   $q_n=421$ &Cover &0.93(0.01) & 0.93 (0.01) & 0.92(0.02) & 0.80(0.01) &0.75(0.01)\\

     & $\sqrt{n}$Bias & -0.01(0.00) & -0.01(0.00) & -0.04(0.03) & 0.17(0.03)&0.37(0.05)\\[0.15cm]
     
      $q_n=561$ &Cover & 0.93(0.01) &0.93(0.01) & 0.92(0.01) &0.75(0.02) &0.70(0.01)\\

     & $\sqrt{n}$Bias & 0.03(0.02) & 0.03(0.02)& -0.08(0.06) & 0.18(0.04) &-0.48(0.05)\\[0.15cm]
     
     $q_n=631$ &Cover &0.92(0.01) & 0.92(0.01) &  0.91(0.01) & 0.75(0.02) &0.66(0.01)\\
    
     & $\sqrt{n}$Bias & -0.12(0.05) & -0.30(0.05)& -0.32(0.11) & 0.36(0.06)&-0.65(0.09)\\[0.15cm]
      \cline{2-5}

     \hline\hline
     \end{tabular}
     \end{adjustbox}
      \begin{tablenotes}\footnotesize
  \item Note: ``Cover" is the empirical coverage of the 95\% confidence interval for $\beta_{(2)}$ and `` $\sqrt{n}$Bias " captures the root-$n$ scaled Monte Carlo bias for estimating $\beta_{(2)}$. ``$^*$" indicates that $\hat{\bm{\Omega}}_n^{\texttt{KJ}}$ is not positive semi-definite in some Monte Carlo samples.
     \end{tablenotes}
 \end{table}

  \begin{table}[h!]
\caption{ Simulation results ($d=10, \text{heterogeneity}, \beta_{(5)}$)}\label{table:sim-d10-5th-hetero}

\begin{adjustbox}{width=.8\textwidth,center}
\begin{tabular}{ccccccc}
     \hline\hline
           & &\multicolumn{4}{c}{  $\beta_{j} = \Phi^{-1}\big( \frac{j}{d+1} \big), \quad \bm{\gamma}_n=0$, \quad j = 1, \ldots, d} \\[0.15cm]
           \cline{3-7}
           &&\multicolumn{4}{c}{ $ \x_{i,n}\sim \mathcal{N}(0, \Sigma), \quad \w_{i,n} = \mathds{1}( \tilde{\w}_{i,n}\geq \Phi^{-1}(0.98) ) $}   \\[0.15cm]
            \cline{3-7} 
             & &  Proposed+KJ  & Proposed+HCK  & Proposed+HC3  & Proposed+EW &No adjustment+KJ\\[0.15cm]
       \cline{3-7} 
    $q_n=1$ &Cover  & 0.96(0.01) & 0.95(0.01) & 0.92(0.02) & 0.96(0.01)  &0.94(0.01)
    \\
    
 & $\sqrt{n}$Bias & 0.06(0.07)
     &0.07(0.07) & -0.12(0.09) &0.07(0.07) & 0.06(0.06)
    \\[0.15cm]

   $q_n=141$ &Cover & 0.95(0.01) & 0.95(0.01) & 0.92(0.01) & 0.92(0.01) & 0.94(0.01) \\

     & $\sqrt{n}$Bias & 0.08(0.08) & 0.08(0.08) & -0.12(0.08) & 0.12(0.08) & -0.07(0.07)\\[0.15cm]
     
       $q_n=281$ &Cover  &0.95(0.01) & 0.95(0.01) & 0.91(0.01) & 0.91(0.01)  & 0.93(0.01)
    \\  
    
    & $\sqrt{n}$Bias & 0.08(0.08)
     & 0.08(0.08)& -0.13(0.09) & 0.12(0.08) & 0.10(0.07)
    \\[0.15cm]

   $q_n=421$ &Cover & 0.95(0.01) & 0.95(0.01) & 0.84(0.02) & 0.82(0.02) & 0.93(0.01)\\

     & $\sqrt{n}$Bias & 0.10(0.11) & 0.11(0.11)& -0.13(0.07)  & 0.14(0.11) & -0.12(0.09)\\[0.15cm]
     
      $q_n=561$ &Cover & 0.94(0.01) & 0.94(0.01) & 0.80(0.01) & 0.79(0.02)& 0.92(0.01) \\
    
     & $\sqrt{n}$Bias & 0.12(0.13) & 0.13(0.13) & -0.18(0.06) & 0.20(0.12) & -0.15(0.10)\\[0.15cm]
     
      $q_n=631^*$ &Cover & 0.93(0.01) & 0.92(0.01) & 0.78(0.01) & 0.76(0.01) & 0.90(0.01) \\
    
     & $\sqrt{n}$Bias & 0.16(0.14) & 0.18(0.14) & -0.28(0.26)  & 0.24(0.13) & 0.18(0.12)\\[0.15cm]

      \cline{3-7}

     & & \multicolumn{4}{c}{$\x_{i,n} = \mathds{1}(\tilde{\x}_{i,n} > 0), \quad \w_{i,n} \sim \mathcal{N}(0, I) $}    \\[0.15cm]
      \cline{3-7} 
      & &  Proposed+KJ & Proposed+HCK  & Proposed+HC3 & Proposed+EW & No adjustment+KJ\\[0.15cm]
       \cline{3-7} 
     $q_n=1$ &Cover  & 0.96(0.01) & 0.96(0.01) & 0.95(0.01) & 0.96(0.01) & 0.96(0.01)
    \\

       & $\sqrt{n}$Bias & 0.05(0.11) & 0.06(0.11)& -0.08(0.11)  & 0.08(0.11) & -0.06(0.09)
       \\[0.15cm]
       
    $q_n=141$ &Cover & 0.96(0.01) & 0.96(0.01)& 0.95(0.01) & 0.95(0.01)& 0.95(0.01)  \\

        & $\sqrt{n}$Bias & 0.06(0.11) & 0.08(0.11)& -0.11(0.12)  & 0.10(0.11)& 0.09(0.10) \\[0.15cm]
        
          $q_n=281$ &Cover  & 0.95(0.01) & 0.95(0.01) & 0.94(0.01) & 0.93(0.01) & 0.95(0.01)
    \\

    & $\sqrt{n}$Bias & 0.12(0.13)
     & 0.12(0.13) &  0.10(0.10) & 0.14(0.12) & -0.10(0.11)
    \\[0.15cm]

   $q_n=421$ &Cover & 0.95(0.01) &  0.95(0.01) & 0.91(0.01) & 0.90(0.01) & 0.95(0.01)\\

     & $\sqrt{n}$Bias & 0.11(0.14) & 0.15(0.15) & -0.20(0.08) & 0.25(0.14)& -0.12(0.12)\\[0.15cm]
     
      $q_n=561$ &Cover & 0.94(0.01) & 0.94(0.01) & 0.85(0.01) & 0.83(0.02) & 0.94(0.01)\\

     & $\sqrt{n}$Bias & -0.15(0.18) & -0.16(0.17)& -0.26(0.20)  & 0.28(0.18) & -0.15(0.15)\\[0.15cm]
     
     $q_n=631$ &Cover & 0.93(0.01) & 0.92(0.01) & 0.75(0.01)  & 0.72(0.02) & 0.93(0.01)\\
    
     & $\sqrt{n}$Bias & 0.25(0.20) & 0.30(0.19)&  -0.70(0.34) & 0.68(0.20)& 0.28(0.18)\\[0.15cm]
      \cline{2-5}

     \hline\hline
     \end{tabular}
     \end{adjustbox}
      \begin{tablenotes}\footnotesize
  \item Note: ``Cover" is the empirical coverage of the 95\% confidence interval for $\beta_{(2)}$ and `` $\sqrt{n}$Bias " captures the root-$n$ scaled Monte Carlo bias for estimating $\beta_{(2)}$. ``$^*$" indicates that $\hat{\bm{\Omega}}_n^{\texttt{KJ}}$ is not positive semi-definite in some Monte Carlo samples.
     \end{tablenotes}
 \end{table}

 \begin{table}[h!]
\caption{ Simulation results ($d=10, \text{heterogeneity}, \beta_{(10)}$)}\label{table:sim-d10-10th}

\begin{adjustbox}{width=.8\textwidth,center}
\begin{tabular}{ccccccc}
     \hline\hline
           & &\multicolumn{4}{c}{  $\beta_{j} = \Phi^{-1}\big( \frac{j}{d+1} \big),\quad \gamma_n =0$, \quad j = 1, \ldots, d} \\[0.15cm]
           \cline{3-7}
           &&\multicolumn{4}{c}{ $\quad \x_{i,n}\sim \mathcal{N}(0, \Sigma), \quad \w_{i,n} = \mathds{1}( \tilde{\w}_{i,n}\geq \Phi^{-1}(0.98) )$}   \\[0.15cm]
            \cline{3-7} 
             & &  Proposed+KJ  & Proposed+HCK  & Proposed+HC3 & Proposed+EW &No adjustment+KJ\\[0.15cm]
       \cline{3-7} 
    $q_n=1$ &Cover  & 0.94(0.01) & 0.95(0.01) & 0.94(0.01) & 0.94(0.01)  & 0.94(0.01)
    \\
    
 & $\sqrt{n}$Bias & 0.06(0.07)
     &0.03(0.06) & 0.06(0.06) &0.07(0.07) & -0.04(0.05)
    \\[0.15cm]

   $q_n=141$ &Cover & 0.93(0.02) & 0.93(0.02) & 0.92(0.02) & 0.92(0.02) & 0.94(0.01) \\

     & $\sqrt{n}$Bias & 0.06(0.06) & 0.06(0.06) & 0.08(0.07) & 0.08(0.06) & 0.06(0.06)\\[0.15cm]
     
       $q_n=281$ &Cover  &0.94(0.01) & 0.94(0.01) & 0.90(0.01) & 0.89(0.01)  & 0.94(0.01)
    \\  
    
    & $\sqrt{n}$Bias & 0.07(0.08)
     & 0.08(0.08)& 0.11(0.08)  & 0.10(0.07) & 0.07(0.07)
    \\[0.15cm]

   $q_n=421$ &Cover & 0.94(0.01) & 0.94(0.01) & 0.84(0.01) & 0.82(0.02) & 0.93(0.01)\\

     & $\sqrt{n}$Bias & 0.08(0.09) & 0.09(0.09)& 0.16(0.10) & 0.17(0.09) & -0.11(0.08)\\[0.15cm]
     
      $q_n=561$ &Cover & 0.93(0.01) & 0.91(0.01) & 0.68(0.02) & 0.61(0.02)& 0.91(0.01) \\
    
     & $\sqrt{n}$Bias & 0.16(0.14) & 0.18(0.14) & 0.20(0.10)  & 0.24(0.13) & 0.15(0.12)\\[0.15cm]
     
      $q_n=631^*$ &Cover & 0.92(0.01) & 0.90(0.01) & 0.53(0.02) & 0.50(0.02) & 0.82(0.01) \\
    
     & $\sqrt{n}$Bias & 0.20(0.18) & 0.23(0.18) & -0.50(0.12) & 0.55(0.18) & -0.30(0.17)\\[0.15cm]
      \cline{3-7}

     & & \multicolumn{4}{c}{$\x_{i,n} = \mathds{1}(\tilde{\x}_{i,n} > 0), \quad \w_{i,n} \sim \mathcal{N}(0, I)$}    \\[0.15cm]
      \cline{3-7} 
      & &  Proposed+KJ & Proposed+HCK  &   Proposed+HC3 & Proposed+EW &No adjustment+KJ\\[0.15cm]
       \cline{3-7} 
     $q_n=1$ &Cover  & 0.96(0.01) & 0.96(0.01) &  0.95(0.01) & 0.96(0.01) & 0.96(0.01)
    \\

       & $\sqrt{n}$Bias & 0.08(0.11) & 0.08(0.11)& 0.11(0.12) & 0.09(0.11) & -0.08(0.09)
       \\[0.15cm]
       
    $q_n=141$ &Cover & 0.94(0.01) & 0.94(0.01)& 0.93(0.01) & 0.92(0.01)& 0.94(0.01) \\

        & $\sqrt{n}$Bias & 0.10(0.12) & 0.11(0.12)& 0.14(0.13) & 0.15(0.12)&  0.11(0.11)\\[0.15cm]
        
          $q_n=281$ &Cover  & 0.94(0.01) & 0.94(0.01) & 0.88(0.01) & 0.84(0.01) & 0.94(0.01)
    \\

    & $\sqrt{n}$Bias & 0.11(0.15)
     & 0.12(0.15) & 0.16(0.14) & 0.19(0.15) & 0.12(0.13)
    \\[0.15cm]

   $q_n=421$ &Cover &0.94(0.01) & 0.94(0.01) & 0.85(0.01)  & 0.82(0.01) & 0.94(0.01)\\

     & $\sqrt{n}$Bias & 0.14(0.16) & 0.16(0.16) & 0.20(0.16) & 0.25(0.16)& 0.15(0.15)\\[0.15cm]
     
      $q_n=561$ &Cover & 0.93(0.02) & 0.93(0.01) & 0.65(0.01) & 0.62(0.02) & 0.92(0.01)\\

     & $\sqrt{n}$Bias & 0.19(0.20) & 0.23(0.21)& -0.24(0.15) & 0.30(0.21) &  -0.24(0.20)\\[0.15cm]
     
     $q_n=631$ &Cover & 0.94(0.01) & 0.92(0.01) & 0.57(0.01)  & 0.50(0.02) & 0.91(0.01)\\
    
     & $\sqrt{n}$Bias & 0.26(0.29) & 0.30(0.28)&  -0.85(0.23) & 0.88(0.31)& -0.33(0.29)\\[0.15cm]
      \cline{2-5}

     \hline\hline
     \end{tabular}
     \end{adjustbox}
      \begin{tablenotes}\footnotesize
  \item Note: ``Cover" is the empirical coverage of the 95\% confidence interval for $\beta_{(10)}$ and `` $\sqrt{n}$Bias " captures the root-$n$ scaled Monte Carlo bias for estimating $\beta_{(10)}$. ``$^*$" indicates that $\hat{\bm{\Omega}}_n^{\texttt{KJ}}$ is not positive semi-definite in some Monte Carlo samples.
     \end{tablenotes}
 \end{table}

  \begin{table}[h!]
\caption{ Simulation results ($d=5, \text{ heteroscedasticity}, \text{ heterogeneity},  \beta_{(1)}$)}\label{table:sim-hetero-error-1}

\begin{adjustbox}{width=.8\textwidth,center}
\begin{tabular}{ccccccc}
     \hline\hline
           & &\multicolumn{5}{c}{  $\beta_{j} = \Phi^{-1}\big( \frac{j}{d+1} \big), \quad \x_{i,n}\sim \mathcal{N}(0, \Sigma), \quad \w_{i,n} = \mathds{1}( \tilde{\w}_{i,n}\geq \Phi^{-1}(0.98) )$, \quad j = 1, \ldots, d} \\[0.15cm]
           \cline{3-7}
           &&\multicolumn{5}{c}{ $ \bm{\gamma}_n=0 $}   \\[0.15cm]
            \cline{3-7} 
             & &  Proposed+KJ  & Proposed+HCK  & Proposed+HC3 & Proposed+EW &No adjustment+KJ\\[0.15cm]
       \cline{3-7} 
    $q_n=1$ &Cover  & 0.96(0.01) & 0.96(0.01) & 0.96(0.01) & 0.95(0.01)  & 0.95(0.01)
    \\
    
 & $\sqrt{n}$Bias & 0.06(0.07)
     &0.07(0.07) & 0.07(0.07) &0.07(0.07) & 0.07(0.07)
    \\[0.15cm]

   $q_n=141$ &Cover & 0.95(0.01) & 0.95(0.01) & 0.94(0.01) & 0.88(0.02) & 0.94(0.01) \\

     & $\sqrt{n}$Bias & -0.12(0.12) & -0.12(0.12) & -0.12(0.12) & -0.16(0.07) & -0.12(0.12)\\[0.15cm]
     
       $q_n=281$ &Cover  &0.95(0.01) & 0.93(0.02) & 0.92(0.01) & 0.79(0.01)  & 0.94(0.01)
    \\  
    
    & $\sqrt{n}$Bias & -0.11(0.13)
     & -0.13(0.13)& -0.33(0.13)  & -0.35(0.12) & -0.12(0.12)
    \\[0.15cm]

   $q_n=421$ &Cover & 0.94(0.01) & 0.93(0.01) & 0.89(0.01) & 0.73(0.02) & 0.93(0.01)\\

     & $\sqrt{n}$Bias & -0.14(0.15) & -0.18(0.15)& -0.44(0.15) & -0.52(0.15) & -0.19(0.15)\\[0.15cm]
     
      $q_n=561$ &Cover & 0.93(0.01) & 0.91(0.01) & 0.82(0.02) & 0.66(0.02)& 0.91(0.01) \\
    
     & $\sqrt{n}$Bias & -0.22(0.19) & -0.32(0.19) & -0.67(0.18)  & -0.71(0.19) & -0.33(0.18)\\[0.15cm]
     
      $q_n=631^*$ &Cover & 0.92(0.01) & 0.90(0.01) & 0.76(0.02) & 0.56(0.02) & 0.90(0.01) \\
    
     & $\sqrt{n}$Bias & -0.34(0.24) & -0.50(0.26) & -0.73(0.30) & -0.80(0.24) & -0.46(0.24)\\[0.15cm]
      \cline{3-7}

     & & \multicolumn{5}{c}{$\gamma_k = 1/k, \quad k= 1, \ldots, q_n$}    \\[0.15cm]
      \cline{3-7} 
      & &  Proposed+KJ & Proposed+HCK  &   Proposed+HC3 & Proposed+EW &No adjustment+KJ\\[0.15cm]
       \cline{3-7} 
     $q_n=1$ &Cover  & 0.95(0.01) & 0.95(0.01) &  0.95(0.01) & 0.94(0.01) & 0.95(0.01)
    \\

       & $\sqrt{n}$Bias & 0.08(0.09) & 0.08(0.09)& 0.09(0.09) & 0.09(0.09) & 0.09(0.09)
       \\[0.15cm]
       
    $q_n=141$ &Cover & 0.95(0.01) & 0.94(0.01)& 0.92(0.01) & 0.87(0.01)& 0.94(0.01) \\

        & $\sqrt{n}$Bias & 0.12(0.16) & 0.14(0.16)& 0.17(0.14) & 0.36(0.14)&  0.14(0.14)\\[0.15cm]
        
          $q_n=281$ &Cover  & 0.94(0.01) & 0.94(0.01) & 0.88(0.01) & 0.82(0.01) & 0.93(0.01)
    \\

    & $\sqrt{n}$Bias & -0.13(0.14)
     & 0.14(0.14) & -0.33(0.11) & -0.45(0.13) & 0.15(0.12)
    \\[0.15cm]

   $q_n=421$ &Cover &0.93(0.01) & 0.92(0.01) & 0.80(0.01)  & 0.73(0.01) & 0.92(0.01)\\

     & $\sqrt{n}$Bias & -0.21(0.18) & -0.24(0.18) & -0.40(0.11) & -0.56(0.11)& -0.24(0.13)\\[0.15cm]
     
      $q_n=561^*$ &Cover & 0.92(0.01) & 0.91(0.01) & 0.67(0.01) & 0.53(0.02) & 0.91(0.01)\\

     & $\sqrt{n}$Bias & -0.28(0.22) & -0.35(0.21)& -0.47(0.17) & -0.51(0.19) &  -0.37(0.20)\\[0.15cm]
     
     $q_n=631^*$ &Cover & 0.91(0.01) & 0.89(0.01) & 0.59(0.01)  & 0.51(0.02) & 0.88(0.01)\\
    
     & $\sqrt{n}$Bias & -0.29(0.25) & -0.33(0.26)&  -0.55(0.21) & -0.61(0.18)& -0.42 (0.22)\\[0.15cm]
      \cline{2-5}

     \hline\hline
     \end{tabular}
     \end{adjustbox}
      \begin{tablenotes}\footnotesize
  \item Note: ``Cover" is the empirical coverage of the 95\% confidence interval for $\beta_{(1)}$ and `` $\sqrt{n}$Bias " captures the root-$n$ scaled Monte Carlo bias for estimating $\beta_{(1)}$. In this heteroscedastic design, $\tilde{\w}_{i,n} \sim \mathcal{N}(0, \bm{I}_{q_n})$, $\varepsilon_{i,n}\sim \mathcal{N}(0,1)$,  $\mathbb{V}[\varepsilon_{i,n}\lvert \x_{i,n},\w_{i,n}]=c_{\varepsilon}(1+(t(x_{1,i,n})+\bm{l}'\w_{i,n})^2/4)$, and $\mathbb{V}[x_{k,i,n}\lvert\w_{i,n}]=c_{x_k}(1+(\bm{l}'\w_{i,n})^2/4)$, where $x_{k,i,n}$ denotes the $k$th component of the vector $\x_{i,n}$. The constants $c_{\varepsilon}$ and $c_{x_k}$ are chosen so that $\mathbb{V}[\varepsilon_{i,n}]=\mathbb{V}[x_{k,i,n}]=1$ and $t(a)=a\mathds{1}(-1\le a\le 1)+sgn(a)(1-\mathds{1}(-1\le a\le 1))$. $\bm{l}$ is the conformable vector of ones. ``$^*$" indicates that $\hat{\bm{\Omega}}_n^{\texttt{KJ}}$ is not positive semi-definite in some Monte Carlo samples.
     \end{tablenotes}
 \end{table}
 
\subsection{Simulation results: realistic error terms}

In this section, we consider two DGPs of generating more practical errors beyond simple i.i.d. Gaussian noises. For the first DGP, we generate covariates from $\x_{i,n}\sim \mathcal{N}(0, \Sigma) $ and $\w_{i,n} \sim \mathcal{N}(0, \bm{I}_{q_n})$, and then generate random noise from (1) an asymmetric distribution with the density function $ 0.5\phi(\varepsilon\lvert -0.5,0.25)+0.5\phi(\varepsilon\lvert 0.5,1)$; (2) a bimodal distribution  with the density function  $0.5\phi(\varepsilon\lvert -1.5,0.25)+0.5\phi(\varepsilon\lvert 1.5,1)$,
where $\phi(\varepsilon\lvert \mu,\sigma^2)$ denotes the density function of a normal random variable with mean $\mu$ and variance $\sigma^2$. The simulation results are summarized in Supplementary Materials Table \ref{table:newsim-d5-1st-hetero-case34}. We further study this setting with a larger sample size, $n=2,000$. This sample size is closer to the sample size adopted in our case study I. The simulation results under $n=2,000$ are summarized in Supplementary Materials Table \ref{table:sim-hetero-error-n2000}. We also consider the design with both $\beta\neq 0$ and $\gamma \neq 0$. The simulation results are summarized in  Supplementary Materials Table \ref{table:sim-hetero-error-1}--\ref{table:sim-hetero-error-n2000}.

For the second DGP, we consider heteroscedastic errors following the setup in \cite{cattaneo2018inference} with: $\x_{i,n}\sim \mathcal{N}(0, \Sigma) $, $\w_{i,n} = \mathds{1}( \tilde{\w}_{i,n}\geq \Phi^{-1}(0.98) )$ with $\tilde{\w}_{i,n} \sim \mathcal{N}(0, \bm{I}_{q_n})$, and $\varepsilon_{i,n}\sim \mathcal{N}(0,1)$ with $\mathbb{V}[\varepsilon_{i,n}\lvert \x_{i,n},\w_{i,n}]=c_{\varepsilon}(1+(t(x_{1,i,n})+\bm{l}'\w_{i,n})^2/4)$ and $\mathbb{V}[x_{k,i,n}\lvert\w_{i,n}]=c_{x_k}(1+(\bm{l}'\w_{i,n})^2/4)$, where $x_{k,i,n}$ denotes the $k$th component of the vector $\x_{i,n}$. The constants $c_{\varepsilon}$ and $c_{x_k}$ are chosen so that $\mathbb{V}[\varepsilon_{i,n}]=\mathbb{V}[x_{k,i,n}]=1$ and $t(a)=a\mathds{1}(-1\le a\le 1)+\text{sgn}(a)(1-\mathds{1}(-1\le a\le 1))$. $\bm{l}$ is the conformable vector of ones. The simulation results are summarized in Supplementary Materials Table \ref{table:sim-hetero-error-1}.

Table \ref{table:sim-hetero-error-1} shows that, under the second DGP, our proposed method has slightly compromised performance, but still reaches nominal level coverage when $q_n\leq 421$. Table \ref{table:newsim-d5-1st-hetero-case34} and \ref{table:newsim-d5-1st} demonstrate that the performance of our method is robust even when both $\beta\neq 0$ and $\gamma \neq 0$, and $q_n\leq 561$. Table \ref{table:sim-hetero-error-n2000} suggests that when sample size increases, our proposed method has smaller bias and improved coverage probabilities when $q_n=631$. The other considered methods show similar trends to the settings under homoscedastic errors.

 
 \begin{table}[h!]
\caption{ Simulation results ($d=5, \text{heteroscedasticity}, \text{heterogeneity}, \beta_{(1)}$)}\label{table:newsim-d5-1st-hetero-case34}

\begin{adjustbox}{width=.9\textwidth,center}
\begin{tabular}{ccccccc}
     \hline\hline
           & &\multicolumn{5}{c}{  $\beta_{j} = \Phi^{-1}\big( \frac{j}{d+1} \big), \quad j = 1,\ldots, d,\quad \bm{\gamma}_k=1/k, \quad k = 1, \ldots, q_n$} \\[0.15cm]
           \cline{3-7}
           &&\multicolumn{5}{c}{ $ \x_{i,n}\sim N(0, \Sigma), \quad \w_{i,n} \sim N(0, I_{q_n}),\quad      \varepsilon_i \sim f(\varepsilon)=0.5\phi(\varepsilon|-0.5, 0.25)+0.5\phi(\varepsilon|0.5, 1)  $}   \\[0.15cm]
            \cline{3-7} 
             & &  Proposed+KJ & Proposed+HCK &  Proposed + HC3 & Proposed+EW &No adjustment+KJ\\[0.15cm]
       \cline{3-7} 
    $q_n=1$ &Cover  & 0.94(0.01) & 0.94(0.01) &0.94(0.01) & 0.94(0.01)  & 0.94(0.01)
    \\  
    
 & $\sqrt{n}$Bias & -0.02(0.05)
     &-0.02(0.05) & -0.02(0.05) &-0.02(0.05) & 0.05(0.05)
    \\[0.15cm]

   $q_n=141$ &Cover & 0.94(0.01) & 0.94(0.01) & 0.94(0.01) & 0.92(0.01) & 0.94(0.01) \\

     & $\sqrt{n}$Bias & -0.05(0.06) & -0.05(0.06) & -0.05(0.06) & -0.07(0.06) & 0.06(0.06)\\[0.15cm]
     
       $q_n=281$ &Cover  &0.94(0.01) & 0.94(0.01) & 0.93(0.01) & 0.89(0.02)  & 0.94(0.01)
    \\  
    
    & $\sqrt{n}$Bias & 0.06(0.06)
     & 0.06(0.06)& -0.10(0.07) & 0.14(0.06) & 0.07(0.07)
    \\[0.15cm]

   $q_n=421$ &Cover & 0.94(0.01) & 0.93(0.01) & 0.92(0.01) & 0.73(0.02) &0.92(0.01)\\

     & $\sqrt{n}$Bias & -0.09(0.09) & -0.12(0.09)& -0.14(0.09)&0.17(0.09) & -0.10(0.08)\\[0.15cm]
     
      $q_n=561^*$ &Cover & 0.93(0.02) & 0.92(0.01) & 0.92(0.01) &0.58(0.02)&0.92(0.01) \\
    
     & $\sqrt{n}$Bias & -0.12(0.13) & -0.15(0.13) & -0.15(0.12) & -0.20(0.13) &0.14(0.12)\\[0.15cm]
     
      $q_n=631^*$ &Cover & 0.94(0.01) & 0.91(0.01) & 0.90(0.01)  & 0.44(0.02) & 0.91(0.01)\\
    
     & $\sqrt{n}$Bias & 0.17(0.17) & 0.18(0.16) & 0.17(0.13) & 0.25(0.17) & 0.19(0.17)\\[0.15cm]
      \cline{3-7}

     & & \multicolumn{5}{c}{$\x_{i,n}\sim N(0, \Sigma), \quad \w_{i,n} \sim N(0, I_{q_n}),\quad      \varepsilon_i \sim f(\varepsilon)=0.5\phi(\varepsilon|-1.5, 0.25)+0.5\phi(\varepsilon|1.5, 1) $}    \\[0.15cm]
      \cline{3-7} 
      & &  Proposed+KJ & Proposed+HCK  & Proposed + HC3 &  Proposed+EW & No adjustment+KJ\\[0.15cm]
       \cline{3-7} 
       
     $q_n=1$ &Cover  & 0.96(0.01) & 0.95(0.01) & 0.95(0.01) & 0.95(0.01) &0.95(0.01)
    \\

       & $\sqrt{n}$Bias & -0.07(0.10) & -0.08(0.10)& -0.09(0.10) & -0.08(0.10) & -0.08(0.09)
       \\[0.15cm]
       
    $q_n=141$ &Cover & 0.95(0.01) & 0.95(0.01)& 0.95(0.01) & 0.93(0.01)&0.94(0.01) \\

        & $\sqrt{n}$Bias & -0.10(0.12) & -0.10(0.12)&  -0.12(0.12) & -0.13(0.12)&0.12(0.10)\\[0.15cm]
        
          $q_n=281$ &Cover  & 0.95(0.01) & 0.95(0.01) & 0.95(0.01) & 0.92(0.01) &0.94(0.01)
    \\

    & $\sqrt{n}$Bias & -0.11(0.12)
     & -0.11(0.12) & -0.12(0.12) & -0.14(0.12) &-0.13(0.13)
    \\[0.15cm]

   $q_n=421$ &Cover &0.94(0.01) & 0.94(0.01) & 0.92(0.02) &0.78(0.02) &0.93(0.01)\\

     & $\sqrt{n}$Bias & -0.14(0.15) & -0.14(0.15) & -0.16(0.15) & -0.20(0.15)&-0.17(0.16) \\[0.15cm]
     
      $q_n=561^*$ &Cover & 0.94(0.01) &0.92(0.01) & 0.91(0.01) &0.64(0.02) & 0.92(0.01)\\

     & $\sqrt{n}$Bias & -0.20(0.21) & -0.23(0.21)& -0.26(0.24) & 0.42(0.22) &-0.24(0.19)\\[0.15cm]
     
     $q_n=631^*$ &Cover &0.93(0.01) & 0.92(0.01) &  0.90(0.01) & 0.47(0.02) &0.87(0.01)\\
    
     & $\sqrt{n}$Bias & -0.29(0.28) & -0.31(0.28)& -0.33(0.17) & 0.85(0.29)& 0.41(0.29)\\[0.15cm]
      \cline{2-5}

     \hline\hline
     \end{tabular}
     \end{adjustbox}
      \begin{tablenotes}\footnotesize
  \item Note: ``Cover" is the empirical coverage of the 95\% confidence interval for $\beta_{(1)}$ and `` $\sqrt{n}$Bias " captures the root-$n$ scaled Monte Carlo bias for estimating $\beta_{(1)}$. ``$^*$" indicates that $\hat{\bm{\Omega}}_n^{\texttt{KJ}}$ is not positive semi-definite in some Monte Carlo samples.
     \end{tablenotes}
 \end{table}

 \begin{table}[h!]
\caption{ Simulation results ($d=5, \text{heterogeneity}, \beta_{(1)}$)}\label{table:newsim-d5-1st}

\begin{adjustbox}{width=.8\textwidth,center}
\begin{tabular}{ccccccc}
     \hline\hline
           & &\multicolumn{5}{c}{  $\beta_{j} = \Phi^{-1}\big( \frac{j}{d+1} \big),  \quad j = 1, \ldots, d, \quad \bm{\gamma}_k=1/k, \quad k = 1, \ldots, q_n$} \\[0.15cm]
           \cline{3-7}
           &&\multicolumn{5}{c}{ $ \x_{i,n}\sim \mathcal{N}(0, \Sigma), \quad \w_{i,n} = \mathds{1}( \tilde{\w}_{i,n}\geq \Phi^{-1}(0.98) ) $}   \\[0.15cm]
            \cline{3-7} 
             & &  Proposed+KJ  & Proposed+HCK &  Proposed + HC3 &  Proposed+EW &No adjustment+KJ\\[0.15cm]
       \cline{3-7} 
   $q_n=1$ &Cover  & 0.95(0.01) & 0.96(0.01) &0.98(0.01) & 0.96(0.01)  & 0.96(0.01)
    \\
    
 & $\sqrt{n}$Bias & 0.05(0.05)
     &0.05(0.05) & 0.04(0.05) &0.05(0.05) & 0.05(0.06)
    \\[0.15cm]

   $q_n=141$ &Cover & 0.94(0.01) & 0.93(0.01) & 0.95(0.01) & 0.91(0.01) & 0.94(0.01) \\

     & $\sqrt{n}$Bias & -0.06(0.06) & -0.10(0.06) & -0.06(0.06) & -0.12(0.06) & 0.06(0.06)\\[0.15cm]
     
       $q_n=281$ &Cover  &0.94(0.01) & 0.93(0.01) & 0.94(0.01) & 0.88(0.02)  & 0.94(0.01)
    \\  
    
    & $\sqrt{n}$Bias & 0.07(0.07)
     & 0.10(0.07)& 0.07(0.07) & 0.13(0.07) & 0.07(0.08)
    \\[0.15cm]

   $q_n=421$ &Cover & 0.94(0.01) & 0.93(0.01) & 0.94(0.01) & 0.78(0.02) &0.91(0.01)\\

     & $\sqrt{n}$Bias & 0.09(0.09) & 0.12(0.09)& 0.09(0.09)&0.14(0.09) & 0.10(0.09)\\[0.15cm]
     
      $q_n=561^*$ &Cover & 0.93(0.02) & 0.89(0.01) & 0.92(0.01) &0.59(0.02)&0.90(0.01) \\
    
     & $\sqrt{n}$Bias & -0.10(0.13) & -0.15(0.13) & 0.13(0.12) & -0.19(0.13) &0.14(0.12)\\[0.15cm]
     
      $q_n=631^*$ &Cover & 0.93(0.01) & 0.92(0.01) & 0.92(0.01)  & 0.43(0.02) &0.82(0.01)\\
    
     & $\sqrt{n}$Bias & 0.19(0.17) & 0.17(0.16) & 0.16(0.13) & -0.30(0.18) & 0.21(0.16)\\[0.15cm]
      \cline{3-7}

     & & \multicolumn{5}{c}{$\x_{i,n} = \mathds{1}(\tilde{\x}_{i,n} > 0), \quad \w_{i,n} \sim \mathcal{N}(0, I) $}    \\[0.15cm]
      \cline{3-7} 
      & &  Proposed+KJ & Proposed+HCK  & Proposed+HC3 & Proposed+EW &No adjustment+KJ\\[0.15cm]
       \cline{3-7} 
    $q_n=1$ &Cover  & 0.96(0.01) & 0.95(0.01) & 0.95(0.01) & 0.96(0.01) &0.95(0.01)
    \\

       & $\sqrt{n}$Bias & 0.08(0.09) & 0.09(0.09)& 0.09(0.09) & 0.08(0.09) &-0.09(0.09)
       \\[0.15cm]
       
    $q_n=141$ &Cover & 0.95(0.01) & 0.95(0.01)& 0.95(0.01) & 0.94(0.01)&0.95(0.01) \\

        & $\sqrt{n}$Bias & -0.10(0.11) & -0.10(0.11)&  -0.11(0.11) & -0.11(0.11)&0.11(0.11)\\[0.15cm]
        
          $q_n=281$ &Cover  & 0.95(0.01) & 0.95(0.01) & 0.95(0.01) & 0.92(0.01) &0.95(0.01)
    \\

    & $\sqrt{n}$Bias & 0.11(0.12)
     & 0.11(0.12) & 0.12(0.11) & 0.14(0.12) &0.12(0.12)
    \\[0.15cm]

   $q_n=421$ &Cover &0.94(0.01) & 0.94(0.01) & 0.93(0.02) &0.73(0.02) &0.94(0.01)\\

     & $\sqrt{n}$Bias & -0.14(0.15) & -0.14(0.15) & -0.14(0.14) & -0.22(0.15)&-0.15(0.15) \\[0.15cm]
     
      $q_n=561$ &Cover & 0.94(0.01) &0.93(0.01) & 0.92(0.01) &0.61(0.02) & 0.93(0.01)\\

     & $\sqrt{n}$Bias & 0.19(0.19) & 0.21(0.19)& 0.20(0.16) & 0.44(0.20) &-0.20(0.19)\\[0.15cm]
     
     $q_n=631^*$ &Cover &0.94(0.01) & 0.93(0.01) &  0.90(0.01) & 0.50(0.02) &0.91(0.01)\\
    
     & $\sqrt{n}$Bias & 0.24(0.26) & 0.30(0.28)& 0.35(0.21) & 0.72(0.28)&0.32(0.28)\\[0.15cm]
      \cline{2-5}

     \hline\hline
     \end{tabular}
     \end{adjustbox}
      \begin{tablenotes}\footnotesize
  \item Note: ``Cover" is the empirical coverage of the 95\% confidence interval for $\beta_{(1)}$ and `` $\sqrt{n}$Bias " captures the root-$n$ scaled Monte Carlo bias for estimating $\beta_{(1)}$. ``$^*$" indicates that $\hat{\bm{\Omega}}_n^{\texttt{KJ}}$ is not positive semi-definite in some Monte Carlo samples.
     \end{tablenotes}
 \end{table}

   \begin{table}[h!]
\caption{ Simulation results ($d=5, \text{heteroscedasticity}, \text{heterogeneity},  \beta_{(1)}, n = 2000$)}\label{table:sim-hetero-error-n2000}

\begin{adjustbox}{width=.8\textwidth,center}
\begin{tabular}{ccccccc}
     \hline\hline
           & &\multicolumn{5}{c}{  $\beta_{j} = \Phi^{-1}\big( \frac{j}{d+1} \big), \quad \x_{i,n}\sim \mathcal{N}(0, \Sigma), \quad \w_{i,n} = \mathds{1}( \tilde{\w}_{i,n}\geq \Phi^{-1}(0.98) )$, \quad j = 1,\ldots, d} \\[0.15cm]
           \cline{3-7}
           &&\multicolumn{5}{c}{ $ \bm{\gamma}_n=0 $}   \\[0.15cm]
            \cline{3-7} 
             & &  Proposed+KJ  & Proposed+HCK  & Proposed+HC3 & Proposed+EW &No adjustment+KJ\\[0.15cm]
       \cline{3-7} 
    $q_n=1$ &Cover  & 0.95(0.01) & 0.95(0.01) & 0.95(0.01) & 0.94(0.01)  & 0.95(0.01)
    \\
    
 & $\sqrt{n}$Bias & 0.02(0.03)
     & 0.03(0.03) & 0.03(0.03) & 0.03(0.03) & 0.03(0.03)
    \\[0.15cm]

   $q_n=141$ &Cover & 0.95(0.01) & 0.95(0.01) & 0.94(0.01) & 0.91(0.02) & 0.94(0.01) \\

     & $\sqrt{n}$Bias & -0.04(0.04) & -0.04(0.04) & -0.04(0.04) & -0.06(0.04) & -0.04(0.04)\\[0.15cm]
     
       $q_n=281$ &Cover  &0.94(0.01) & 0.94(0.01) & 0.93(0.01) & 0.88(0.01)  & 0.93(0.01)
    \\  
    
    & $\sqrt{n}$Bias & 0.04(0.05)
     &0.05(0.05)& 0.07(0.05)  &-0.12(0.05) & 0.07(0.05)
    \\[0.15cm]

   $q_n=421$ &Cover & 0.94(0.01) & 0.94(0.01) & 0.91(0.01) & 0.85(0.01) & 0.93(0.01)\\

     & $\sqrt{n}$Bias & -0.05(0.05) & -0.05(0.05)& -0.09(0.05) & -0.15(0.05) & -0.07(0.05)\\[0.15cm]
     
      $q_n=561$ &Cover & 0.94(0.01) & 0.93(0.01) & 0.91(0.01) & 0.82(0.01)& 0.91(0.01) \\
    
     & $\sqrt{n}$Bias & -0.05(0.05) & -0.06(0.05) & -0.10(0.05)  & -0.18(0.05) & -0.11(0.05)\\[0.15cm]
     
      $q_n=631^*$ &Cover & 0.93(0.01) & 0.91(0.01) & 0.90(0.01) & 0.78(0.01) & 0.90(0.01) \\
    
     & $\sqrt{n}$Bias & -0.07(0.05) & -0.09(0.05) & -0.12(0.05) & -0.22(0.05) & -0.14(0.05)\\[0.15cm]
      \cline{3-7}

     & & \multicolumn{5}{c}{$\gamma_k = 1/k,\quad k = 1,\ldots, q_n$}    \\[0.15cm]
      \cline{3-7} 
      & &  Proposed+KJ & Proposed+HCK  &   Proposed+HC3 & Proposed+EW &No adjustment+KJ\\[0.15cm]
       \cline{3-7} 
     $q_n=1$ &Cover  & 0.96(0.01) & 0.96(0.01) &  0.95(0.01) & 0.94(0.01) & 0.95(0.01)
    \\

       & $\sqrt{n}$Bias & 0.04(0.05) & 0.05(0.05)& 0.05(0.05) & 0.05(0.05) & 0.05(0.05)
       \\[0.15cm]
       
    $q_n=141$ &Cover & 0.95(0.01) & 0.95(0.01)& 0.94(0.01) & 0.93(0.01)& 0.94(0.01) \\

        & $\sqrt{n}$Bias & 0.07(0.08) & 0.08(0.08)& 0.08(0.08) & 0.11(0.08)&  0.08(0.08)\\[0.15cm]
        
          $q_n=281$ &Cover  & 0.94(0.01) & 0.94(0.01) & 0.94(0.01) & 0.92(0.01) & 0.94(0.01)
    \\

    & $\sqrt{n}$Bias & 0.08(0.08)
     & 0.10(0.08) & 0.13(0.08) & 0.14(0.08) & 0.08(0.08)
    \\[0.15cm]

   $q_n=421$ &Cover & 0.94(0.01) & 0.93(0.01) & 0.92(0.01)  & 0.88(0.01) & 0.92(0.01)\\

     & $\sqrt{n}$Bias & 0.08(0.08) & 0.12(0.08) & 0.14(0.08) & 0.18(0.08)& 0.15(0.08)\\[0.15cm]
     
      $q_n=561^*$ &Cover & 0.94(0.01) & 0.91(0.01) & 0.90(0.01) & 0.83(0.02) & 0.90(0.01)\\

     & $\sqrt{n}$Bias & 0.09(0.08) & 0.15(0.08)& 0.17(0.07) & 0.23(0.08) &  0.18(0.08)\\[0.15cm]
     
     $q_n=631^*$ &Cover & 0.92(0.01) & 0.90(0.01) & 0.86(0.01)  & 0.76(0.01) & 0.85(0.01)\\
    
     & $\sqrt{n}$Bias & 0.12(0.08) & 0.17(0.08)&  0.20(0.05) & 0.28(0.08)&  0.23(0.08)\\[0.15cm]
      \cline{2-5}

     \hline\hline
     \end{tabular}
     \end{adjustbox}
      \begin{tablenotes}\footnotesize
  \item Note: ``Cover" is the empirical coverage of the 95\% confidence interval for $\beta_{(1)}$ and `` $\sqrt{n}$Bias " captures the root-$n$ scaled Monte Carlo bias for estimating $\beta_{(1)}$. In this heteroscedastic design, $\tilde{\w}_{i,n} \sim \mathcal{N}(0, \bm{I}_{q_n})$, $\varepsilon_{i,n}\sim \mathcal{N}(0,1)$,  $\mathbb{V}[\varepsilon_{i,n}\lvert \x_{i,n},\w_{i,n}]=c_{\varepsilon}(1+(t(x_{1,i,n})+\bm{l}'\w_{i,n})^2/4)$, and $\mathbb{V}[x_{k,i,n}\lvert\w_{i,n}]=c_{x_k}(1+(\bm{l}'\w_{i,n})^2/4)$, where $x_{k,i,n}$ denotes the $k$th component of the vector $\x_{i,n}$. The constants $c_{\varepsilon}$ and $c_{x_k}$ are chosen so that $\mathbb{V}[\varepsilon_{i,n}]=\mathbb{V}[x_{k,i,n}]=1$ and $t(a)=a\mathds{1}(-1\le a\le 1)+sgn(a)(1-\mathds{1}(-1\le a\le 1))$. $\bm{l}$ is the conformable vector of ones. ``$^*$" indicates that $\hat{\bm{\Omega}}_n^{\texttt{KJ}}$ is not positive semi-definite in some Monte Carlo samples.
     \end{tablenotes}
 \end{table}

 \subsection{Additional analysis for case study I}
 In this section, we revisit case study I with a much smaller model that only includes the main effects. The results are summarized in Table \ref{table:ask-amount-main}. Table \ref{table:ask-amount-main} shows that, overall, the results under a smaller model do not change substantively. But ``asking 25\% more" no longer has a significant impact on donation amount even without calibration. 
 
 \begin{table}[h!]
    \centering
 \begin{tabular}{ccccc}
    \hline
     \hline\\[-2ex] 
      Method & Policies(Ask amount)  & Est (95\% CI)  & $p$-value  &   \\
  \\[-2ex] 
    \hline
      \\[-2ex] 
 Uncalibrated & Same &  0.70(0.10,  1.29) & 0.023*  &  \\
    \\[-2ex] 
  & 25\% more &  0.67(-0.04, 1.37)  &  0.065 &\\
  \\[-2ex] 
  & 50\% more &  0.38(-0.21, 0.96)  & 0.205  &  \\
  \\[-2ex]
  \hline
  \\[-2ex]
Calibrated  &  Same &  0.66(0.07, 1.24) & 0.026*   &  \\
  \\[-2ex] 
     \hline\hline
    \end{tabular}
    \caption{Uncalibrated and calibrated results under a smaller model with main effects only ($n=7,938, p = 53$). Estimated treatment effects (Est), 95\% confidence intervals (95\% CI), and two-sided $p$-values for the three  ``ask amount" policies. ``Uncalibrated" refers to the study results obtained without any adjustment, and the confidence intervals are constructed based on normal approximation with the estimated covariance matrix $ \hat{\bm{\Omega}}^{\texttt{KJ}}_n$.   ``Calibrated" refers to our proposed methodology. The computational time is $533$ seconds on a Lenovo NeXtScale nx360m5 node (24 cores per node) equipped with Intel Xeon Haswell processor. \label{table:ask-amount-main}}
\end{table}

\section{Extension to regression models with fixed effects}
 
As stated in the main manuscript, our approach extends to linear panel data models with fixed effects. We shall briefly discuss this connection below. Because it is a common practice to include the subscript $t$ to denote time in panel data analyses, to avoid using triple subscript, we drop subscript $n$ in all considered random variables in the discussion below. 

Suppose we have access to one panel data with cross-sectional observations denoted by $i \in\mathcal{N}=\{ 1, \ldots, N\}$ and time periods $t\in\mathcal{T}=\{1, \ldots, T\}$. 
Consider the following fixed effects panel data model
\begin{align*}
    y_{it} = \bbeta'\x_{it} + c_i + e_{d_{it}} + u_{it},\quad i=1, \ldots,N, \quad t = 1, \ldots, T, 
\end{align*}
where $c_i$ is an unobserved effect that varies across sections but is assumed to be constant over time, $y_{it}\in\mathbb{R}$ is the observed outcome, $\x_{it}\in\mathbb{R}^{d\times 1}$ contains the policy variables of interest, and error terms $u_{it}$'s are uncorrelated conditional on $\x_{it}$ and $d_{it}$.  $e_{d_{it}}$ is an unobserved effect indexed by an observed indexing variable $d_{it}\in \{ 1, \ldots, G\}$, and is assumed to be constant across all observations that share the same value of $d_{it}$. When $e_{d_{it}}=0$, this model reduces to the one-way fixed effects model studied in \cite{stock2008heteroskedasticity}, otherwise the above model coincides with the one studied in \cite{verdier2020estimation}.

To concretely introduce the connection of the above model and our model setup, consider the case when  $e_{d_{it}}\neq 0 $, we stack the data over cross-sectional observations and time periods. Define 
\begin{align*}
    \mathbf{y} = & ( y_{11}, \ldots, y_{1T}, y_{21}, \ldots, y_{2T}, \ldots, y_{N1}, \ldots, y_{NT} )'\in\mathbb{R}^{NT\times 1}, \\
    \x = & ( \x_{11}, \ldots, \x_{1T}, \x_{21}, \ldots, \x_{2T}, \ldots, \x_{N1}, \ldots, \x_{NT} )'\in\mathbb{R}^{NT\times d},\\
    \w = & \Big( \mathbf{g}_{1}, \mathbf{g}_{2}\Big)\in\mathbb{R}^{NT \times (N +G)} ,   \\
    \mathbf{g}_{1} =  & \big(\mathbf{1}_{(i=j)} \big)^{j\in\mathcal{N}}_{(i,t)\in\mathcal{N}\times \mathcal{T} }, \quad \mathbf{g}_{2} =  \big(\mathbf{1}_{(d_{it}=d)} \big)^{d\in\{1,\ldots, G\} }_{(i,t)\in\mathcal{N}\times \mathcal{T} } ,\\
    \bgamma_n =& ( c_1, \ldots, c_N, e_1, \ldots, e_{G} )'\in\mathbb{R}^{(N+G) \times 1}, \\
    \mathbf{u}=& ( u_{11}, \ldots, u_{1T}, u_{21}, \ldots, u_{2T}, \ldots, u_{N1}, \ldots, u_{NT} )'\in\mathbb{R}^{NT\times 1} . 
\end{align*}
With the above notations, the fixed effects panel data model can be written as the following  
\begin{align}
\mathbf{y} = \x\bbeta +\w\bgamma_n + \mathbf{u}.
\end{align}
This indicates that our approach also goes through in linear panel data models, as long as we can construct an estimator of $\bbeta$ that converges to a Gaussian distribution with its covariance matrix being consistently estimated.

\cite{jochmans2020heteroscedasticity} has shown that the covariance matrix estimator $\hat{\bm{\Omega}}^{\texttt{KJ}}_n$ remains consistent in one-way fixed effect panel data regression models when $e_{d_{it}}=0$. This suggests that our approach can be naturally extended to make inference on multiple best policies in one-way fixed effect models. In addition, \cite{cattaneo2018inference} have shown that the covariance matrix estimator $ \hat{\bm{\Omega}}^{\texttt{HCK}}_n$ is consistent in both one-way and two-way fixed effect panel data regression models. Since our resampling based approach only requires a consistent covariance matrix estimator to calibrate multiple best policy effects, this suggests that in two-way fixed effect models, what our approach can be adopted when using $\hat{\bm{\Omega}}^{\texttt{HCK}}_n$ to estimate the covariance matrix of $\hat{\bbeta}$. Lastly, we note that our Assumption 1 in the main manuscript requires error terms to be conditionally uncorrelated within each observation $i$. This condition does rule out dynamic models as those discussed in \cite{verdier2020estimation}.

\end{document}